\long\def\shortlong#1#2{#2}
\tikzstyle{every picture} = [>=latex]
\def\centerarc[#1] (#2) (#3:#4:#5)
\def\ca#1{{\cal#1}}
\def\scri#1{{\EuScript#1}}
\newcommand\sem{\setminus}
\newcommand\FO{FO\xspace}
\newcommand\EFO{$\exists$FO\xspace}
\newtheorem{thm} {Theorem}[section]
\newtheorem{lem} [thm]{Lemma}
\newtheorem{cor}[thm] {Corollary}
\newtheorem{prop}[thm] {Proposition}
\newtheorem{claim} [thm]{Claim}
\title{FO model checking of geometric graphs%
\footnote{Short version appeared at IPEC 2017.}}
\author[1]{Petr Hlin\v en\'y%
\thanks{P.~Hlin\v en\'y and F.Pokr\'yvka are supported by the 
Czech Science Foundation project No.~17-00837S.}%
}
\author[1]{Filip Pokr\'yvka$^*$}
\author[1]{Bodhayan Roy}
\affil[1]{Faculty of Informatics, Masaryk University Brno, Czech Republic\\
	\texttt{\{hlineny,\,xpokryvk,\,b.roy\}@fi.muni.cz}}
\begin{document}

\maketitle

\begin{abstract}
Over the past two decades the main focus of research into first-order (FO)
model checking algorithms has been on sparse relational 
structures -- culminating in the FPT algorithm by Grohe, Kreutzer and Siebertz for
FO model checking of nowhere dense classes of graphs.
On contrary to that, except the case of locally bounded clique-width
only little is currently known about FO model checking of dense classes 
of graphs or other structures.
We study the FO model checking problem for dense graph classes
definable by geometric means (intersection and visibility graphs).
We obtain new nontrivial FPT results, e.g., for restricted subclasses of
{\em circular-arc, circle, box, disk, and polygon-visibility graphs}.
These results use the FPT algorithm by Gajarsk\'y~et~al.\
for FO model checking of posets of bounded width.
We also complement the tractability results by related hardness reductions.
\medskip

\noindent\textbf{Keywords:} first-order logic; model checking; fixed-parameter tractability;
intersection graphs; visibility graphs
\end{abstract}

\section{Introduction}
\label{sec:intro}

Algorithmic meta-theorems are results stating that all problems expressible
in a certain language are efficiently solvable on certain classes of
structures, e.g.\ of finite graphs. 
Note that the model checking problem for {\em first-order logic} -- 
given a graph $G$ and an \FO formula $\phi$, we want to decide whether $G$
satisfies $\phi$ (written as $G\models\phi$) -- is trivially solvable
in time $|V(G)|^{\mathcal{O}(|\phi|)}$.
``Efficient solvability'' hence in this context often means
{\em fixed-parameter tractability} (FPT);
that is, solvability in time $f(|\phi|)\cdot|V(G)|^{\mathcal{O}(1)}$ 
for some computable function~$f$.  

In the past two decades
algorithmic meta-theorems for \FO logic on sparse graph classes
received considerable attention.  
While the algorithm of~\cite{cmr00} for MSO on graphs of bounded clique-width
implies fixed-parameter tractability of \FO model checking on graphs of
locally bounded clique-width via Gaifman's locality,
one could go far beyond that.
After the result of Seese~\cite{Seese96}
proving fixed-parameter tractability of \FO model checking on graphs of
bounded degree there followed a series of results~\cite{FrickG01, DawarGK07,
  DvorakKT10} establishing the same conclusion for increasingly rich sparse
graph classes. This line of research culminated in the result of Grohe,
Kreutzer and Siebertz~\cite{gks14}, who proved that \FO model checking is FPT
on {\em nowhere dense} graph classes.

While the result of~\cite{gks14} is the best possible in the following sense---%
if a graph class $\ca D$ is {\em monotone} (closed on taking subgraphs)
and not nowhere dense, then 
the \FO model checking problem on $\ca D$ is as hard as that on all graphs; 
this does not exclude interesting FPT meta-theorems on {\em somewhere dense}
non-monotone graph classes.
Probably the first extensive work of the latter dense kind,
beyond locally bounded clique-width, was that of 
Ganian et al.~\cite{GHKOST15} studying subclasses of interval graphs 
in which \FO model checking is FPT (precisely, those which use only 
a finite set of interval lengths).
Another approach has been taken in the works of Bova, Ganian and
Szeider~\cite{bgs16} and Gajarsk\'y et al.~\cite{gajarskyetal15},
which studied \FO model checking on posets --
posets can be seen as typically quite dense special digraphs.
Altogether, however, only very little is known about \FO model checking
of somewhere dense graph classes (except perhaps specialised \cite{GHOLR16}).

The result of Gajarsk\'y et al.~\cite{gajarskyetal15} 
claims that \FO model checking is FPT on posets of bounded width 
(size of a maximum antichain), 
and it happens to imply \cite{GHKOST15} in a stronger setting (see below).
One remarkable message of~\cite{gajarskyetal15} is the following (citation):
{\em The result may also be used directly towards establishing
fixed-parameter tractability for \FO model checking of other graph classes. 
Given the ease with which it (\cite{gajarskyetal15}\,)
implies the otherwise non-trivial result on interval graphs \cite{GHKOST15}, 
it is natural to ask what other (dense) graph classes can be 
interpreted in posets of bounded width.}
Inspired by the geometric case of interval graphs, we propose to study dense graph 
classes defined in geometric terms, such as intersection and visibility graphs,
with respect to tractability of their \FO model checking problem.

The motivation for such study is a two-fold.
First, intersection and visibility graphs
present natural examples of non-monotone somewhere dense graph classes
to which the great ``sparse'' \FO tractability result of~\cite{gks14} cannot be 
(at least not easily) applied.
Second, their supplementary geometric structure allows to better
understand (as we have seen already in~\cite{GHKOST15})
the boundaries of tractability of \FO model checking on them, which is,
to current knowledge, terra incognita for hereditary graph classes in
general.

\medskip
Our results mainly concern graph classes which are related to interval graphs.
Namely, we prove (Theorem~\ref{thm:CAtractable}) that
\FO model checking is FPT on {\em circular-arc graphs} (these are interval graphs
on a circle) if there is no long chain of arcs nested by inclusion.
This directly extends the result of~\cite{GHKOST15} and its aforementioned 
strengthening in~\cite{gajarskyetal15}
(with bounding chains of nested intervals instead of their lengths).
We similarly show tractability of \FO model checking of interval-overlap graphs, 
also known as {\em circle graphs}, of bounded independent set size 
(Theorem~\ref{thm:CIRitractable}),
and of restricted subclasses of {\em box and disk graphs} 
which naturally generalize interval graphs 
to two dimensions (Theorem~\ref{thm:Boxtractable}%
\shortlong{}{ and~\ref{thm:Disktractable}}).

On the other hand, 
for all of the studied cases we also show that whenever we relax our
additional restrictions (parameters), 
the \FO model checking problem becomes as hard on our intersection 
classes as on all graphs (Corollary~\ref{cor:allhardness}).
Some of our hardness claims hold also for the weaker \EFO model checking problem
(Proposition~\ref{pro:EFOhardness}).

Another well studied dense graph class in computational geometry
are {\em visibility graphs} of polygons, which
have been largely explored in the context of recognition, partition, 
guarding and other optimization problems 
\cite{g-vap-07,o-agta-87}.
We consider some established special cases,
involving {\em weak visibility, terrain} and {\em fan} polygons.
We prove that \FO model checking is FPT for the visibility graphs of 
a weak visibility polygon of a convex edge, with bounded number of reflex
(non-convex) vertices (Theorem~\ref{thm:VIStractable}).
On the other hand, without bounding reflex vertices,
\FO model checking remains hard even for the much more special case
of polygons that are terrain and convex fans at the same time
(Theorem~\ref{thm:TerFan-hard}).

As noted above, our fixed-parameter tractability proofs use the strong
result~\cite{gajarskyetal15} on \FO model checking of posets of bounded width.
We refer to Section~\ref{sec:prelim} for a detailed explanation of the
technical terms used here.
Briefly, for a given graph $G$ from the respective class and a formula
$\phi$, we show how to efficiently construct a poset $\ca P_G$ of bounded width
and a related \FO formula $\phi^I$ such that $G\models\phi$
iff $\ca P_G\models \phi^I$, and then solve the latter problem.
\shortlong{}{%
In constructing the poset $\ca P_G$ we closely exploit the respective 
geometric representation of~$G$.
}

With respect to the previously known results, we remark that our graph
classes are not sparse, as they all contain large complete or complete
bipartite subgraphs.
For many of them, namely unit circular-arc graphs, circle graphs
of bounded independence number, and unit box and disk graphs,
we can also show that they are of locally unbounded clique-width
by a straightforward adaptation of an argument from~\cite{GHKOST15}
(Proposition~\ref{prop:unboundedcw}).
For the visibility graphs of a weak visibility polygon of a convex edge,
we leave the question of bounding their local clique-width open.

\medskip
Lastly, we particularly emphasize the seemingly simple tractable case
(Corollary~\ref{cor:PERMtractable}) of permutation graphs of bounded clique size:
in relation to so-called stability notion (cf.~\cite{DBLP:journals/ejc/AdlerA14}),
already the hereditary class of triangle-free permutation graphs
has the $n$-order property (i.e., is {\em not} stable),
and yet \FO model checking of this class is FPT.
This example presents a natural hereditary and non-stable graph class
with FPT \FO model checking other than, say, graphs of bounded clique-width.
We suggest that if we could fully understand the precise breaking point(s)
of FP tractability of \FO model checking on simply described intersection
classes like the permutation graphs, then we would get much better insight
into FP tractability of \FO model checking of general hereditary graph classes.

\shortlong{%
Due to space restrictions, most of the proofs and some illustrating
pictures have had to be removed from this short paper.
The statements with removed proofs are marked by~~\apxmark
and they can be found, for example, in the arXiv version.
}{}%

\section{Preliminaries}\label{sec:prelim}

\shortlong{}{%
We recall some established concepts concerning intersection graphs
and first-order logic.
}
\subparagraph{Graphs and intersection graphs.} 
We work with {\em finite simple undirected graphs}
and use standard graph theoretic notation.
We refer to the vertex set of a graph $G$ as to $V(G)$
and to its edge set as to~$E(G)$, and we write shortly $uv$ for an
edge~$\{u,v\}$.
As it is common in the context of FO logic on graphs,
vertices of our graphs can carry arbitrary labels.

Considering a family of sets $\scri S$ (in our case, of geometric objects in
the plane), the {\em intersection graph of~$\scri S$}
is the simple graph $G$ defined by $V(G):=\scri S$ and
$E(G):=\{AB: A,B\in\scri S,\, A\cap B\not=\emptyset\}$.
In respect of algorithmic questions,
it is important to distinguish whether an intersection graph $G$ is given
on the input as an abstract graph $G$, or alongside with its intersection
representation~$\scri S$.
\shortlong{}{%
Usually, finding an appropriate representation for given $G$ is a hard task,
but we will mostly restrict our attention to intersection classes for which
there exists a polynomial-time algorithm for computing the representation.
}

One folklore example of a widely studied intersection graph class
are {\em interval graphs} -- the intersection graphs of intervals on the
real line.
Interval graphs enjoy many nice algorithmic properties, e.g.,
their representation can be constructed quickly,
and generally hard problems like clique, independent set and chromatic
number are solvable in polynomial time for them.

For a general overview and extensive reference guide of intersection graph
classes we suggest to consult the online system ISGCI~\cite{isgci}.
\shortlong{}{%
Regarding {\em visibility graphs}, which present a kind of geometric graphs
behaving very differently from intersection graphs, we refer to
Section~\ref{sec:visibility} for their separate more detailed treatment.
}

\subparagraph{FO logic.}
The {\em first-order logic of graphs} (abbreviated as \FO) applies the
standard language of first-order logic to a graph $G$ viewed as a relational 
structure with the domain $V(G)$ and the single binary (symmetric) relation $E(G)$.
That is, in graph \FO we have got the standard predicate $x=y$, 
a binary predicate $edge(x,y)$ with the usual meaning $xy\in E(G)$, 
an arbitrary number of unary predicates $L(x)$ with the meaning that $x$
holds the label~$L$,
usual logical connectives $\wedge,\vee,\to$, and quantifiers
$\forall x$, $\exists x$ over the vertex set $V(G)$.

For example, $\phi(x,y)\equiv \exists z\big(edge(x,z)\wedge edge(y,z)
	\wedge red(z)\big)$
states that the vertices $x,y$ have a common neighbour in~$G$
which has got label `red'.
One can straightforwardly express in \FO properties such as $k$-clique
$\exists x_1,\dots,x_k\big(\bigwedge_{i<j=1}^k
	(edge(x_i,x_j)\wedge x_i\not=x_j)\big)$
and $k$-dominating set
$\exists x_1,\dots,x_k\forall y\big(\bigvee_{i=1}^k
	(edge(x_i,y)\vee y=x_i)\big)$.
Specially, an \FO formula $\phi$ is {\em existential} (abbreviated as \EFO)
if it can be written as $\phi\equiv\exists x_1,\dots,x_k\,\psi$
where $\psi$ is quantifier-free.
For example, $k$-clique is \EFO while $k$-dominating set is not.

Likewise, \FO logic of posets treats a poset $\ca P=(P,\sqsubseteq)$ 
as a finite relational structure with the domain $P$ 
and the (antisymmetric) binary predicate~$x\sqsubseteq y$
(instead of the predicate $edge$) with the usual meaning.
Again, posets can be arbitrarily labelled by unary predicates.

\subparagraph{Parameterized model checking.}
Instances of a parameterized problem can be considered as pairs
$\langle I,k\rangle$ where $I$ is the {main part} of the instance and $k$ is
the \emph{parameter} of the instance; the latter is usually a
non-negative integer.  A parameterized problem is
\emph{fixed-parameter tractable (FPT)} if instances $\langle I,k\rangle$ 
of size $n$ can be solved in time $O(f(k)\cdot n^c)$ where $f$ is a 
computable function and $c$ is a constant independent of $k$.
In {\em parameterized model checking}, instances are
considered in the form $\langle(G,\phi),|\phi|\rangle$
where $G$ is a structure, $\phi$ a formula, the question is whether
$G\models\phi$ and the parameter is the size of~$\phi$.

When speaking about the \FO model checking problem in this paper, 
we always implicitly consider the formula $\phi$ (precisely its size) as a parameter.
We shall use the following result:
\begin{thm}[$\!$\cite{gajarskyetal15}]
\label{thm:posetFPT}
The \FO model checking problem of (arbitrarily labelled) posets, i.e., 
deciding whether $\ca P\models\phi$ for a labelled poset $\ca P$ and \FO $\phi$,
is fixed-parameter tractable with respect to $|\phi|$ and the width of 
$\ca P$ (this is the size of the largest antichain in~$\ca P$).
\end{thm}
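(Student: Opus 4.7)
The plan is to leverage the structural simplicity of bounded-width posets via Dilworth's theorem, reduce $\ca P$ to an auxiliary structure built from a constant number of chains together with cross-chain pointers, and then appeal to classical FO-tractability of colored linear orders in a suitably extended form.

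First, I would decompose the poset $\ca P$ of width $w$ into $w$ chains $C_1,\dots,C_w$ via Dilworth's theorem; this decomposition is computable in polynomial time. Each $C_i$ carries the induced total order from $\sqsubseteq$. Next, I would exploit the following key observation about how the chains interact: for any $x\in C_i$, the set $\{y\in C_j : y\sqsubseteq x\}$ is a down-set in $C_j$, and since $C_j$ is totally ordered by $\sqsubseteq$, it is an initial segment. Hence the restriction of $\sqsubseteq$ to $C_i\times C_j$ is entirely captured by a monotone ``pointer'' function $f_{i,j}\colon C_i\to C_j\cup\{\bot\}$ sending $x$ to the maximum element of $C_j$ that is below $x$. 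Thus $\ca P$ is fully described by $w$ labelled linear orders together with $w^2$ monotone pointer functions between them.

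The next step is to construct an FO interpretation of $\ca P$ in this auxiliary structure $\ca A_{\ca P}$. The labels on $C_i$ are inherited from $\ca P$, augmented by the chain index and by extra unary marks that encode the values of the pointer functions up to a bounded number of ``types''. Given any \FO formula $\phi$ over $\ca P$, one can then mechanically translate it into a formula $\phi^I$ over $\ca A_{\ca P}$ by replacing each occurrence of $x\sqsubseteq y$ with a disjunction over the chain containing $x$ and $y$, resolved via the appropriate pointer $f_{i,j}$. The goal is to reduce FO model checking on $\ca A_{\ca P}$ to FO model checking on a disjoint union of colored linear orders, where tractability is classical (via Ehrenfeucht--Fra\"{\i}ss\'e games, or via a Büchi-style automaton construction over words with labels in a finite alphabet).

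The main obstacle will be precisely this last step: although each chain $C_i$ in isolation is a colored linear order, the pointer functions $f_{i,j}$ glue the chains together and a quantifier $\exists y$ ranging over $\ca P$ must be simulated by $w$ quantifiers, one per chain, whose behaviour is coupled through the pointers. A naive translation blows up quantifier alternation in an uncontrolled way. To overcome this, the real technical content should be a Feferman--Vaught-style composition theorem tailored to the chain decomposition: one shows that the \FO type of an element $x\in C_i$ in $\ca P$, up to quantifier rank $q$, is determined by the \FO type of $x$ in its own chain $C_i$ together with the $f_{i,j}$-images of $x$, each considered inside $C_j$ up to the same rank $q$. Iterating this composition across the constantly many chains yields a bounded number of possible element types at each rank, from which an FPT model-checking algorithm follows by standard type-based dynamic programming along the chains. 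The delicate point is verifying that monotonicity of the pointers is enough to make this composition work uniformly for all \FO formulas, not merely existential ones.
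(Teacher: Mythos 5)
The theorem is imported by the paper from Gajarsk\'y et al.~\cite{gajarskyetal15} and not reproved here, so there is no internal proof to compare against; I assess your sketch on its own merits. Your opening moves --- Dilworth decomposition into $w$ chains, computable in polynomial time, and the observation that the order between chains is captured by monotone pointer functions $f_{i,j}\colon C_i\to C_j\cup\{\bot\}$ --- are correct and correspond to the right starting point. The rest, however, is a plan rather than a proof, and the central composition claim you state (``the \FO type of $x\in C_i$ in $\ca P$, up to rank $q$, is determined by the \FO type of $x$ in $C_i$ together with the types of its pointer images $f_{i,j}(x)$ in each $C_j$'') is not true in the form written. Already at rank $2$ with $w=2$, deciding $\exists y\,(y\sqsubseteq x\wedge\exists z\,(z\sqsubseteq y\wedge\cdots))$ with $y\in C_2$ and $z\in C_1$ requires knowing the behaviour of $f_{2,1}$ on every $y$ with $y\leq f_{1,2}(x)$ in $C_2$, not just at $f_{1,2}(x)$ itself; that information is not part of the colored-linear-order type of $f_{1,2}(x)$ in $C_2$, which does not see $f_{2,1}$. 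Higher quantifier rank compounds this: the correct invariant must close under iterated pointers across all chains, and exhibiting a \emph{finite} such invariant is precisely the work your sketch skips.

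Relatedly, the planned reduction to a \emph{disjoint union} of colored linear orders cannot go through as described: the pointers are the only encoding of $\sqsubseteq$ between chains, and a Feferman--Vaught argument composes over disjoint unions or ordered sums, not over $w$ linear orders glued together by $w^2$ monotone maps. (Without the monotonicity hypothesis such a gluing can encode arbitrary bipartite graphs, so monotonicity must be used substantively --- your sketch names this as ``the delicate point'' but never actually exploits it.) The proof in~\cite{gajarskyetal15} does begin from the Dilworth decomposition, but its technical heart is an iterative simplification/kernelization argument that shrinks the poset to a size bounded in $|\phi|$ and $w$ while preserving the truth of $\phi$; the composition lemma you postulate is roughly what that argument has to establish, and positing it without proof leaves the essential content of the theorem unaddressed.
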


We also present, for further illustration, a result on \FO model checking
of interval graphs with bounded nesting.
A set $\ca A$ of intervals (interval representation) is called {\em proper}
if there is no pair of intervals in $\ca A$ such that one is contained
in the other.
We call $\ca A$ a {\em$k$-fold proper} set of intervals
if there exists a partition ${\ca A}={\ca A}_1\cup\dots\cup {\ca A}_k$
such that each ${\ca A}_j$ is a proper interval set for $j=1,\dots,k$.
Clearly, $\ca A$ is $k$-fold proper if and only if
there is no chain of $k+1$ inclusion-nested intervals in~$\ca A$.
From Theorem~\ref{thm:posetFPT} one can, with help of relatively easy
arguments (Lemma~\ref{INTtoposet}), derive the following:

\begin{thm}[$\!$\cite{gajarskyetal15},
cf.~Proposition~\ref{prop:reducetoposet} and Lemma~\ref{INTtoposet}]
\label{thm:INTtractable}
Let $G$ be an interval graph given alongside with its 
$k$-fold proper interval representation $\ca A$.
Then FO model checking of $G$ is FPT with respect to the parameters 
$k$ and the formula size.
\end{thm}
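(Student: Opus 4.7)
The plan is to reduce \FO model checking on $G$ to \FO model checking on a labeled poset of width at most $k$, and then apply Theorem~\ref{thm:posetFPT}. Concretely, following the scheme of Proposition~\ref{prop:reducetoposet}, I would construct a poset $\ca P_G$ together with an \FO formula $\alpha(x,y)$ in the poset signature defining the edge relation of $G$. Given an input formula $\phi$, I would then obtain $\phi^I$ by substituting $\alpha$ for each atom $edge(x,y)$, argue $G\models\phi$ iff $\ca P_G\models\phi^I$ by structural induction on $\phi$, and decide the latter in time $g(|\phi|,k)\cdot|V(G)|^{O(1)}$ via Theorem~\ref{thm:posetFPT}. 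The substitution has size $O(|\phi|\cdot|\alpha|)$, so provided $|\alpha|$ depends only on $k$, this pipeline gives the claimed FPT bound.

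For the poset, I would take the ground set $V(G)$ (one element per interval of $\ca A$) and define $I\sqsubseteq J$ iff $a_I\le a_J$ and $b_I\le b_J$, i.e.\ the componentwise order on endpoints. Two intervals are incomparable in $\ca P_G$ precisely when one is strictly nested inside the other, hence antichains of $\ca P_G$ are exactly the inclusion-chains of $\ca A$; the assumption that $\ca A$ is $k$-fold proper then yields width$(\ca P_G)\le k$. I would additionally label each interval $I$ with its class index $c(I)\in\{1,\dots,k\}$, computed in polynomial time by peeling off the inclusion-maximal intervals in $k$ rounds, so that same-label intervals are pairwise non-nested and the subposet induced by any one label is a chain.

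The main obstacle, corresponding to Lemma~\ref{INTtoposet}, is the construction of $\alpha(x,y)$. If $I,J$ are $\ca P_G$-incomparable then one contains the other and they automatically intersect, so $\alpha$ asserts incomparability in that case. The harder case is $I\sqsubset J$ strictly: here $I\cap J\neq\emptyset$ is equivalent to $a_J\le b_I$, a condition the product order alone does not record. I would aim to recover it from the class labels and the surrounding geometric structure --- intuitively, the presence or absence of intermediate intervals in the other $k-1$ classes, together with the chain structure within each class, should allow a bounded-depth \FO formula to distinguish ``overlap'' from ``disjoint'' for a comparable pair. The delicate point is that explicit positive witnesses of a gap between $I$ and $J$ need not exist, so $\alpha$ will most likely have to interleave existential witnessing formulas with universal non-existence conditions, arranged across the $\le k$ class labels so that the total quantifier depth and size remain a function of $k$ only. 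With such an $\alpha$ in hand, the reduction described above completes the proof.
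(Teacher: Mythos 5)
Your reduction pipeline (build a bounded-width poset, define an interpretation $\alpha$, substitute and apply Theorem~\ref{thm:posetFPT}) is the right high-level plan and matches the paper. However, your choice of poset has a genuine gap that your proposal already half-acknowledges, and the missing idea is exactly what Lemma~\ref{INTtoposet} supplies.

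You take the domain of $\ca P_G$ to be $V(G)$ only, ordered by the componentwise (product) order on endpoints, with class labels. The problem is that this poset simply does not contain enough information to recover the edge relation, and no cleverness with the $k$ labels can fix it. Consider the two-interval representations $\{[1,3],[2,4]\}$ (intersecting) and $\{[1,2],[3,4]\}$ (disjoint): both give the identical labelled two-element chain in your poset, yet the intersection graphs differ ($K_2$ versus $\overline{K_2}$). So no \FO formula $\alpha(x,y)$ in the poset signature, however quantifier-heavy, can define the edge relation from your $\ca P_G$; the hoped-for ``intermediate intervals in the other classes'' witnesses need not exist, and when they don't, the two cases are genuinely indistinguishable. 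This is not a matter of arranging quantifiers carefully --- it is an information-theoretic obstruction.

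The paper's fix (Lemma~\ref{INTtoposet}) is to enlarge the poset domain: the ground set is $D\cup\ca B$, where $D$ is the set of all interval \emph{endpoints} (given a special unary label) and $\ca B$ is the set of intervals. Endpoints in $D$ are linearly ordered by $\le$; each class $\ca B_j$ is a chain ordered left-to-right; and an interval $[a,b]$ is placed above $a$ and everything to its left in $D$, and below $b$ and everything to its right in $D$. The resulting width is $k+1$ (the $k$ class chains plus the chain $D$). Now intersection has a clean first-order definition: $x$ and $y$ intersect iff no $D$-labelled element $z$ lies strictly between them in the poset order, i.e.\ $\forall z\,[D(z)\to\neg(x\le z\le y)\wedge\neg(y\le z\le x)]$. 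Your product order on intervals alone discards precisely the endpoint information needed to make this work; adding the endpoints as first-class elements of the poset is the key step you are missing.
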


\subparagraph{Parameterized hardness.}
For some parameterized problems, like the $k$-clique on all graphs,
we do not have nor expect any FPT algorithm.
To this end, the theory of parameterized complexity of Downey and 
Fellows~\cite{DBLP:series/txcs/DowneyF13} defines complexity
classes $W[t]$, $t\geq 1$, such that the $k$-clique problem is
complete for $W[1]$ (the least class).
Furthermore, theory also defines a larger complexity class $AW[*]$ 
containing all of $W[t]$.
Problems that are $W[1]$-hard do not admit an FPT algorithm unless the
established Exponential Time Hypothesis fails.

\begin{thm}[$\!$\cite{DBLP:conf/dmtcs/DowneyFT96}]
\label{thm:FOgenhard}
The \FO model checking problem 
(where the formula size is the parameter)
of all simple graphs is $AW[*]$-complete.
\end{thm}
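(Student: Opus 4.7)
The plan is to prove both membership in $AW[*]$ and $AW[*]$-hardness. I will rely on the standard characterization that $AW[*]$ equals the closure under FPT-reductions of the parameterized \FO model checking problem on arbitrary finite relational structures (with the parameter being the formula size). Under this characterization, membership is immediate: graphs (including the labelled graphs considered in this paper) are themselves a special case of finite relational structures, so the identity map is a trivial FPT-reduction from \FO model checking on graphs to \FO model checking on general structures.

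The substantive part is $AW[*]$-hardness, for which I plan to exhibit an FPT-reduction in the opposite direction, from \FO model checking on arbitrary finite relational structures to \FO model checking on labelled graphs. Given a structure $\ca S$ over a finite vocabulary $\sigma$ and an \FO formula $\phi$, I build a labelled graph $G_{\ca S}$ as follows. The vertex set contains one ``element'' vertex per element of $\ca S$ (all carrying a label $\mathit{elem}$), and for every relation symbol $R\in\sigma$ of arity $r$ and every tuple $(a_1,\dots,a_r)\in R^{\ca S}$, one ``relation vertex'' labelled $R$ together with $r$ ``slot vertices'' labelled $s_1^R,\dots,s_r^R$; the relation vertex is adjacent to each of its slot vertices, and the $i$-th slot vertex is adjacent to the element vertex~$a_i$.

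Next I rewrite $\phi$ into a graph formula $\phi^\star$ by relativizing every quantifier to the label $\mathit{elem}$ and replacing each atomic subformula $R(x_1,\dots,x_r)$ with an \FO sentence that quantifies over an $R$-labelled vertex $v$ and $s_i^R$-labelled vertices $y_1,\dots,y_r$ and asserts that each $y_i$ is adjacent to both $v$ and $x_i$. Because the labels on relation and slot vertices are pairwise distinct (and disjoint from $\mathit{elem}$), no spurious encodings can arise, and therefore $\ca S\models\phi$ iff $G_{\ca S}\models\phi^\star$. Since only the relation symbols occurring in $\phi$ need translation, the formula blow-up is a function of $|\phi|$ alone, so the parameter is preserved up to a computable function and the reduction is FPT.

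The main obstacle I anticipate is the correctness bookkeeping for the gadget construction, in particular verifying that the disjoint labels on slot and relation vertices really prevent the rewritten formula from matching unintended tuples. A secondary point is that if the theorem is to be read over plain unlabelled simple graphs, then an additional standard step is required to encode each label by attaching a small uniquely recognisable pendant gadget to the corresponding vertices, with a further routine rewriting of $\phi^\star$ to detect these gadgets; this introduces one more layer of translation but again blows the formula up only by a function of $|\phi|$, so the reduction remains FPT.
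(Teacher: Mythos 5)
The paper does not prove this theorem at all --- it is stated as a citation to Downey, Fellows and Taylor, so there is no ``paper's own proof'' to compare against. Your argument is nevertheless a sound sketch of the standard route to this fact. For membership you observe that graphs are a special case of finite relational structures, so the problem inherits membership from the general \FO model checking problem, which is the canonical $AW[*]$-complete problem; for hardness you give the usual incidence-style encoding of an arbitrary structure as a labelled graph (one element vertex per domain element, one relation vertex and $r$ slot vertices per tuple, with distinct slot labels to preserve argument positions), together with a formula translation that relativizes quantifiers to element vertices and replaces each atom $R(x_1,\dots,x_r)$ by an existential search for the corresponding gadget. The blow-up of the formula is bounded by a function of $|\phi|$ because only symbols occurring in $\phi$ need translating and their arities are at most $|\phi|$, so the reduction is indeed a parameterized reduction. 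The subsequent removal of labels via small pairwise non-isomorphic pendant gadgets is also a standard, routine step. One point worth flagging is that your membership/hardness argument treats the $AW[*]$-completeness of \FO model checking on arbitrary finite relational structures as a black box; this is a legitimate move and is how the result is usually presented in modern texts, but it differs from the original Downey--Fellows--Taylor development, which builds $AW[*]$ up from alternating weighted satisfiability problems and derives the model-checking characterization as a consequence. If the intent were to reprove the theorem from first principles rather than reduce from the already-known structure version, considerably more work would be required; as a reduction-based proof it is complete and correct.
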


Dealing with parameterized hardness of \FO model checking, one should also
mention the related {\em induced subgraph isomorphism problem}:
for a given input graph $G$, and a graph $H$ as the parameter,
decide whether $G$ has an induced subgraph isomorphic to~$H$.
Note that this includes the clique and independent set problems.
Induced subgraph isomorphism (parameterized by the subgraph size)
is clearly a weaker problem than parameterized \FO model checking,
since one may ``guess'' the subgraph with $|V(H)|$ existential quantifiers 
and then verify it edge by edge.
Consequently, every parameterized hardness result for induced subgraph
isomorphism readily implies same hardness results for \EFO and \FO model
checking.

\subparagraph{\FO interpretations.}
Interpretations are a standard tool of logic and finite model theory.
To keep our paper short, we present here only a simplified description of
them, tailored specifically to our need of interpreting geometric graphs in
posets.

An {\em\FO interpretation} is a pair $I=(\nu,\psi)$ of poset \FO formulas
$\nu(x)$ and $\psi(x,y)$ (of one and two free variables, respectively).
For a poset $\ca P$, this defines a graph $G:=I(\ca P)$ such that
$V(G)=\{v: \ca P\models\nu(v)\}$ and
$E(G)=\{uv: u,v\in V(G),\, \ca P\models\psi(u,v)\vee\psi(v,u)\}$.
Possible labels of the elements are naturally inherited from $\ca P$ to~$G$.
Moreover, for a graph \FO formula $\phi$ the interpretation $I$
defines a poset \FO formula $\phi^I$ recursively as follows:
every occurrence of ${edge}(x,y)$ is replaced by $\psi(x,y)\vee\psi(y,x)$,
every $\exists x\,\sigma$ is replaced by $\exists x\,(\nu(x)\wedge\sigma)$
and $\forall x\,\sigma$ by $\forall x\,(\nu(x)\to\sigma)$.
Then, obviously, $\ca P\models\phi^I \Longleftrightarrow G\models\phi$.

Usefulness of the concept is illustrated by the following trivial claim:
\begin{prop}\apxmark\label{prop:reducetoposet}
Let $\scri P$ be a class of posets such that the \FO model checking problem
of $\scri P$ is FPT, and let $\scri G$ be a class of graphs.
Assume there is a computable \FO interpretation~$I$, and
for every graph $G\in\scri G$ we can in polynomial time compute
a poset $\ca P\in\scri P$ such that $G=I(\ca P)$.
Then the \FO model checking problem of $\scri G$ is in FPT.
\end{prop}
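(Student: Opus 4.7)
The plan is to reduce an FO model checking instance on $\scri G$ to one on $\scri P$, and then invoke the assumed FPT algorithm for $\scri P$. Given input $(G,\phi)$ with $G\in\scri G$, the reduction proceeds in three stages: first, use the hypothesis to compute in polynomial time a poset $\ca P\in\scri P$ with $G=I(\ca P)$; second, from $\phi$ and the fixed interpretation $I=(\nu,\psi)$, compute the poset formula $\phi^I$ by the syntactic recursion recalled just above the statement; third, run the assumed FPT model checking algorithm for $\scri P$ on the instance $(\ca P,\phi^I)$ and return its answer.

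Correctness is immediate from the defining property of FO interpretations: for every poset $\ca P'$ and every graph formula $\phi$, one has $\ca P'\models\phi^I$ iff $I(\ca P')\models\phi$. Taking $\ca P'=\ca P$ with $I(\ca P)=G$ yields $\ca P\models\phi^I\Longleftrightarrow G\models\phi$, so the returned answer is the correct one for $G\models\phi$.

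For the running time, stage one is polynomial in $|V(G)|$ by assumption. Stage two is a structural recursion on $\phi$ in which every $edge$-atom and every quantifier is replaced by a subformula of size at most $|\nu|+|\psi|+O(1)$; hence $|\phi^I|\le c\cdot|\phi|$ for a constant $c$ depending only on the fixed $I$, and the translation itself runs in time bounded by a computable function of $|\phi|$. Stage three, by the FPT assumption on $\scri P$ (e.g.\ Theorem~\ref{thm:posetFPT}), then takes time at most $g(|\phi^I|)\cdot|\ca P|^{O(1)}\le g(c|\phi|)\cdot|\ca P|^{O(1)}$ for some computable $g$, which is FPT in $|\phi|$ since $|\ca P|$ is polynomial in $|V(G)|$. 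There is no real obstacle here; as the paper flags the statement as ``trivial,'' the whole argument amounts to bookkeeping of the two hypotheses (FPT of $\scri P$ and polynomial-time construction of $\ca P$) combined with the defining equivalence $\ca P\models\phi^I\Longleftrightarrow I(\ca P)\models\phi$.
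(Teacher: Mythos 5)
Your proposal is correct and follows exactly the same route as the paper's (very brief) proof: compute $\ca P$ with $G=I(\ca P)$, translate $\phi$ to $\phi^I$, and invoke the assumed FPT algorithm on $(\ca P,\phi^I)$, with correctness given by $\ca P\models\phi^I\Longleftrightarrow G\models\phi$. You simply spell out the running-time bookkeeping that the paper leaves implicit.
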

\shortlong{}{\begin{proof}
Given $G\in\scri G$ and formula $\phi$ (the parameter),
we construct $\phi^I$ and $\ca P\in\scri P$ such that $G=I(\ca P)$,
and call the assumed algorithm to decide $\ca P\models\phi^I$.
\end{proof}}

\section{Tractability for Intersection Classes}\label{sec:tractability}

\subsection{Circular-arc graphs}\label{sec:CA}

{\em Circular-arc graphs} are intersection graphs of arcs (curved intervals)
on a circle.
They clearly form a superclass of interval graphs, and they enjoy similar
nice algorithmic properties as interval graphs,
such as efficient construction of the representation~\cite{zbMATH02055149},
and easy computation of, say, maximum independent set or clique.

Since the \FO model checking problem is $AW[*]$-complete on interval
graphs~\cite{GHKOST15},
the same holds for circular-arc graphs in general.
Furthermore, by~\cite{DBLP:journals/algorithmica/MarxS13,HEGGERNES2015252}
already \EFO model checking is $W[1]$-hard for interval and circular-arc graphs.
A common feature of these hardness reductions 
(see more discussion in Section~\ref{sec:hardness})
is their use of unlimited chains of nested intervals/arcs.
Analogously to Theorem~\ref{thm:INTtractable},
we prove that considering only {\em$k$-fold proper circular-arc} representations
(the definition is the same as for $k$-fold proper interval representations)
makes \FO model checking of circular-arc graphs tractable.

\begin{thm}
\label{thm:CAtractable}
Let $G$ be a circular-arc graph given alongside with its 
$k$-fold proper circular-arc representation $\ca A$.
Then \FO model checking of $G$ is FPT with respect to the parameters 
$k$ and the formula size.
\end{thm}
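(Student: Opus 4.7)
The plan is to invoke Proposition~\ref{prop:reducetoposet}: I construct, in polynomial time from the representation~$\mathcal{A}$, a labelled poset $\mathcal{P}_G$ of width bounded by a function of~$k$, together with a fixed FO interpretation $I$ (depending only on $k$) such that $G = I(\mathcal{P}_G)$; Theorem~\ref{thm:posetFPT} then finishes. Conceptually, I will lift the construction of Lemma~\ref{INTtoposet} (which handles the interval case, i.e., Theorem~\ref{thm:INTtractable}) to circular arcs by ``cutting'' the circle at a convenient point and reducing to a labelled interval representation.

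In more detail, I would first dispose of trivial degeneracies: arcs covering the entire circle are universal vertices, which can be peeled off and reattached by the interpretation using a fresh label; and ties between endpoints are broken by infinitesimal perturbation. Then I choose a point $p_0$ on the circle that lies on no arc boundary, and partition $\mathcal{A}$ into $\mathcal{C}$ (arcs missing $p_0$) and $\mathcal{B}$ (``wraparound'' arcs containing $p_0$). Note that $\mathcal{B}$ is automatically a clique of $G$, which is what lets me handle it cheaply.

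Next, cut the circle at $p_0$ and parametrise it as the line segment $[0,L)$. For $A \in \mathcal{C}$ I keep the interval $I_A = [a_1,a_2]$, labelled \emph{Normal}. For $A \in \mathcal{B}$ I deliberately do \emph{not} split $A$ into its two linear pieces (which would create large nested families anchored at the endpoints $0$ and $L$ and would destroy $k$-fold properness); instead I represent $A$ by its \emph{complement} interval $J_A = [a_2,a_1]$ on the line, labelled \emph{Wrap}. A case analysis shows that the resulting labelled interval family $\mathcal{A}'$ is still $k$-fold proper: chains of nested intervals in $\mathcal{A}'$ sit either entirely in the Normal part or entirely in the Wrap part (a Normal piece and a Wrap piece cannot be nested unless the corresponding arcs are disjoint on the circle), and within each part nesting on the line matches nesting of arcs of the same type on the original circle, bounded by $k$. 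Feeding $\mathcal{A}'$ into Lemma~\ref{INTtoposet} then produces the labelled poset $\mathcal{P}_G$ of width $O(k)$.

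Finally, I extend the interval FO interpretation to an interpretation $I$ computing $G$ from $\mathcal{P}_G$ by case analysis on the labels of a candidate edge's endpoints: two Normals are adjacent iff the underlying intervals intersect (this is what Lemma~\ref{INTtoposet} already gives); two Wraps are always adjacent (both arcs contain $p_0$); a Normal $A$ and a Wrap $B$ are adjacent iff $I_A \not\subseteq J_B$. The last condition is the main technical obstacle, since it asks for interval \emph{containment}, not merely intersection, to be definable in FO over the poset. I expect this to be handled by equipping the interpretation with the natural endpoint-product order $A\sqsubseteq B \iff a_1\le b_1 \wedge a_2\le b_2$, under which incomparability already detects containment, and by using the Normal/Wrap label distinction (together with a small amount of auxiliary labelling provided by Lemma~\ref{INTtoposet}) to pin down the direction of containment. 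Once this last encoding is verified and the width bound $O(k)$ is confirmed, the theorem follows from Proposition~\ref{prop:reducetoposet} and Theorem~\ref{thm:posetFPT}.
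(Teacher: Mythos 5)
Your proposal follows the paper's own proof essentially step for step: cut the circle at a boundary-free point, keep the arcs missing that point as ordinary intervals, replace each wraparound arc by its \emph{complement} interval, feed the union into Lemma~\ref{INTtoposet}, and define the interpretation by a three-way case split on the Normal/Wrap labels with the observation that wraparound arcs form a clique and that a Normal arc meets a Wrap arc iff its interval is not contained in the Wrap's complement interval. Two small corrections, though. First, the union of the Normal and Wrap intervals is $2k$-fold proper, not $k$-fold proper as you claim: you yourself note that a Normal interval and a Wrap interval can be nested (exactly when the original arcs are disjoint or co-cover the circle), so nesting chains do mix the two types; the bound $2k$ comes simply from the fact that each part is $k$-fold proper separately and we union the two partitions. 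This does not hurt the result since $2k$ is still a function of $k$. Second, the ``main technical obstacle'' you identify — expressing interval \emph{containment} over the poset — is not an obstacle at all: Lemma~\ref{INTtoposet} already supplies a formula $\vartheta(x,y)$ with the explicit semantics that $\ca P\models\vartheta(x,y)$ iff $x\subseteq y$, so the Normal-meets-Wrap case of your interpretation is simply $\neg\vartheta(y,x)$. Your proposed workaround via a separate endpoint-product order is unnecessary (and would require additional care to reconcile with the single poset the lemma constructs). With these two adjustments your proof coincides with the paper's.
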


Note that we can (at least partially) avoid the assumption of having a
representation $\ca A$ in the following sense.
Given an input graph $G$, we compute a circular-arc representation $\ca A$
using~\cite{zbMATH02055149}, and then we easily determine the least $k'$ such
that $\ca A$ is $k'$-fold proper.
However, without further considerations, this is not guaranteed to provide
the minimum $k$ over all circular-arc representations of~$G$,
and not even $k'$ bounded in terms of the minimum $k$.

Our proof will be based on the following extension of the related argument
from~\cite{gajarskyetal15}:
\begin{lem}[parts from {\cite[Section~5]{gajarskyetal15}}]\apxmark
\label{INTtoposet}
Let $\ca B$ be a $k$-fold proper set of intervals for some integer~$k>0$,
such that no two intervals of $\ca B$ share an endpoint.
There exist formulas $\nu,\psi,\vartheta$ depending on~$k$,
and a labelled poset $\ca P$ of width $k+1$ computable in polynomial time 
from~$\ca B$, such that all the following hold:
\begin{itemize}
\item The domain of $\ca P$ includes (the intervals from)~$\ca B$, 
	and $\ca P\models\nu(x)$ iff $x\in\ca B$,
\item $\ca P\models\psi(x,y)$ for intervals $x,y\in\ca B$ iff 
	$x\cap y\not=\emptyset$ (edge relation of the interval graph of~$\ca B$),
\item $\ca P\models\vartheta(x,y)$ for intervals $x,y\in\ca B$ iff 
	$x\subseteq y$ (containment of intervals).
\end{itemize}
\end{lem}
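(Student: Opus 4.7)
The plan is to follow the strategy underlying the corresponding argument in Gajarsk\'y et al.~\cite{gajarskyetal15}. I would first apply Mirsky's theorem to the containment partial order on $\ca B$, which by the $k$-fold-proper hypothesis has height at most $k$, obtaining a partition $\ca B=\ca B_1\cup\dots\cup\ca B_k$ into $k$ pairwise-non-nested (proper) classes. A greedy ``outer-first'' assignment ($\ca B_1$ equals the inclusion-maximal intervals, $\ca B_2$ the maximal ones in the remainder, etc.) runs in polynomial time and additionally ensures that $I\subsetneq J$ forces $c(I)>c(J)$, where $c(\cdot)$ denotes the class index.

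The poset $\ca P$ is built on the domain $\ca B\cup\{I^L,I^R:I\in\ca B\}$, where $I^L$ and $I^R$ are auxiliary ``left-'' and ``right-endpoint'' markers of~$I$. Viewing each marker as a degenerate interval $I^L=\{a_I\}$ and $I^R=\{b_I\}$, the order is uniformly $x\sqsubseteq y$ iff $a_x\le a_y$ and $b_x\le b_y$, where $a_\bullet$ and $b_\bullet$ denote the left and right endpoint of the element. The distinct-endpoint hypothesis yields antisymmetry and transitivity is immediate. Elements carry labels encoding both their type (interval, left marker, or right marker) and their class in $\{1,\dots,k\}$. A short case analysis shows that two distinct intervals are incomparable in $\sqsubseteq$ iff one properly contains the other, so an antichain restricted to $\ca B$ is a containment chain of length at most $k$; markers of the same type form a chain, every left marker is comparable to every right marker, and adding one marker to an interval antichain forces the intervals to share a common inner point and thus to remain pairwise nested. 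Hence the width of $\ca P$ is at most $k+1$.

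Next I would define the interpretation formulas. $\nu(x)$ simply picks the elements labelled ``interval''. A key auxiliary formula $\rho(x,y)$, saying ``$y=x^R$'', asserts that $y$ is labelled ``right marker, class $c(x)$'', that $x\sqsubseteq y$, and that no $z\neq x,y$ satisfies $x\sqsubseteq z\sqsubseteq y$; under the distinct-endpoint hypothesis $x^R$ is the unique such element. An analogous $\lambda(x,y)$ picks out $x^L$. Intersection is then expressed by $\psi(x,y)\equiv\exists z_1z_2z_3z_4\bigl(\lambda(x,z_1)\wedge\rho(x,z_2)\wedge\lambda(y,z_3)\wedge\rho(y,z_4)\wedge z_1\sqsubseteq z_4\wedge z_3\sqsubseteq z_2\bigr)$, which directly encodes $a_x\le b_y$ and $a_y\le b_x$; containment $\vartheta(x,y)$ is obtained by replacing the last two atoms by $z_3\sqsubseteq z_1$ and $z_2\sqsubseteq z_4$, encoding $a_y\le a_x$ and $b_x\le b_y$. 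All three formulas have size bounded in terms of $k$ alone.

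The main obstacle, in my view, is correctness of the pairing formulas $\lambda$ and $\rho$: one must verify that no stray element, whether from another class or of another type, sneaks strictly between an interval $I$ and its own right marker $I^R$ in the $\sqsubseteq$-order. This is precisely where the ``no two intervals share an endpoint'' hypothesis is crucial: it forces $b_I$ to be strictly separated from every other endpoint and every other marker position, making $I$ and $I^R$ (and symmetrically $I^L$ and $I$) genuinely $\sqsubseteq$-consecutive. Once this is checked, polynomial-time computability of $\ca P$, the width bound, and the equivalences $\ca P\models\nu(x)\Leftrightarrow x\in\ca B$, $\ca P\models\psi(x,y)\Leftrightarrow x\cap y\neq\emptyset$, and $\ca P\models\vartheta(x,y)\Leftrightarrow x\subseteq y$ all follow by routine verification.
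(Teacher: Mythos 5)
Your proof is correct, and reaches the same construction target (a poset on intervals plus their endpoints, of width $\leq k+1$, in which intersection and containment are \FO-definable), but it differs from the paper's in two substantive ways. First, your order on $\ca B$ is the full componentwise order $x\sqsubseteq y \Leftrightarrow (a_x\le a_y \wedge b_x\le b_y)$, so that any two non-nested intervals are comparable and an antichain of intervals is exactly a nesting chain; the width bound comes from Mirsky together with "all $2|\ca B|$ markers form a chain" and "a marker adds at most one element to an interval antichain." The paper instead orders intervals only \emph{within} each class $\ca B_j$ and adds a single chain $D$ on all endpoints, so width $\le k+1$ follows immediately from the trivial $(k+1)$-chain cover $D,\ca B_1,\dots,\ca B_k$ without any Dilworth/Mirsky-style analysis. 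Second, your formulas recover $a_x,b_x$ by pairing each interval with its own markers via a cover relation (your $\lambda,\rho$), then compare the markers existentially; the paper avoids any pairing at all and simply universally quantifies over the whole label class $D$ (``no endpoint lies weakly between $x$ and $y$''), so it needs neither marker labels, class labels, nor a cover formula. Both are valid; the paper's poset and formulas are somewhat leaner. A small remark on your write-up: the ``class $c(x)$'' restriction in $\rho$ and $\lambda$ is in fact redundant once you require the cover relation, since the only right (resp.\ left) marker that covers $x$ (resp.\ is covered by $x$) is $x^R$ (resp.\ $x^L$) by the distinct-endpoint hypothesis, regardless of class; and as written it conflates the object $c(x)$ with a formula, so it would need to be unfolded as a disjunction over classes, which you can simply drop.
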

\shortlong{}{%
\def\PPP{\ca P}
\begin{proof}
The first part repeats an argument from \cite[Section~5]{gajarskyetal15}.
Let $D:=\{a,b: [a,b]\in\ca B\}$ be the set of all interval ends, and
$\ca B=\ca B_1\cup\dots\cup \ca B_k$ be such that each 
$\ca B_j$ is a proper interval set for $j=1,\dots,k$.
Let $P:=D\cup\ca B$.
We define a poset $\PPP=(P,\leq^\PPP)$ as follows:
\begin{itemize}
\item
for numbers $d_1,d_2\in D$ it is $d_1\leq^\PPP\!d_2$ iff $d_1\leq d_2$,
\item
for $j\in\{1,\dots,k\}$ and intervals $t_1,t_2\in\ca B_j$, it is
$t_1\leq^\PPP\!t_2$ iff $t_1$ is not to the right of $t_2$,
\item
for every $t=[a,b]\in\cal B$ and every $d\in D$, it is
$t\leq^\PPP\!d$ iff $d\geq b$, and $d\leq^\PPP\!t$ iff $d\leq a$.
\end{itemize}
An informal meaning of this definition of $\ca P$ is that
every interval $[a,b]$ from $\ca B$ is larger than its left end
$a$ (and hence larger than all interval ends before $a$),
and the interval is smaller than its right end $b$
(and hence smaller than all interval ends after $b$).
The interval $[a,b]$ is incomparable with all ends (of other intervals)
which are strictly between $a$ and $b$.

Using that each $\ca B_j$ is proper, one can verify that
$\ca P$ indeed is a poset.
The set $P$ can be partitioned into $k+1$ chains;
$D$ and $\ca B_1,\dots,\ca B_k$.
Hence the width of $\PPP$ is at most $k+1$.

In order to define the formulas, we give a special label `$D$'
to the set~$D$.
Then
\begin{align}
\nu(x) \equiv\>& \neg D(x),
\label{eq:nu}\\
\psi(x,y) \equiv\>&  \,\forall z\, \left[ D(z) \to \left(
		(\neg\, x\leq^\PPP\!z \vee \neg\, z\leq^\PPP\!y) \wedge
		(\neg\, y\leq^\PPP\!z \vee \neg\, z\leq^\PPP\!x)
	\right)\right],
\label{eq:psi}\\\label{eq:theta}
\vartheta(x,y) \equiv\>&  \,\forall z\, \left[ D(z) \to \left(
		(z\leq^\PPP\!y \to z\leq^\PPP\!x) \wedge
		(z\geq^\PPP\!y \to z\geq^\PPP\!x)
	\right)\right],
\end{align}
where the meaning of \eqref{eq:nu} is obvious,
\eqref{eq:psi} says that no interval end ($z$) is ``between''
the intervals $x,y$, and
\eqref{eq:theta} says that the left end of the interval $x$ is after that of
$y$ and the right end of $x$ is before (or equal) that of~$y$.
Consequently, $\ca P\models\nu(x)$ iff $x\in\ca B$,~
$\ca P\models\psi(x,y)$ iff none of the intervals $x,y$ is fully to the left
of the other (and so $x\cap y\not=\emptyset$),
and $\ca P\models\vartheta(x,y)$ iff $x\subseteq y$, as required.
\end{proof}
}

\begin{figure}[t]
$$
\begin{tikzpicture}[scale=\shortlong{2.7}{3.5}]
\tikzstyle{every path}=[draw,color=gray];
\draw[dashed] (0,0) circle (5mm);
\draw (0,-0.41) node {0};
\draw[dashed,thin] (0,-0.48) -- (0,-0.67) ;
\tikzstyle{every path}=[draw,color=black, |-|];
\centerarc[] (0,0) (5:85:0.55) ;
\centerarc[] (0,0) (-35:65:0.6) ;
\centerarc[] (0,0) (-75:20:0.65) ;
\centerarc[] (0,0) (45:125:0.65) ;
\centerarc[] (0,0) (70:170:0.6) ;
\centerarc[] (0,0) (110:220:0.55) ;
\centerarc[] (0,0) (210:245:0.65) ;
\centerarc[] (0,0) (190:235:0.6) ;
\tikzstyle{every path}=[draw,color=blue,thick, |-|];
\centerarc[] (0,0) (255:310:0.6) ;
\centerarc[] (0,0) (225:300:0.55) ;
\tikzstyle{every path}=[draw,color=red,thick,dashed, |-|];
\centerarc[] (0,0) (-50:255:0.45) ;
\centerarc[] (0,0) (-60:225:0.4) ;
\end{tikzpicture}
$$
\caption{An illustration; 
a proper circular-arc representation $\ca A$ (ordinary black and thick blue arcs),
giving raise to a $2$-fold proper interval set $\ca B$
(ordinary black and dashed red arcs), as in the proof of Theorem~\ref{thm:CAtractable}.
The red arcs are complements of the corresponding blue arcs.}
\label{fig:arc-complement}
\end{figure}
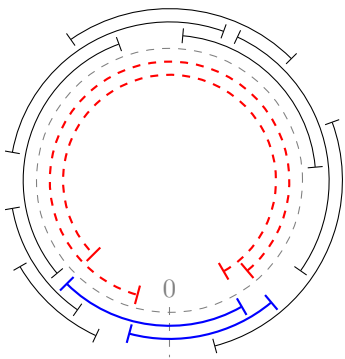

\begin{proof}[Proof of Theorem~\ref{thm:CAtractable}]
We consider each arc of $\ca A$ in angular coordinates as $[\alpha,\beta]$
clockwise, where $\alpha,\beta\in[0,2\pi)$.
By standard arguments (a ``small perturbation''), 
we can assume that no two arcs share the same endpoint,
and no arc starts or ends in (the angle)~$0$.
Let $\ca A_0\subseteq \ca A$ denote the subset of arcs containing~$0$.
Note that for every arc $[\alpha,\beta]\in\ca A_0$ we have $\alpha>\beta$,
and we subsequently define
$\ca A_1:=\big\{[\beta,\alpha]: [\alpha,\beta]\in\ca A_0\big\}$
as the set of their ``complementary'' arcs avoiding~$0$.
For $a\in\ca A_0$ we shortly denote by $\bar a\in\ca A_1$ its complementary arc.

Now, the set $\ca B:=(\ca A\sem\ca A_0)\cup \ca A_1$ is an ordinary
interval representation contained in the open line segment $(0,2\pi)$.
See Figure~\ref{fig:arc-complement}.
Since each of $\ca A\sem\ca A_0$ and $\ca A_1$ is $k$-fold proper by the
assumption on~$\ca A$, the representation $\ca B$ is $2k$-fold proper.
Note the following facts;
every two intervals in $\ca A_0$ intersect, and
an interval $a\in\ca A_0$ intersects $b\in\ca A\sem\ca A_0$
iff $b\not\subset\bar a$.

We now apply Lemma~\ref{INTtoposet} to the set~$\ca B$, constructing a
(labelled) poset $\ca P$ of width at most~$2k+1$.
We also add a new label $red$ to the elements of $\ca P$
which represent the arcs in $\ca A_1$.
The final step will give a definition of an \FO interpretation
$I=(\nu,\psi_1)$ such that~$I(\ca P)$ will be isomorphic to the intersection
graph $G$ of~$\ca A$.
Using the formulas $\psi,\vartheta$ from Lemma~\ref{INTtoposet}, the latter
is also quite easy.
As mentioned above, intersecting pairs of intervals from $\ca A$
can be described using intersection and containment of
the corresponding intervals of $\ca B$:
$$
\psi_1(x,y)\>\equiv\> \big(red(x)\wedge red(y)\big) \vee
	 \big(\neg red(x)\wedge\neg red(y)\wedge \psi(x,y)\big) \vee
	 \big(red(x)\wedge\neg red(y)\wedge\neg \vartheta(y,x)\big)
$$
It is routine to verify that, indeed, $G\simeq I(\ca P)$
(using the obvious bijection of $\ca A_0$ to $\ca A_1$). 

We then finish simply by Theorem~\ref{thm:posetFPT} and
Proposition~\ref{prop:reducetoposet}.
\end{proof}

\shortlong{}{%
One can speculate whether the parameter $k$ in Theorem~\ref{thm:CAtractable}
can be replaced by a number which is ``directly observable'' from the graph $G$, 
such as the maximum clique size.
However, the idea of taking the maximum clique size 
as such a parameter is not a brilliant idea since circular-arc graphs of bounded clique
size also have bounded tree-width, and so their \FO model checking becomes
easy by traditional means.
On the other hand, considering independent set size as an additional
parameter does not work either, as we will see in
Section~\ref{sec:hardness}.
}

\subsection{Circle graphs}\label{sec:CIR}

Another graph class closely related to interval graphs
are {\em circle graphs}, also known as {\em interval overlap graphs}.
These are intersection graphs of chords of a circle, and they can
equivalently be characterised as having an {\em overlap} interval representation 
$\ca C$ such that $a,b\in\ca C$ form an edge, if and only if
$a\cap b\not=\emptyset$ but neither $a\subseteq b$ nor $b\subseteq a$ hold
(see Figure~\ref{fig:open-circlerep}).
A circle representation of a circle graph can be efficiently
constructed~\cite{zbMATH04089581}.

Related {\em permutation graphs}
are defined as intersection graphs of line segments with the ends on
two parallel lines, and they form a complementation-closed subclass of circle graphs.
Note another easy characterization:
let $G$ be a graph and $G_1$ be obtained by adding one vertex adjacent to
all vertices of~$G$; then $G$ is a permutation graph if and only if $G_1$
is a circle graph.
We will see in Section~\ref{sec:hardness} that
the \EFO model checking problem is $W[1]$-hard for circle graphs,
and the \FO model checking
problem is $AW[*]$-complete already for {permutation graphs}.
However, there is also a positive result using a natural additional
parameterization.
\shortlong{The proof of it uses arguments similar to those of
Theorem~\ref{thm:CAtractable}.}{}

\begin{figure}[t]
$$
\begin{tikzpicture}[scale=\shortlong{0.8}{1.0}]
    \draw[dashed] (0,0) circle[radius=2cm];
    \draw (-90:2*0.95) --  (-90:2/0.95) node[below] {0};
    \draw (-130:2) -- (30:2);
    \node (b1) at (-130:2cm+0.7em) {$b_1$};
    \node (b2) at (30:2cm+0.7em) {$b_2$};
    \draw (150:2) -- (50:2);
    \node (d1) at (150:2cm+0.7em) {$d_1$};
    \node (d2) at (50:2cm+0.7em) {$d_2$};
    \draw (70:2) -- (-40:2);
    \node (f1) at (70:2cm+0.7em) {$f_1$};
    \node (f2) at (-40:2cm+0.7em) {$f_2$};
    \draw (-110:2) -- (170:2);
    \node (a1) at (-110:2cm+0.7em) {$a_1$};
    \node (a2) at (170:2cm+0.7em) {$a_2$};
    \draw (-170:2) -- (110:2);
    \node (c1) at (-170:2cm+0.7em) {$c_1$};
    \node (c2) at (110:2cm+0.7em) {$c_2$};
    \draw (130:2) -- (-20:2);
    \node (e1) at (130:2cm+0.7em) {$e_1$};
    \node (e2) at (-20:2cm+0.7em) {$e_2$};
    \draw (0:2) -- (-60:2);
    \node (g1) at (0:2cm+0.7em) {$g_1$};
    \node (g2) at (-60:2cm+0.7em) {$g_2$};
\end{tikzpicture}
\qquad
\begin{tikzpicture}[scale=\shortlong{0.4}{0.4}]
\draw[dashed] (0,0) node[below] {0} --
      (1,0) node[below] {$a_1$} --
      (2,0) node[below] {$b_1$} --
      (5,0) node[below] {$c_1$} --
      (6,0) node[below] {$a_2$} --
      (7,0) node[below] {$d_1$} --
      (8,0) node[below] {$e_1$} --
      (9,0) node[below] {$c_2$} --
      (11,0) node[below] {$f_1$} --
      (12,0) node[below] {$d_2$} --
      (13,0) node[below] {$b_2$} --
      (15,0) node[below] {$g_1$} --
      (16,0) node[below] {$e_2$} --
      (17,0) node[below] {$f_2$} --
      (18,0) node[below] {$g_2$} --
      (20,0) node[below] {$2\pi$};
      \draw (1,0) to [out=40,in=140] (6,0);
      \draw (2,0) to [out=70,in=180] (7.5,4.9) to [out=0,in=110](13,0);
      \draw (5,0) to [out=35,in=145] (9,0);
      \draw (7,0) to [out=40,in=140] (12,0);
      \draw (8,0) to [out=60,in=120] (16,0);
      \draw (11,0) to [out=45,in=135] (17,0);
      \draw (15,0) to [out=30,in=150] (18,0);
\end{tikzpicture}
$$
\caption{``Opening'' a circle representation
	(left; an intersecting system of chords of a circle)
	into an overlap representation
	(right; the depicted arcs to be flattened into intervals on the line).}
\label{fig:open-circlerep}
\end{figure}
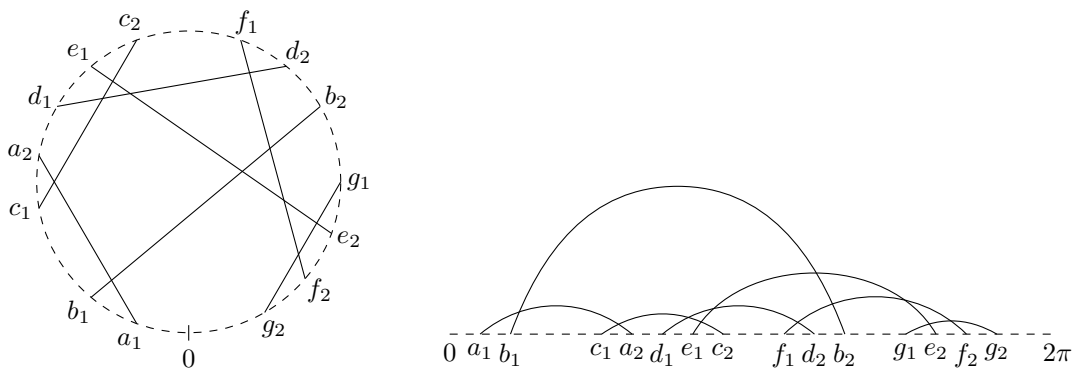

\begin{thm}\apxmark
\label{thm:CIRitractable}
The \FO model checking problem of circle graphs is FPT with respect 
to the formula and the maximum independent set size.
\end{thm}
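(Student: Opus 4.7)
The plan is to mimic the strategy used for Theorem~\ref{thm:CAtractable}: construct a bounded-width labelled poset $\ca P$ whose \FO theory interprets the edge relation of the circle graph, and then apply Theorem~\ref{thm:posetFPT} via Proposition~\ref{prop:reducetoposet}. Let $\alpha$ denote the maximum independent set size of~$G$; we treat $\alpha$ as a parameter (it can be computed in polynomial time on circle graphs, or simply read off the input since it will bound an easily checkable quantity).

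First I would compute, in polynomial time~\cite{zbMATH04089581}, a circle representation of $G$ and ``open'' the circle at a generic point $0$ that lies on no chord, as in Figure~\ref{fig:open-circlerep}. Each chord with endpoints at angles $\alpha<\beta$ becomes an interval $[\alpha,\beta]$ on the line, and two chords cross (i.e.\ are adjacent in $G$) if and only if their two intervals \emph{overlap properly}, meaning they intersect but neither contains the other. After a standard perturbation, we may assume no two intervals share an endpoint. Call the resulting overlap interval representation~$\ca B$.

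The crucial observation is that in an overlap representation, any chain of inclusion-nested intervals is an independent set of $G$ (any two nested intervals are non-adjacent in the circle graph). Hence $\ca B$ contains no chain of $\alpha+1$ nested intervals, so $\ca B$ is $\alpha$-fold proper. I can therefore apply Lemma~\ref{INTtoposet} to obtain a labelled poset $\ca P$ of width at most $\alpha+1$, together with formulas $\nu(x)$, $\psi(x,y)$, $\vartheta(x,y)$ such that $\psi$ expresses interval intersection and $\vartheta$ expresses interval containment.

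Finally I would define an \FO interpretation $I=(\nu,\psi_1)$ by
\[
\psi_1(x,y)\>\equiv\> \psi(x,y)\wedge\neg\vartheta(x,y)\wedge\neg\vartheta(y,x),
\]
which, by the characterization of overlap above, captures precisely the edges of the circle graph. Thus $G\cong I(\ca P)$, and combining Theorem~\ref{thm:posetFPT} with Proposition~\ref{prop:reducetoposet} finishes the proof. The main conceptual step is the first observation of the previous paragraph — identifying chains of nested intervals with independent sets of the circle graph — since this is what lets the bounded-width poset machinery from~\cite{gajarskyetal15} apply; the remaining steps are essentially routine given Lemma~\ref{INTtoposet} and the circular-arc proof.
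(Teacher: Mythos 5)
Your proposal is correct and follows essentially the same route as the paper's own proof: open the circle representation into an overlap interval representation, observe that nested intervals are non-adjacent so the nesting depth is bounded by the independence number, apply Lemma~\ref{INTtoposet}, and interpret the overlap relation as $\psi\wedge\neg\vartheta(x,y)\wedge\neg\vartheta(y,x)$. (One cosmetic nit: you reuse $\alpha$ both for the independence number and for chord endpoint angles.)
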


\shortlong{}{Our proof is again closely based on Lemma~\ref{INTtoposet}, 
as in the previous section.
\begin{proof}
Let $G$ be an input circle graph.
We use, e.g., \cite{zbMATH04089581} to construct a set of chords
$\ca C$ such that $G$ is the intersection graph of~$\ca C$.
Again, by a small perturbation, we may assume that no two ends of chords
coincide.
Every chord $a\in\ca C$ can be specified as a pair $a=(\alpha,\beta)$
where $\alpha,\beta\in[0,2\pi)$ are the angular coordinates
of the endpoints of~$a$.
We define a set $\ca B:=\{[\alpha,\beta]:(\alpha,\beta)\in\ca C\}$
of intervals on $[0,2\pi)$,
which is an overlap representation of~$G$.

Let $k>0$ be such that the set $\ca B$ is $k$-fold proper.
Then $k$ is a lower bound on an independent set size in~$G$.
From Lemma~\ref{INTtoposet} applied to~$\ca B$, we get a poset
$\ca P$, and the formulas $\nu,\psi,\vartheta$ depending on $k$.
By the definition of an overlap representation, we can write
$$
\sigma(x,y)\>\equiv\> \psi(x,y)\wedge \neg\vartheta(x,y)
		\wedge \neg\vartheta(y,x)
$$
such that $I=(\nu,\sigma)$ is an \FO interpretation
satisfying $G\simeq I(\ca P)$.
Let $\ell$ be the maximum independent set size in $G$
(which we do not need to explicitly know).
Then the width of $\ca P$ is at most $k+1\leq\ell+1$, and so
we again finish simply by Theorem~\ref{thm:posetFPT} and
Proposition~\ref{prop:reducetoposet}.
\end{proof}
}

An interesting question is whether `independent set size' in
Theorem~\ref{thm:CIRitractable} can also be replaced with `clique size'.
We think the right answer is `yes', but we have not yet found the algorithm.
At least, the answer is positive for the subclass of permutation graphs:

\begin{cor}\apxmark
\label{cor:PERMtractable}
The \FO model checking problem of permutation graphs is FPT with respect 
to the formula size, and either the maximum clique or the maximum independent set size.
\end{cor}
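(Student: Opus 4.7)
The plan is to reduce both parametrizations to Theorem~\ref{thm:CIRitractable} via two classical closure properties of permutation graphs: first, as noted in the excerpt, a graph $G$ is a permutation graph if and only if the graph $G_1$ obtained from $G$ by adding one universal vertex is a circle graph; second, permutation graphs are closed under complementation (for instance, by reversing the order of endpoints on one of the two parallel lines in any segment representation). Combined with trivial \FO reductions, these facts will give both cases.

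For the \emph{maximum independent set} parametrization, I would, given a labelled permutation graph $G$ and an \FO formula $\phi$, construct $G_1$ by adding one fresh vertex $u$ adjacent to every vertex of $G$ and equipped with a new label $\mathit{apex}$ (keeping all original labels on $V(G)$). By the first fact, $G_1$ is a circle graph; since $u$ is universal, $\alpha(G_1) = \alpha(G)$ whenever $|V(G)| \geq 1$. I would then define $\phi'$ from $\phi$ by recursively relativising every quantifier to non-apex elements, i.e., $\exists x\,\sigma$ becomes $\exists x\,(\neg\mathit{apex}(x)\wedge\sigma)$ and $\forall x\,\sigma$ becomes $\forall x\,(\neg\mathit{apex}(x)\to\sigma)$. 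Clearly $G \models \phi$ iff $G_1 \models \phi'$, and $|\phi'| = O(|\phi|)$, so invoking Theorem~\ref{thm:CIRitractable} on $(G_1,\phi')$ with parameter $\alpha(G_1) = \alpha(G)$ finishes this case.

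For the \emph{maximum clique} parametrization, I would first pass to the complement $\overline{G}$, which can be computed in polynomial time and, by the second fact, is again a permutation graph; moreover $\alpha(\overline{G}) = \omega(G)$. I would translate $\phi$ to a formula $\overline{\phi}$ by recursively substituting every atom $\mathit{edge}(x,y)$ with $\neg\mathit{edge}(x,y)\wedge x\neq y$, so that $G\models\phi$ iff $\overline{G}\models\overline{\phi}$ and $|\overline{\phi}| = O(|\phi|)$. Applying the previous case to $(\overline{G},\overline{\phi})$ now yields an FPT algorithm parametrised by $\alpha(\overline{G}) = \omega(G)$.

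No serious technical obstacle is expected here: both reductions are routine, and Theorem~\ref{thm:CIRitractable} carries the real weight. The only nontrivial ingredients are the two classical structural facts about permutation graphs recalled above. Note also that no permutation or circle representation of $G$ need be supplied on the input, since the algorithm underlying Theorem~\ref{thm:CIRitractable} constructs a circle representation internally.
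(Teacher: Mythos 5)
Your proof is correct, and it uses the same two structural ingredients as the paper — that permutation graphs are complementation-closed circle graphs, and that the maximum-clique parametrization can be moved to a maximum-independent-set parametrization by passing to the complement. The one genuine departure is your detour through the graph $G_1$ obtained by adding a universal apex vertex and relativising $\phi$ to the non-apex elements. That step is sound but unnecessary: the paper already states that permutation graphs form a \emph{subclass} of circle graphs, so for the independent-set parametrization you can simply feed $G$ itself to Theorem~\ref{thm:CIRitractable} with parameter $\alpha(G)$, with no apex vertex, no new label, and no quantifier relativisation. The ``add a universal vertex'' characterization you quote is an equivalence useful for recognising permutation graphs, not the thing needed here.

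The paper's own proof is shorter and organised slightly differently: it constructs a permutation representation of $G$ (which, by reversing one line, also yields a representation of $\overline G$), computes $\alpha(G)$ and $\alpha(\overline G)=\omega(G)$, and then runs Theorem~\ref{thm:CIRitractable} on whichever of $G,\overline G$ has the smaller independence number. This folds your two cases into a single algorithm that is simultaneously FPT in $\min(\alpha(G),\omega(G))$, which trivially dominates either individual parameter. Your version, treating the two parameters as separate cases, reaches the same conclusion; it merely does slightly more work than needed. Both your formula translation for complementation ($\mathit{edge}(x,y)\mapsto\neg\mathit{edge}(x,y)\wedge x\neq y$) and your closing remark that no representation need be supplied on input are correct.
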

\shortlong{}{\begin{proof}
Given a permutation graph $G$, we can efficiently construct its
representation~\cite{zbMATH03907788}.
Notice that reversing one line of this representation makes a representation
of the complement~$\overline G$.
Subsequently, an easy algorithm can compute, using the permutation
representations, the maximum independent sets of $G$ and of $\overline G$.
For the smaller one, we run the algorithm of Theorem~\ref{thm:CIRitractable}.
\end{proof}}

\begin{cor}\apxmark
The subgraph isomorphism (not induced) problem of permutation graphs 
is FPT with respect to the subgraph size.
\end{cor}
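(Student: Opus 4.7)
The plan is to reduce subgraph isomorphism on a permutation graph $G$ with pattern $H$ (where $k := |V(H)|$ is the parameter) to an instance already handled by Corollary~\ref{cor:PERMtractable}, by means of a simple clique-number dichotomy.

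The key observation is that if $\omega(G) \geq k$, then $H$ is automatically a (not necessarily induced) subgraph of $G$: inject $V(H)$ into any $k$ vertices of a maximum clique, and every edge of $H$ is forced to map to an edge of $G$ since all pairs in the clique are adjacent. Accordingly, the first step is to compute $\omega(G)$, which is classical on permutation graphs and polynomial-time (e.g., as a longest increasing subsequence in the defining permutation). If $\omega(G) \geq k$, output YES.

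In the remaining case $\omega(G) < k$, I would express subgraph isomorphism as the standard \EFO formula
$$
\phi_H \>\equiv\> \exists x_1 \dots \exists x_k \Bigl( \bigwedge_{1 \leq i < j \leq k} x_i \neq x_j \,\wedge\, \bigwedge_{v_iv_j \in E(H)} edge(x_i, x_j) \Bigr),
$$
of size $O(k^2)$, so that $G$ contains $H$ as a subgraph iff $G \models \phi_H$. Then invoke Corollary~\ref{cor:PERMtractable} on the pair $(G, \phi_H)$ using the maximum clique variant: the clique parameter is bounded by $k-1$ and the formula length is bounded by $O(k^2)$, so both depend only on $k$, and the overall running time is FPT in $k$.

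There is no substantial obstacle: the dichotomy is immediate, polynomial-time clique computation on permutation graphs is folklore, and the reduction to Corollary~\ref{cor:PERMtractable} is routine. The only design choice is to use the clique version of that corollary rather than the independent-set version, which is convenient precisely because the ``large clique'' side of the dichotomy gives a trivial YES.
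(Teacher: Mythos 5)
Your proposal is correct and matches the paper's own argument essentially verbatim: test whether $\omega(G)\geq |V(H)|$ (in which case any $k$-clique hosts $H$ and the answer is YES), and otherwise use the bounded clique number to invoke the clique-parameterized variant of Corollary~\ref{cor:PERMtractable} on the standard \EFO subgraph-isomorphism formula. No substantive differences.
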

\shortlong{}{%
\begin{proof}
For a permutation graph $G$ and parameter $H$, we would like to decide
whether $H\subseteq G$.
If $G$ contains a $|V(H)|$-clique (which can be easily tested on permutation
graphs), then the answer is `yes'.
Otherwise, we answer by Corollary~\ref{cor:PERMtractable}.
\end{proof}
}

\subsection{Box and disk graphs}\label{sec:boxdisk}

{\em Box (intersection) graphs} are graphs having an intersection
representation by rectangles in the plane, such that each rectangle (box) 
has its sides parallel to the x- and y-axes.
The recognition problem of box graphs is NP-hard~\cite{zbMATH03815706}, and
so it is essential that the input of our algorithm would consist of a
box representation.
{\em Unit-box graphs} are those having a representation by unit boxes.

The \EFO model checking problem is $W[1]$-hard already for unit-box
graphs~\cite{DBLP:conf/esa/Marx05},
and we will furthermore show that 
it stays hard if we restrict the representation to a small area
in Proposition~\ref{pro:EFOhardness}.
Here we give the following slight extension of Theorem~\ref{thm:INTtractable}:

\begin{thm}\apxmark
\label{thm:Boxtractable}
Let $G$ be a box intersection graph given alongside
with its box representation $\ca B$ such that the following holds:
the projection of $\ca B$ to the x-axis is a
$k$-fold proper set of intervals, and the projection of $\ca B$ to the
y-axis consists of at most $k$ distinct intervals.
Then \FO model checking of $G$ is FPT with respect to the parameters 
$k$ and the formula size.
\end{thm}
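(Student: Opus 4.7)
The plan is to reduce the problem to FO model checking on a bounded-width labelled poset via Proposition~\ref{prop:reducetoposet} and Theorem~\ref{thm:posetFPT}, using the same interpretation-through-Lemma~\ref{INTtoposet} recipe that drives Theorem~\ref{thm:CAtractable}. The key observation is that, since the y-coordinate contributes only $k$ distinct intervals, all y-information can be absorbed into unary labels on poset elements and compiled into the interpretation formula, leaving the nontrivial poset structure to come entirely from the x-axis.

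Concretely, I first project each box $b\in\ca B$ onto the x-axis, obtaining a $k$-fold proper interval set $\ca B_x$. After a standard small perturbation so that no two intervals share an endpoint (this preserves $k$-fold properness, since it only resolves coincidences without creating new inclusions), I apply Lemma~\ref{INTtoposet} to $\ca B_x$. This gives a labelled poset $\ca P$ of width at most $k+1$ together with formulas $\nu(x)$ and $\psi(x,y)$; in particular $\ca P\models\psi(x,y)$ iff the x-projections of the boxes $x,y$ intersect. Next, I enumerate the at most $k$ distinct y-projection intervals $Y_1,\dots,Y_k$ and decorate each element of $\ca P$ representing a box $b$ with an additional unary label $Y_i$ recording the y-projection of~$b$; this is legitimate since Theorem~\ref{thm:posetFPT} handles arbitrary labellings. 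I then precompute the symmetric set $T\subseteq\{1,\dots,k\}^2$ of index pairs with $Y_i\cap Y_j\neq\emptyset$.

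Two boxes $x,y$ intersect exactly when their x-projections intersect and their y-labels form a pair in $T$, which is expressed by the formula (of size depending only on the parameter $k$)
$$
\psi'(x,y) \,\equiv\, \psi(x,y) \,\wedge\, \bigvee_{(i,j)\in T} \big(Y_i(x)\wedge Y_j(y)\big).
$$
Then $I:=(\nu,\psi')$ is an \FO interpretation with $G\simeq I(\ca P)$, and Proposition~\ref{prop:reducetoposet} together with Theorem~\ref{thm:posetFPT} yields the conclusion. There is no real conceptual obstacle beyond noticing the asymmetry of the hypothesis: the x-axis hypothesis is the substantive one (it bounds poset width via Lemma~\ref{INTtoposet}), while the y-axis hypothesis is a boundedness-of-types condition that folds cleanly into labels and a constant-size table look-up, so no additional poset machinery is needed for the y-coordinate.
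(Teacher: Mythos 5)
Your proof is correct and follows essentially the same route as the paper's: project to the x-axis, apply Lemma~\ref{INTtoposet} to get a width-$(k+1)$ poset with formulas $\nu,\psi$, encode the at most $k$ y-intervals as unary labels, and conjoin $\psi$ with a constant-size disjunction over the intersecting label pairs to form the interpretation, concluding via Proposition~\ref{prop:reducetoposet} and Theorem~\ref{thm:posetFPT}. The only cosmetic difference is notation ($T$ versus an inline index set, $Y_i$ versus $L_i$).
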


\shortlong{}{%
\begin{figure}[t]
$$
\begin{tikzpicture}[scale=2.2]
\tikzstyle{every path}=[draw,color=red,
			fill={rgb,255:red,255; green,220; blue,220}];
\def\redbox{\draw (0.2,0.2) rectangle +(1,0.5) ;
	\draw (0.4,0.9) rectangle +(0.6,0.5) ;
	\draw (1.1,0.5) rectangle +(0.9,0.7) ;
	\draw (1.5,0.2) rectangle +(0.3,0.5) ;
	\draw (1.4,0.9) rectangle +(0.5,0.5) ;
}\redbox
\def\axes{
    \draw[->] (-0.2,0) -- (2.2,0) node[below] {$x$};
    \draw[->] (0,-0.2) -- (0,1.6) node[left] {$y$};
}
\def\interx{
    \draw[|-|] (-0.1,0.2) -- node[left,pos=0.2] {$t_1$} (-0.1,0.7);
    \draw[|-|] (-0.2,0.5) -- node[left] {$t_2$} (-0.2,1.2);
    \draw[|-|] (-0.1,0.9) -- node[left,pos=0.8] {$t_3$} (-0.1,1.4);
}
\tikzstyle{every path}=[draw,color=red,dashed,fill=none];
\redbox
\tikzstyle{every path}=[draw,color=black,fill=none];
\axes\interx
\end{tikzpicture}
\qquad
\begin{tikzpicture}
\tikzstyle{vecArrow} = [thick, decoration={markings,mark=at position
   1 with {\arrow[semithick]{open triangle 60}}},
      double distance=1.4pt, shorten >= 5.5pt,
         preaction = {decorate},
        postaction = {draw,line width=1.4pt, white,shorten >= 4.5pt}
        ];
\draw[vecArrow] (0,2.3) -- (0.8,2.3);
\node (a) at (0,0) {};
\end{tikzpicture}
\qquad
\begin{tikzpicture}[scale=\shortlong{1}{0.5}]
\draw (0,0) -- (0,9);
\draw (0,0) -- (-3,2) -- (0,4);
\draw (0,3) -- (-3,8) -- (0,9);
\draw (0,6) -- (1,6.5) -- (0,7);
\draw (0,5) -- (3,6.5) -- (0,8);
\draw (0,1) -- (3,1.5) -- (0,2);
\draw[dashed] (3,1.5) -- (3,6.5);
\draw (-3,2) -- (-3,8);
\foreach \x in {0,...,9} {
    \draw[fill=white] (0,1.0*\x)  circle[radius=5pt];}
\draw[fill=red] (-3,2)  circle[radius=5pt] node[left,black] {$L_1$};
\draw[fill=red] (-3,8)  circle[radius=5pt] node[left,black] {$L_2$};
\draw[fill=red] (1,6.5)  circle[radius=5pt] node[right,black] {$L_1$};
\draw[fill=red] (3,6.5)  circle[radius=5pt] node[right,black] {$L_3$};
\draw[fill=red] (3,1.5)  circle[radius=5pt] node[right,black] {$L_3$};
\end{tikzpicture}
$$
\caption{An illustration of
constructing a poset from the box representation with parameter $k=3$
(cf.~Theorem \ref{thm:Boxtractable});
the projection of the boxes to the x-axis is a $3$-fold proper interval representation,
and their projection to the y-axis consists of three intervals $t_1,t_2,t_3$.
The projected intervals on the x-axis give raise to a poset
of width $4$ on the right, where the highlighted points (red) represent the
boxes and the labels $L_1,L_2,L_3$ annotate their projected intervals on the
y-axis.}
\label{fig:construct-boxes}
\end{figure}
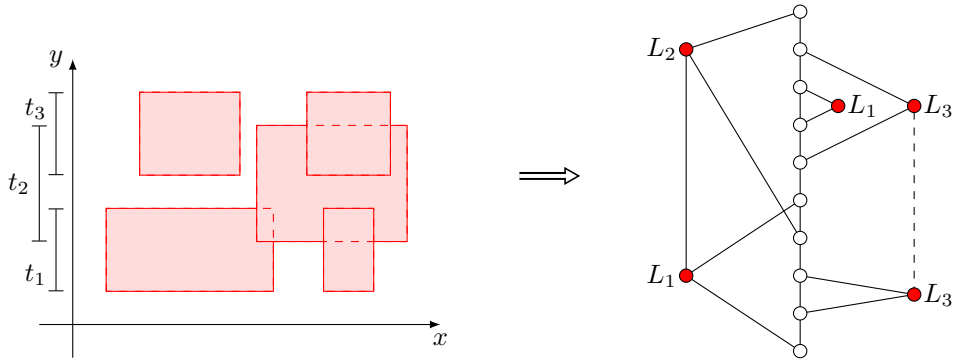

\begin{proof}
Let $\ca X$ be the set of intervals which are the projections of $\ca B$ to
the x-axis.
Again, we can, by a small perturbation in the x-direction, assume that no
two intervals from $\ca X$ share a common end.
Then we apply Lemma~\ref{INTtoposet} to~$\ca X$, and get a poset
$\ca P$ of width $\leq k+1$ and the formulas $\nu,\psi$ depending on $k$.
See Figure~\ref{fig:construct-boxes}.
In addition to the previous, we number the distinct intervals to which 
$\ca B$ projects onto the y-axis, as $t_1,\dots,t_\ell$ where 
$\ell\leq k$.
We give label $L_i$ to each box of $\ca B$ which projects onto~$t_i$.
Then we define
$$
\sigma(x,y)\>\equiv\>  \psi(x,y)\wedge\left[
	\bigvee\nolimits_{1\leq i,j\leq \ell:~ t_i\cap t_j\not=\emptyset}
		\big( L_i(x)\wedge L_j(y) \big)
\right],
$$
meaning that the projections of the boxes $x$ and $y$ intersect on the x-axis,
and moreover their projections onto the y-axis are also intersecting.
Hence, for the \FO interpretation $I=(\nu,\sigma)$, we have got
$I(\ca P)\simeq G$.
We again finish by Theorem~\ref{thm:posetFPT} and
Proposition~\ref{prop:reducetoposet}.
\end{proof}
Note that the idea of handling projections to the x-axis as an interval
graph by Lemma~\ref{INTtoposet} {\em cannot} be simultaneously applied to
the y-axis.
The reason is that the two separate posets (for x- and y-axes), 
sharing the boxes as their common elements, would not together form a poset.
Another strong reason is given in Corollary~\ref{cor:allhardness}c).
\smallskip
}

Furthermore, {\em disk graphs} 
are those having an intersection representation by disks in the plane.
Their recognition problem is NP-hard already with unit disks~\cite{kirk_unit_98},
and the \EFO model checking problem is $W[1]$-hard again for unit-disk
graphs by~\cite{DBLP:conf/esa/Marx05}.
Similarly to Theorem~\ref{thm:Boxtractable}, we have identified 
a tractable case of \FO model checking of unit-disk graph, based on
restricting the y-coordinates of the disks.
\shortlong{%
Due to space restrictions, we leave this case only for the full paper.
}{%

\begin{thm}
\label{thm:Disktractable}
Let $G$ be a unit-disk intersection graph given alongside
with its unit-disk representation $\ca B$ such that
the disks use only $k$ distinct y-coordinates.
Then \FO model checking of $G$ is FPT with respect to the parameters 
$k$ and the formula size.
\end{thm}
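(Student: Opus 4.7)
My plan is to mimic the proof of Theorem~\ref{thm:Boxtractable}: construct a labeled poset of width bounded in~$k$ in which $G$ is \FO-interpretable, and conclude via Proposition~\ref{prop:reducetoposet} together with Theorem~\ref{thm:posetFPT}. The key geometric observation is that two unit disks centered at $(x_v, y_i)$ and $(x_w, y_j)$ intersect iff $|x_v - x_w| \leq d_{ij}$, where $d_{ij} := \sqrt{4 - (y_i - y_j)^2}$ when $|y_i - y_j| \leq 2$ (and the disks are disjoint otherwise). Since only $k$ distinct y-coordinates occur, only $\binom{k+1}{2}$ distinct thresholds $d_{ij}$ arise, and each of them can be precomputed and hardcoded into the formulas we produce.

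After a small perturbation in the x-direction I may assume that all x-coordinates of disk centers are pairwise distinct and no two endpoints of the auxiliary intervals introduced below coincide. For every disk $D_v$ at y-level $y_{i(v)}$ and every y-level $y_j$ with $|y_{i(v)} - y_j| \leq 2$ I create an auxiliary interval $I_v^{(j)} := [x_v - d_{i(v),j}/2,\, x_v + d_{i(v),j}/2]$; this family includes the length-$2$ x-projection $I_v^{(i(v))}$, which I additionally mark as canonical with a label $C$. Each interval is further labeled by its length class --- the unordered pair $\{i(v), j\}$ --- and by the y-index $i(v)$ of its source disk. Intervals of the same length class form a proper subset, so the full collection is $\binom{k+1}{2}$-fold proper, and Lemma~\ref{INTtoposet} applied to it yields a labeled poset $\ca P$ of width at most $\binom{k+1}{2}+1$ together with formulas $\nu, \psi, \vartheta$ encoding membership, intersection, and containment.

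To turn this into an \FO interpretation I extend $\ca P$ by adding, per disk $D_v$, a disk-anchor endpoint $\delta_v$ at position $x_v$ into the existing $D$-chain (which does not affect the width), labeled by a distinguishing label $D$ and the y-index $i(v)$. The vertex formula $\nu'(x) := D(x)$ selects the disk anchors, and the edge formula $\psi'(\delta_v, \delta_w)$ asserts the existence of two auxiliary intervals of matching length class $\{i(v), i(w)\}$ --- one associated to $\delta_v$ and one to $\delta_w$ --- whose $\psi$-intersection holds. The main obstacle is the \FO identification of the auxiliary $I_v^{(j)}$ of a given disk $D_v$, because the natural geometric notion of being centered at $x_v$ is not directly expressible inside the poset. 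I plan to overcome this by exploiting the fact that all auxiliaries of $D_v$ form a unique maximal chain in the containment order $\vartheta$ through the anchor $\delta_v$, with a prescribed sequence of length-class labels determined solely by $i(v)$; together with additional left-/right-endpoint labels chosen so that the extremal endpoints of $I_v^{(j)}$ are the nearest $\{i(v),j\}$-labeled endpoints on each side of $\delta_v$ satisfying an appropriate containment test, this makes the identification of $I_v^{(j)}$ from $\delta_v$ expressible in \FO. Once this identification formula is validated --- the delicate step --- the interpretation satisfies $I(\ca P) \simeq G$, and Proposition~\ref{prop:reducetoposet} combined with Theorem~\ref{thm:posetFPT} delivers the desired FPT \FO model-checking algorithm.
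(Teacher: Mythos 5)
Your high-level plan — reduce the disk intersections to interval intersections carrying all the x-displacement information, feed these intervals through Lemma~\ref{INTtoposet}, and interpret $G$ in a poset of width bounded in $k$ — points in the same general direction as the paper, but your concrete construction has a genuine gap precisely at the step you yourself flag as ``delicate.''

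The gap is the \FO identification of the auxiliary $I_v^{(j)}$ from the disk anchor $\delta_v$. The claim that the auxiliaries of $D_v$ form ``a unique maximal chain in the containment order $\vartheta$ through the anchor $\delta_v$'' does not hold: $\delta_v$ is an element of the $D$-chain, not an interval, so $\vartheta$ does not involve it at all; and even the $\vartheta$-chain $I_v^{(j_1)}\subset I_v^{(j_2)}\subset\cdots$ of $D_v$'s own auxiliaries is not a maximal chain, since auxiliaries of nearby disks at other levels can extend it above and below or branch off it. The fallback ``nearest $\{i(v),j\}$-labelled endpoint on each side of $\delta_v$'' heuristic also fails: another disk $D_w$ at the \emph{same} y-level $i(v)$ with $x_w$ slightly less than $x_v$ has an auxiliary $I_w^{(j)}$ carrying the same length-class label $\{i(v),j\}$ and the same source label $i(v)$, and its right endpoint $x_w + d_{i(v),j}/2$ lies strictly between $x_v - d_{i(v),j}/2$ and $x_v$ whenever $x_w>x_v-d_{i(v),j}$. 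Since the poset records only order, not distances, nothing distinguishes your target endpoint from this intruder, and I do not see how to repair this without fundamentally changing the data structure.

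The paper avoids the problem entirely by a different construction. It slices horizontally: for each pair of levels $(i,j)$ it intersects the disks of $\ca B_i\cup\ca B_j$ with the line $y=\frac12(y_i+y_j)$, obtaining a \emph{proper} interval set $\ca X_{i,j}$, and applies Lemma~\ref{INTtoposet} to each such set separately. Crucially, the domain of each resulting poset $\ca P_{i,j}$ is taken to \emph{include the disk elements themselves} (by the ``natural correspondence'' of disks to intervals in $\ca X_{i,j}$), so when the $\ca P_{i,j}$ are unioned and transitively closed, all sub-posets are glued along the shared chain of disks (linearly ordered by x-coordinate); the per-pair endpoint chains get labels $D_{i,j}$. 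No ``auxiliary interval'' separate from the disk ever exists, so there is nothing to identify: the edge formula for two disks $u\in\ca B_i$, $v\in\ca B_j$ simply asserts that no $D_{i,j}$-labelled element lies strictly between $u$ and $v$ in the order. This yields width $k^2+1$ directly. If you want to salvage your variant, you would need a mechanism like the paper's that ties the per-level encodings of a disk to a single poset element rather than scattering them across independent interval elements.
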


\begin{proof}
For start, note that we cannot use here the same easy approach as in the
proof of Theorem~\ref{thm:Boxtractable}, since
one cannot simply tell whether two disks intersect from the intersection of
their projections onto the axes.
Instead, we will use the following observation:
if two unit disks, with the y-coordinates $y_1,y_2$ of their centers,
intersect each other, then they do so in some point at the y-coordinate
$\frac12(y_1+y_2)$.

By the assumption,
let $\ca B=\ca B_1\cup\dots\cup\ca B_k$ such that all disks in $\ca B_i$
have their centers at the y-coordinate $y_i$, for $i\in\{1,\dots,k\}$.
For each $i,j\in\{1,\dots,k\}$ (not necessarily distinct),
we define a set of intervals $\ca X_{i,j}$ which are the intersections of
the disks from $\ca B_i\cup\ca B_j$ with the horizontal line given by
$y=\frac12(y_i+y_j)$.
Note that $\ca X_{i,j}$ is proper since all our disks are of the same size,
and that two disks from $\ca B$ intersect if and only if their corresponding
intervals in some $\ca X_{i,j}$ intersect.
Again, by a standard argument of small enlargement and perturbation
of the disks, 
we may assume that all the interval ends in $\ca X_{i,j}$ are distinct.

Then we apply Lemma~\ref{INTtoposet} to each~$\ca X_{i,j}$, and get posets
$\ca P_{i,j}=(P_{i,j},\leq^{i,j})$ of width $2$.
By the natural correspondence between the disks of $\ca B_i\cup\ca B_j$
and the intervals of $\ca X_{i,j}$, we may actually assume
that $\ca B_i\cup\ca B_j\subseteq P_{i,j}$
and $\ca B_i\cup\ca B_j$ is linearly ordered in $\ca P_{i,j}$
according to the x-coordinates of the disks.
We linearly order~$\ca B$ by the x-coordinates of the disks
and, with respect to this ordering, we make the union
$\ca P:=\bigcup_{1\leq i,j\leq k}\ca P_{i,j}$
and apply transitive closure.
Then $\ca P$ is a poset of width~$k^2+1$.
We also give, for each $i,j$, a label $B_i$ to the elements of $\ca B_i$
in $\ca P$ and a label $D_{i,j}$ to the elements of 
$P_{i,j}\sem(\ca B_i\cup\ca B_j)$ in~$\ca P$.

It remains to define an \FO interpretation $I=(\nu,\psi)$
such that $I(\ca P)\simeq G$.
For that we straightforwardly adapt the formulas from Lemma~\ref{INTtoposet}:
\begin{align*}
\nu(x) \equiv\>& \bigvee\nolimits_{1\leq i\leq k}B_{i}(x)
\\
\psi(x,y) \equiv\>& \bigvee\nolimits_{1\leq i,j\leq k} \left[\,
	B_{i}(x)\wedge B_j(y)\wedge \vbox to 2.5ex{\vfill}\right.
\\	&\qquad\left.\vbox to 2.5ex{\vfill}
	 \forall z\, \left[ D_{i,j}(z) \to \left(
		(\neg\, x\leq^\PPP\!z \vee \neg\, z\leq^\PPP\!y) \wedge
		(\neg\, y\leq^\PPP\!z \vee \neg\, z\leq^\PPP\!x)
	\right)\right]\right]
\end{align*}
By the assigned labelling ($B_i$ and $D_{i,j}$),
$\ca P\models\psi(u,v)$ if and only if
there are $i,j$ such that $u\in B_i$, $v\in B_j$
and the corresponding intervals in $\ca X_{i,j}$ intersect.
That is, iff $uv\in E(G)$.
\end{proof}
}

\begin{figure}
 \centering
 \includegraphics[width=0.6\textwidth]{}
\medskip
 \caption{The operations for obtaining the clique-width of a graph, illustrated for (a) $K_4$, which has clique-width $2$, and (b) $P_5$,
 which has clique-width $3$.
  }
  \label{figcw}
\end{figure}

\subsection{Unbounded local clique-width}\label{sec:unboundedcw}

A $k$-labelled graph is a graph whose vertices are assigned integers (called
labels) from $1$ to $k$ (each vertex has precisely one label).  The {\em
  clique-width} of a graph $G$ equals the minimum $k$ such that $G$ can be
obtained using the following four operations: creating a vertex labelled 
$1$, relabeling all vertices with label $i$ to label $j$, adding all edges between
the vertices with label $i$ and the vertices with label $j$, and
taking a disjoint union of graphs obtained using these operations (see Figure \ref{figcw}).

We say that a graph class $\scri C$ is of {\em bounded local clique-width}
if there exists a function $g$ such that the following holds:
for every graph $G\in\scri C$, integer $d$ and every vertex $x$ of $G$,
the clique-width of the subgraph of $G$ induced on the vertices at distance
$\leq d$ from $x$ in at most~$g(d)$.

As noted above, the algorithm of~\cite{cmr00} for MSO on graphs of bounded 
clique-width implies fixed-parameter tractability of \FO model checking 
on graphs of bounded local clique-width via Gaifman's locality.
Though, the following opposite result was shown in \cite{GHKOST15}
(note that the considered graph class has bounded diameter, and so claiming
unbounded clique-width is enough):

\begin{prop}[{\cite[Proposition~5.2]{GHKOST15}}]
For any irrational $q>0$ there is $\ell$ such that the subclass of
interval graphs represented by intervals of lengths $1$ and $q$ on a line
segment of length $\ell$ has unbounded clique-width.
\end{prop}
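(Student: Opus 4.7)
The plan is to construct, for each $n$, an interval graph $G_n$ that admits a representation by intervals of lengths $1$ and $q$ inside a segment of bounded length $\ell = \ell(q)$ (independent of $n$), and to show that the clique-width of $G_n$ grows without bound in $n$.

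Since $q$ is irrational, Weyl's equidistribution theorem makes the fractional parts $\{iq \bmod 1\}_{i \in \mathbb{N}}$ dense and equidistributed in $[0,1)$. I would exploit this to pack many length-$1$ and length-$q$ intervals into a constant-size window with prescribed relative positions: for each $n$ choose index sets $I_n, J_n \subseteq \{1,\dots,N(n)\}$ of size $n$ and place length-$1$ intervals $A_i := [\{iq\},\,\{iq\}+1]$ for $i \in I_n$ and length-$q$ intervals $B_j := [\{jq\}+c,\,\{jq\}+c+q]$ for $j \in J_n$ and a fixed shift $c$; all endpoints lie in $[0,\ell]$ with $\ell = 2+c+q$ depending only on $q$. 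By equidistribution, $I_n$, $J_n$ and $c$ can be chosen so that the bipartite intersection pattern between $\{A_i\}$ and $\{B_j\}$ realizes a prescribed Sturmian-type configuration whose $n\times n$ adjacency matrix $M$ has $\mathbb{F}_2$-rank growing with $n$.

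The clique-width lower bound then follows from the chain $\mathrm{cw}(G_n) \geq \mathrm{rw}(G_n)$ (Oum--Seymour) together with the fact that rank-width dominates, up to a constant factor, the $\mathbb{F}_2$-cut-rank across any balanced bipartition; the cut-rank across $(\{A_i\},\{B_j\})$ is precisely $\mathrm{rank}_{\mathbb{F}_2}(M)$. The main obstacle, and the combinatorial heart of the proof, is certifying that the bipartite patterns actually realizable in our restricted two-length interval model have unbounded rank. This is exactly where the irrationality of $q$ is essential: for rational $q$ the fractional parts $\{iq\}$ are periodic, so only finitely many bipartite patterns arise (giving bounded rank and hence bounded clique-width), whereas for irrational $q$ the Sturmian subword complexity of $\{iq\}$ is unbounded, forcing the required growth of the cut-rank and therefore of $\mathrm{cw}(G_n)$.
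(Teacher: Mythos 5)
Your construction of the graphs via equidistribution of $\{iq \bmod 1\}$ is sound in spirit (two interval lengths, packing into a bounded window, engineering a high-complexity bipartite pattern between the length-$1$ family and the length-$q$ family), and it is similar in flavour to what one does here. The serious problem is the lower-bound step. You write that rank-width dominates, up to a constant factor, the $\mathbb{F}_2$-cut-rank across \emph{any} balanced bipartition, and then lower-bound only the cut-rank of the single bipartition $(\{A_i\},\{B_j\})$. That claim is false: rank-width is a min-max over branch-decompositions, so it lower-bounds only the \emph{minimum} of cut-rank over balanced bipartitions, not the cut-rank of a bipartition of your choosing. The path $P_n$ is a counterexample to your claim: its rank-width (and clique-width) is constant, yet the balanced bipartition into odd- and even-indexed vertices has cut-rank $\Theta(n)$. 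So establishing a single high-cut-rank bipartition, which is what your Sturmian/rank argument is set up to do, does not lower-bound rank-width (hence not clique-width) at all; you would instead need to show that \emph{every} balanced bipartition has high cut-rank, which is a much stronger and much harder statement about your $G_n$, and it is not addressed in the proposal.

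The argument the paper builds on (and which it abstracts into its Lemma~\ref{lem:unboundedcw}) works directly with the clique-width parse: in a width-$(k-1)$ construction there must be an intermediate labelled subgraph $G_1$ with $\tfrac13 n\le|V(G_1)|\le\tfrac23 n$, and one exhibits $k$ vertices of $G_1$ with pairwise distinct neighbourhoods into $V(G)\setminus V(G_1)$, which forces at least $k$ labels. This sidesteps rank-width entirely and, crucially, quantifies over \emph{all} balanced moments of the parse rather than over a single bipartition. The combinatorial engine there is the ``gradually connected'' shifting pattern between consecutive interval blocks (precisely what your Sturmian idea produces), together with a counting argument showing that any balanced $G_1$ must split some block badly. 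If you want to salvage your route, you would have to carry out an analogue of that counting for arbitrary balanced cuts; as written, the key inference from one high-rank cut to unbounded clique-width is a gap that breaks the proof.
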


This immediately implies unbounded local clique-width for classes
of $2$-fold proper circular-arc graphs and also for similar subclasses of
box and disk graphs, which justifies relevance of our new algorithms for 
\FO model checking.
Moreover, by an adaptation of the core idea of
\cite[Proposition~5.2]{GHKOST15} we can prove a much stronger negative result.
We start with a claim capturing the essence of the construction in
\cite[Proposition~5.2]{GHKOST15}.

Consider disjoint $m$-element vertex sets $X$ and $Y$ in a graph.
We say that $X$ is {\em gradually connected} with~$Y$
if there exist orderings $X=\{x_1,x_2,\dots,x_m\}$
and $Y=\{y_1,y_2,\dots,y_m\}$ such that, for any $i<j$,
$x_jy_i$ is an edge while $x_iy_j$ is not an edge (we do not care about
edges of the form $x_iy_i$).
Recall that a {\em transversal} of a set system $\{X_1,X_2,\dots,X_r\}$
is a set $Z=\{z_1,\dots,z_r\}$ of $r$ distinct elements
such that $z_i\in X_i$ for~$i=1,\dots,r$.

\begin{lem}\label{lem:unboundedcw}
Let $k$ be an integer.
Let $G$ be a graph and $V_1,V_2,\dots,V_r$ be a partition of the vertex set
of $G$ such that $|V_1|=|V_2|=\dots=|V_r|=m$, and $m>6kr$.
Assume that $V_i$ is gradually connected with $V_{i+1}$ for
$i=1,2,\dots,r-1$.
Furthermore, assume that there exists a set $I\subseteq\{1,\dots,r\}$,
$|I|=2k$, such that the following holds:
for any sets $X,Y$ such that $X$ is a transversal of the set system
$\{V_i:i\in I\}$ and $Y$ is a transversal of $\{V_{i+1}:i\in I\}$,
the set $X$ is gradually connected to~$Y$.
Then the clique-width of $G$ is at least~$k$.
\end{lem}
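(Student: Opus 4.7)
The plan is to assume for contradiction that the clique-width of $G$ is at most $k - 1$ and to derive a contradiction from a fixed $(k - 1)$-expression of $G$ with expression tree $T$. For each node $v$ of $T$, let $A_v \subseteq V(G)$ denote the set of vertices in the subtree rooted at~$v$. I would use the standard invariant that any two vertices of $A_v$ carrying the same label at the moment $v$ is evaluated must receive identical adjacencies to $V(G) \setminus A_v$ in every later operation; hence the vertices of $A_v$ induce at most $k - 1$ distinct \emph{outer neighborhoods} in $V(G) \setminus A_v$.

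The main step will be to locate a subtree of $T$ whose vertex set $A$ satisfies simultaneously: \textup{(i)} $V_i \cap A \neq \emptyset$ for every $i \in I$, and \textup{(ii)} $V_{i+1} \setminus A \neq \emptyset$ for every $i \in I$. Given such $A$, I would pick transversals $X \subseteq A$ of $\{V_i : i \in I\}$ and $Y \subseteq V(G) \setminus A$ of $\{V_{i+1} : i \in I\}$. By the lemma's transversal hypothesis, $X$ is gradually connected with $Y$; let $x_1', \dots, x_{2k}'$ and $y_1', \dots, y_{2k}'$ be orderings witnessing this. Then $\{y_s' : s < j\} \subseteq N(x_j') \cap Y \subseteq \{y_s' : s \leq j\}$, so $|N(x_j') \cap Y| \in \{j-1, j\}$. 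Consequently the $k$ vertices $x_1', x_3', \dots, x_{2k-1}'$ have outer-neighborhood sizes in the pairwise disjoint two-element sets $\{2\ell-2, 2\ell-1\}$ for $\ell = 1, \dots, k$, so they induce $k$ pairwise distinct outer neighborhoods in $V(G) \setminus A$---contradicting the bound $k - 1$ above.

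Locating a subtree $A$ satisfying (i) and (ii) is the technical heart of the proof. Let $J := I \cup (I+1)$, so $|J| \leq 4k$. I would treat $T$ as a binary tree by contracting unary operation nodes, and apply a tree-balancing argument to the $|J|$ indicator weights $\{w^{(j)} = \mathbf{1}_{V_j}\}_{j \in J}$, each of total mass~$m$. The plan is to descend from the root of $T$ while maintaining the invariant that $0 < |V_j \cap A_v| < m$ for every $j \in J$, at each union node choosing the child that preserves the invariant for the greatest number of weights. The hypothesis $m > 6kr$ provides the quantitative slack: $m$ is much larger than $|J|$ (bounded by $4k$) and the number of blocks~$r$, so that every potential full exhaustion of some $V_j$ along the descent can be avoided while all $|J|$ invariants are simultaneously kept. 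The terminating node $v$ then yields an $A_v$ in which each $V_j$ (for $j \in J$) is strictly split by the cut $(A_v, V(G) \setminus A_v)$, which directly gives both (i) and (ii).

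The main technical obstacle will be the simultaneous multi-weight tree-balancing step: the classical balancing lemma treats only a single weight function, whereas we need to control $|J| \leq 4k$ weights at once. I would adapt the argument of \cite[Proposition~5.2]{GHKOST15}, which handles an analogous balancing for interval graphs with two interval lengths, and carefully track how the slack $m > 6kr$ absorbs the loss at each descent step in order to simultaneously preserve all $|J|$ invariants.
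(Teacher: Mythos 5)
Your overall framework is correct and matches the paper's: bound the number of distinct ``outer neighbourhoods'' across a subtree cut by $k-1$, then exhibit $k$ vertices on one side of the cut with pairwise distinct outer neighbourhoods, using the alternating-index trick on a gradually connected pair $X$, $Y$ of size $2k$. That part of your argument is sound.

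The gap is in the step you yourself flag as the technical heart: producing a subtree $A_v$ with $V_i\cap A_v\neq\emptyset$ for all $i\in I$ and $V_{i+1}\setminus A_v\neq\emptyset$ for all $i\in I$. You propose a greedy multi-weight balancing descent (``at each union node choose the child that preserves the invariant for the most weights'') and assert that the slack $m>6kr$ makes this succeed, but you give no argument, and the strategy as stated does not work: once the descent commits to a child in which $V_j\cap A_v=\emptyset$ for some $j\in J$, that invariant can never be recovered, and nothing in a generic $(k-1)$-expression tree prevents an early forced choice from killing most of the invariants. Critically, your balancing sketch never invokes the hypothesis that consecutive blocks $V_i,V_{i+1}$ are gradually connected, yet that hypothesis is exactly what makes any balanced subtree automatically split every block.

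The paper takes a simpler route that avoids the multi-weight balancing entirely. Pick one subtree $G_1$ with $\tfrac13 n\leq|V(G_1)|\leq\tfrac23 n$ by the standard single-weight argument. Case~(a): if for some $i$ one has $\big||V_{i+1}\cap V(G_1)|-|V_i\cap V(G_1)|\big|\geq 2k$, then a pigeonhole on the two orderings witnessing that $V_i$ is gradually connected with $V_{i+1}$ produces $X\subseteq V_i\setminus V(G_1)$ and $Y\subseteq V_{i+1}\cap V(G_1)$ of size $2k$ with $X$ gradually connected to $Y$, giving the contradiction directly --- no transversal hypothesis needed. Case~(b): if all consecutive discrepancies are $<2k$, then since $m>6kr$ and $\tfrac13 n\leq|V(G_1)|\leq\tfrac23 n$, no block can be entirely inside or entirely outside $V(G_1)$, so every $V_j$ is strictly split; now the transversal hypothesis applies and yields the contradiction. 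Observe that case~(a) is doing the work your balancing argument was hoping to do: it is the gradual connectedness of consecutive blocks, not a clever tree traversal, that forces the cut profile $j\mapsto|V_j\cap V(G_1)|$ to vary slowly and hence the balanced cut to split every block. To fix your proof you would need to add this consecutive-block argument; once it is in place, the elaborate multi-weight balancing becomes unnecessary.
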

\begin{proof}
Let $G$ be an assumed graph on $n=rm$ vertices, but the clique-width of $G$ is at most $k-1$.
In the construction of $G$ using $k-1$ labels from the definition of clique-width,
a $(k-1)$-labelled subgraph $G_1$ of $G$ with 
$\frac13n\leq |V(G_1)| \leq\frac23n$ must have appeared.
We will now get a contradiction by showing a set of $k$ vertices
of $G_1$ which have pairwise different neighbourhoods in $G- V(G_1)$.

Suppose that there exists $i$ such that $|V_{i+1}\cap V(G_1)|-|V_{i}\cap V(G_1)|\geq2k$.
Then there are sets $X\subseteq V_{i}\setminus V(G_1)$ and
$Y\subseteq V_{i+1}\cap V(G_1)$, where $|X|=|Y|=2k$, such that
$X$ is gradually connected to $Y$ with respect to orderings
$X=\{x_1,\dots,x_{2k}\}$ and $Y=\{y_1,\dots,y_{2k}\}$.
Then the vertices $y_1,y_3,\dots,y_{2k-1}$ of $G_1$ have pairwise different
neighbourhoods in $G- V(G_1)$, as witnessed by $x_2,x_4,\dots,x_{2k}$.
The same applies if $|V_{i+1}\cap V(G_1)|-|V_{i}\cap V(G_1)|\leq-2k$.

The next step is to show that, for $i=1,\dots,r$,
it holds $\emptyset\not=V_{i}\cap V(G_1)\not=V(G_1)$.
Indeed, up to symmetry, let $V_{i}\cap V(G_1)=\emptyset$ for some~$i$,
which implies $|V_{j}\cap V(G_1)|<2kr<\frac13m$ for all $j\in\{1,\dots,r\}$ 
by the previous paragraph, and the latter contradicts our assumption 
$|V(G_1)|\geq\frac13n$.
For the assumed index set $I\subseteq\{1,\dots,r\}$,
we can hence choose sets $X$ a transversal of $\{V_i:i\in I\}$ 
and $Y$ a transversal of $\{V_{i+1}:i\in I\}$, such that
$X\cap V(G_1)=\emptyset$ and $Y\subseteq V(G_1)$.
Moreover, $|X|=|Y|=2k$ and $X$ is gradually connected to $Y$ 
by the assumption of the lemma. We thus again get a contradiction as above.
\end{proof}

\begin{prop}\label{prop:unboundedcw}
The following graph classes contain subclasses of bounded diameter
and unbounded clique-width:
\begin{itemize}\parskip0pt
\item unit circular-arc graphs of independence number $2$,
\item circle graphs of independence number $2$,
\item unit box and disk graphs with a representation contained within a
square of bounded size.
\end{itemize}
\end{prop}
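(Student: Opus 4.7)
The plan is to apply Lemma~\ref{lem:unboundedcw} separately to each of the three families. Bounded diameter is immediate in all three: for (a) and (b) independence number~$2$ forces diameter at most~$2$ (any two non-adjacent vertices must share a common neighbour, else an independent triple would arise); for (c) all objects lie in a bounded Euclidean region, so graph distances are bounded. The remaining task is, for every~$k$, to construct a graph in the respective class that admits an ordered partition $V_1,\dots,V_r$ satisfying the hypotheses of Lemma~\ref{lem:unboundedcw} with its clique-width lower bound at least~$k$.

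For case (c), the key geometric observation is that two unit disks with centre y-coordinates $0$ and $h\in(0,1)$ intersect precisely when the x-distance of their centres is at most $\sqrt{1-h^2}$, whereas same-height disks intersect when this x-distance is at most~$1$. Choosing $h$ so that $q:=\sqrt{1-h^2}$ is irrational reproduces the two-length interval representation underlying \cite[Proposition~5.2]{GHKOST15} inside a bounded rectangle, and the staircase partition from that construction directly supplies the hypothesis of Lemma~\ref{lem:unboundedcw}. For unit boxes an analogous construction using several carefully offset rows inside a bounded square plays the same role. For case (b), the plan is to transfer the interval construction of \cite[Proposition~5.2]{GHKOST15} to chords via the ``opening'' correspondence of Figure~\ref{fig:open-circlerep}; to force independence number~$2$ I would then augment the chord set with a constant-sized collection of ``universal-crossing'' chords that cross every other chord, destroying any potential independent triple while leaving the gradient hypothesis of the lemma intact for $r$ large.

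For case (a) I would fix the circle's circumference $C$ strictly less than~$3$, which automatically limits the independence number to at most~$2$ (three pairwise disjoint unit arcs would need total gap length $C-3<0$). All non-edges then arise from pairs of arcs whose start-distance falls in the narrow band $(1,C-1)$. The plan is to place the arc starts very finely within two nearly-antipodal regions of the circle and further stratify each region, so that the between-stratum non-adjacencies realise a nested staircase pattern of the form required by Lemma~\ref{lem:unboundedcw}, while both regions themselves remain cliques (keeping the independence number at~$2$).

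The main obstacle lies in case (a): the independence-number-2 restriction is extremely tight, leaving only a narrow band of non-edges within which the gradient structure must be encoded, and producing $r\to\infty$ parts requires an inventive use of the circular topology going beyond a naive two-cluster arrangement (which by itself would give only bounded clique-width, comparable to a half-graph). In cases (b) and (c) the adaptation of the \cite[Proposition~5.2]{GHKOST15} construction is comparatively direct; the subtlety there is merely that for~(b) the ``universal-crossing'' augmentation must be implementable by actual chords and must not create any accidental independent triple with the transferred GHKOST structure.
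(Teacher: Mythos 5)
Your high-level plan (apply Lemma~\ref{lem:unboundedcw} after constructing an ordered partition with the right gradient structure) is the same as the paper's, and your diameter observations are essentially fine (though for case~(a) the paper's construction actually yields diameter $3$, not $2$: independence number~$2$ forces every other vertex to be adjacent to at least one of a non-adjacent pair, but not necessarily to both, so the common-neighbour claim can fail). However, the proposal has a genuine gap precisely where you flag the difficulty: you never supply a construction for case~(a), and the issue you raise (that a two-cluster arrangement only produces a half-graph of bounded clique-width) is exactly the obstacle the paper's proof overcomes with a ``winding'' trick you did not find. The paper takes arcs of length $a=(2\pi+\delta)/3$, sets $V_1$ to be $m$ such arcs starting at angles $0,\varepsilon,\dots,(m-1)\varepsilon$ with $\varepsilon<\delta/m$, and lets $V_{i}$ be $V_1$ rotated by $(i-1)a$. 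Since three steps of $a$ wrap almost exactly once around the circle (surplus $\delta$), the sets $V_1,V_4,V_7,\dots$ are nearly coincident and collectively slide continuously, so $V_1\cup V_4\cup\cdots$ is gradually connected to $V_2\cup V_5\cup\cdots$ while every individual arc length exceeds $2\pi/3$, guaranteeing independence number~$2$. This triple-wrapping idea is the crux, and without it case~(a) is unresolved in your proposal.

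Your case~(b) plan is overcomplicated and, as stated, not verified: adding ``universal-crossing'' chords and opening the representation both require checking that no accidental independent triple arises and that the gradient hypothesis of the lemma survives, and you leave both unchecked. The paper instead simply replaces each arc of the case~(a) construction by the chord with the same endpoints; because all arcs have the same length there is no containment among them, so intersection of arcs coincides with crossing of chords, and the visibility graph is literally the same --- no augmentation is needed. Your case~(c) sketch (re-creating the two-length GHKOST interval graph via disks at two heights with $\sqrt{1-h^2}$ irrational) is plausible and closer in spirit to \cite[Proposition~5.2]{GHKOST15}, but it is again only a sketch; the paper gives a fully explicit stacked-rows construction (unit squares at $(i\varepsilon,i\varepsilon)$ and translated copies by $(1,\delta)$, $(\tfrac12,1+\delta)$, etc.) that directly realises the partition of Lemma~\ref{lem:unboundedcw} without appealing to irrationality.
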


\begin{figure}[t]
$$
\begin{tikzpicture}[scale=1.2]
\tikzstyle{every path}=[draw,color=gray];
\draw[dashed] (0,0) circle (15mm);
\tikzstyle{every path}=[draw,color=black, |-|];
\centerarc[] (0,0) (-90:33:1.55) ;
\centerarc[] (0,0) (-88:35:1.58) ;
\centerarc[] (0,0) (-86:37:1.61) ;
\centerarc[] (0,0) (-81:42:1.82) ;
\centerarc[] (0,0) (-79:44:1.85) ;
\centerarc[] (0,0) (-77:46:1.88) ;
\centerarc[] (0,0) (-72:51:2.09) ;
\centerarc[] (0,0) (-70:53:2.12) ;
\centerarc[] (0,0) (-68:55:2.15) ;
\centerarc[dotted,thick,-] (0,0) (-60:60:2.35) ;
\tikzstyle{every path}=[draw,color=red, |-|];
\centerarc[] (0,0) (33:156:1.64) ;
\centerarc[] (0,0) (35:158:1.67) ;
\centerarc[] (0,0) (37:160:1.70) ;
\centerarc[] (0,0) (42:165:1.91) ;
\centerarc[] (0,0) (44:167:1.94) ;
\centerarc[] (0,0) (46:169:1.97) ;
\centerarc[dotted,thick,-] (0,0) (65:171:2.15) ;
\tikzstyle{every path}=[draw,color=green, |-|];
\centerarc[] (0,0) (156:279:1.73) ;
\centerarc[] (0,0) (158:281:1.76) ;
\centerarc[] (0,0) (160:283:1.79) ;
\centerarc[] (0,0) (165:288:2.00) ;
\centerarc[] (0,0) (167:290:2.03) ;
\centerarc[] (0,0) (169:292:2.06) ;
\centerarc[dotted,thick,-] (0,0) (180:293:2.25) ;
\end{tikzpicture}
\qquad\qquad\qquad
\begin{tikzpicture}[scale=1.16]
\def\rectIII{rectangle +(2,2) ++(0.04,0.04) rectangle +(2,2) 
			++(0.04,0.04) rectangle +(2,2)}
\tikzstyle{every path}=[draw,color=green,
			fill={rgb,255:red,220; green,255; blue,220}];
	\draw (1,2.2) \rectIII ;
	\draw (1.2,2.4) \rectIII ;
\tikzstyle{every path}=[draw,color=red,
			fill={rgb,255:red,255; green,220; blue,220}];
	\draw (2,0.2) \rectIII ;
	\draw (2.2,0.4) \rectIII ;
	\draw (2.4,0.6) \rectIII ;
\tikzstyle{every path}=[draw,color=black,
			fill={rgb,255:red,220; green,220; blue,220}];
	\draw (0,0) \rectIII ;
	\draw (0.2,0.2) \rectIII ;
	\draw (0.4,0.4) \rectIII ;
\tikzstyle{every path}=[draw,color=red, dashed, fill=none];
	\draw (2,0.2) \rectIII ;
	\draw (2.2,0.4) \rectIII ;
	\draw (2.4,0.6) \rectIII ;
\tikzstyle{every path}=[draw,color=black, dashed,
		dash pattern=on 1pt off 3pt on 2pt off 0pt, fill=none];
	\draw (0,0) \rectIII ;
	\draw (0.2,0.2) \rectIII ;
\tikzstyle{every path}=[draw,color=green, dashed,
		dash pattern=on 0pt off 3pt on 3pt off 0pt, fill=none];
	\draw (1,2.2) \rectIII ;
	\draw (1.2,2.4) \rectIII ;
\tikzstyle{every path}=[draw,color=black, very thick,dotted]
\draw (1.8,1.8) -- (2.8,1.8) ;
\end{tikzpicture}
$$
\caption{An illustration of the constructions used in the proof of
Proposition~\ref{prop:unboundedcw}. Left: unit circular-arc graphs
for $m=3$ (this is not a valid value according to the proof, 
but proper $m=36k+1$ would not produce a comprehensible picture).
Right: unit box graphs for~$m=3$.
The arcs/boxes in black colour represent the sets $V_1,V_4,V_7,\dots$.}
\label{fig:unboundedcw}
\end{figure}
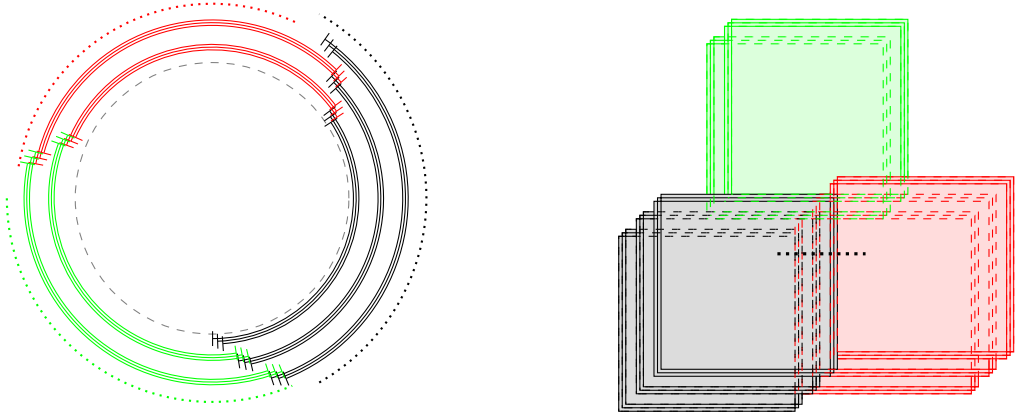

\begin{proof}
Our overall aim is to construct special intersection representations 
of graphs within the claimed classes, which have bounded diameters
and whose vertex sets can be partitioned into sets $V_1,V_2,\dots,V_r$ 
of properties assumed in Lemma~\ref{lem:unboundedcw}.
See also Figure~\ref{fig:unboundedcw}.

First, consider the circle of radius $1$ and ``unit'' arcs of fixed length 
$a=(2\pi+\delta)/3$ on this circle, for a sufficiently small~$\delta>0$.
Since $3a$ is more than the circumference of the circle,
there cannot be three disjoint arcs and the diameter of any
such intersection graph is at most~$3$.
Choose $r=6k$ and~$m=36k+1$,
and let $\varepsilon$ be such that $0<\varepsilon<\delta/m$.
Let $V_1$ consist of $m$  arcs of length $a$ starting at angles
$0,\varepsilon,\dots,(m-1)\varepsilon$, and let $V_i$ for $i=2,\dots,r$
be a copy of $V_1$ shifted by the angle $(i-1)a$ counterclockwise.
Clearly, $V_{i-1}$ is gradually connected with $V_i$.
Moreover, $V_4$ is in fact a copy of $V_1$ shifted by the angle $\delta$
counterclockwise, and analogously with $V_7,V_{10}$, etc.
Assuming $(r+1)\delta<3$, this means that the whole set
$V_1\cup\ V_4\cup\dots\cup V_{r-2}$ is gradually connected with
$V_2\cup\ V_5\cup\dots\cup V_{r-1}$, and this implies the conditions
of Lemma~\ref{lem:unboundedcw} with $I=\{1,4,\dots,r-2\}$.
Since $k$ can be chosen arbitrary, our graphs have unbounded clique-width.

The same construction can be used in the second case as well;
we simply replace each arc with a chord between the same ends,
and this circle representation would represent an isomorphic graph to the
previous case.

We have a similar construction also in the last two cases.
We again choose $r=6k$, $m=36k+1$ and $0<\varepsilon<\delta/m$.
Let the set $V_1$ consist of unit squares with the lower left corners at
coordinates $(i\varepsilon,i\varepsilon)$ where $i=0,1,\dots,m-1$.
Let $V_2$ be a copy of $V_1$ translated by the vector $(1,\delta)$
and $V_3$ be a copy of $V_1$ translated by $(\frac12,1+\delta)$.
For $j=3,6,\dots,r-3$, let the triple $V_{j+1},V_{j+2},V_{j+3}$
be a copy of $V_1,V_2,V_3$ translated by $(\delta,\delta)$.
Again, in the intersection graph, 
$V_{i-1}$ is gradually connected with $V_i$ for $i=1,2,\dots,r$.
Assuming $(r+1)\delta<\frac32$, we similarly fulfill the conditions
of Lemma~\ref{lem:unboundedcw} with $I=\{1,4,\dots,r-2\}$,
thus proving our claim of unbounded clique-width.
One can also apply a similar construction in the case of unit disk graphs.
\end{proof}

\section{Hardness for Intersection Classes}\label{sec:hardness}

Our aim is to provide a generic reduction for proving hardness of \FO model
checking (even without labels on vertices) using only a simple property 
which is easy to establish for many geometric intersection graph classes.
We will then use it to derive hardness of \FO for quite restricted forms of
intersection representations studied in our paper
(Corollary~\ref{cor:allhardness}).

We say that a graph $G$ {\em represents consecutive neighbourhoods} of
order~$\ell$, if there exists a sequence 
$S=(v_1,v_2,\dots,v_\ell)\subseteq V(G)$ of distinct vertices of $G$ and
a set $R\subseteq V(G)$, $R\cap S=\emptyset$,
such that for each pair $i,j$, $1\leq i<j\leq \ell$,
there is a vertex $w\in R$ whose neighbours in $S$ are precisely the
vertices $v_i,v_{i+1}\dots,v_j$.
(Possible edges other than those between $R$ and $S$ do not matter.)
A graph class $\ca G$ has the {\em consecutive neighbourhood
representation} property if, for every integer $\ell>0$,
there exists an efficiently computable graph $G\in\ca G$ 
such that $G$ or its complement $\overline G$ represents 
consecutive neighbourhoods of order~$\ell$.

Note that our notion of `representing consecutive neighbourhoods'
is related to the concepts of ``$n$-order property'' and
``stability'' from model theory (mentioned in Section~\ref{sec:intro}).
This is not a random coincidence, as it is known
\cite{DBLP:journals/ejc/AdlerA14}
that on monotone graph classes stability coincides with nowhere dense
(which is the most general characterization allowing for FPT 
\FO model checking on monotone classes).
In our approach, we stress easy applicability of this notion to 
a wide range of geometric intersection graphs and,
to certain extent, to \EFO model checking.

The main result is as follows.
A {\em duplication} of a vertex $v$ in $G$ is the operation of~adding a
{\em true twin} $v'$ to $v$, i.e., 
new $v'$ adjacent to $v$ and precisely to the neighbours of $v$ in~$G$.

\begin{thm}\apxmark\label{thm:allhardness}
Let $\ca G$ be a class of unlabelled graphs having the consecutive neighbourhood
representation property, and $\ca G$ be closed on induced subgraphs and
duplication of vertices.
Then the \FO model checking of $\ca G$ is $AW[*]$-complete
with respect to the formula size.
\end{thm}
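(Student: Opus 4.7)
The plan is to reduce from \FO model checking on the class of all graphs, which is $AW[*]$-complete by Theorem~\ref{thm:FOgenhard}. Given an input $(H,\phi)$ with $H$ a graph on vertices $u_1,\ldots,u_n$, I will build in polynomial time a graph $G\in\ca G$ and an \FO formula $\phi^*$ of size $O(|\phi|)$ with $H\models\phi$ iff $G\models\phi^*$. Containment of the problem in $AW[*]$ is inherited from Theorem~\ref{thm:FOgenhard}.

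For the construction, I invoke the consecutive neighbourhood property with order $\ell=n+2$ to obtain $G_0\in\ca G$ such that $G_0$ (or its complement $\overline{G_0}$) has a sequence $S=(v_1,\ldots,v_\ell)$ and witness set $R$ as in the definition. I encode the ``real'' vertices of $H$ as $u_i:=v_{i+1}$; the two padding positions $v_1$ and $v_\ell$ serve as boundary dummies so that a linear order on the real vertices will be \FO-definable cleanly on both ends. Using closure under induced subgraphs, I first keep only $S$ together with one chosen representative witness $w_{i,j}$ for each interval $[i,j]$ with $1\leq i<j\leq\ell$. Using closure under duplication, I then assign small, pairwise distinct twin-counts to four overlapping roles: the real $S$-vertices, the edge witnesses $w_{i+1,j+1}$ corresponding to the edges $u_iu_j\in E(H)$, the prefix witnesses $w_{1,k}$, and the suffix witnesses $w_{k,\ell}$; a vertex that falls under several roles receives the combined duplication.

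For $\phi^*$, after defining constant-size twin-counting predicates $S(x)$, $\mathrm{EdgeWit}(x)$, $\mathrm{PrefixWit}(x)$, $\mathrm{SuffixWit}(x)$, I recover a linear order on $S$ by
\[
\mathrm{LessEq}(x,y)\equiv S(x)\wedge S(y)\wedge \forall p\bigl(\mathrm{PrefixWit}(p)\wedge E(p,y)\to E(p,x)\bigr)\wedge \forall s\bigl(\mathrm{SuffixWit}(s)\wedge E(s,x)\to E(s,y)\bigr),
\]
where the pairing of a prefix and a suffix clause is what makes the order correct also on the end positions (hence the $n+2$ padding). The edge relation of $H$ is expressed as ``there exists $w$ with $\mathrm{EdgeWit}(w)$, $E(w,x)$, $E(w,y)$, and no $S$-vertex strictly outside $[x,y]$ (per $\mathrm{LessEq}$) adjacent to $w$''; since every witness's $S$-neighbourhood is a consecutive interval, this uniquely identifies $w=w_{i+1,j+1}$, which was duplicated precisely when $u_iu_j\in E(H)$. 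Finally $\phi^*$ is obtained from $\phi$ by relativising quantifiers to $S$ and substituting the edge-of-$H$ formula for every atom $E(x,y)$; in the $\overline{G_0}$ case we additionally swap $E(x,y)$ with $\neg E(x,y)\wedge x\neq y$ throughout the auxiliary formulas. This gives $|\phi^*|=O(|\phi|)$ in polynomial construction time.

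The main delicate point I foresee is verifying that the twin-counts really distinguish the intended roles after duplication, i.e., that no ``accidental'' twin already exists in the induced subgraph before we start duplicating. Any two selected witnesses are separated by their pairwise distinct $S$-intervals, and any two $S$-vertices are separated by some short-interval witness $w_{i,i+1}$; the remaining subcase --- an $S$-vertex having the same closed neighbourhood as a witness --- is ruled out by a degree argument, since an $S$-vertex is adjacent to $\Theta(\ell^2)$ witnesses while any single witness has only $O(\ell)$ neighbours in $S$. If $\ca G$ permitted a pathological $G_0$ in which an $R$-$R$ coincidence conspired against this bookkeeping, one more round of induced-subgraph trimming of the representatives would always suffice to restore uniqueness of the twin classes.
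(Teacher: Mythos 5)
Your proof follows the same overall strategy as the paper's: reduce from \FO model checking on all graphs (Theorem~\ref{thm:FOgenhard}) by constructing from $H$ a graph $G\in\ca G$ via the consecutive-neighbourhood witness, trimming to an induced subgraph, and replacing vertex labels with distinct twin-multiplicities using closure under duplication, so that the interpreting formulas can be written label-free. The realizations differ in detail only: the paper retains merely the successor witnesses $r_{i,i+1}$ (recovering positions in $S$ through an ``extreme neighbour'' test) together with the edge witnesses, whereas you retain all $\binom{\ell}{2}$ interval witnesses and recover the linear order on $S$ explicitly through prefix/suffix witnesses and a $\mathrm{LessEq}$ formula. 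Both are valid implementations of the same idea.

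Two points in your write-up are weaker than they should be. First, you never actually state $\nu$; as written, $S(x)$ either captures the two padding positions $v_1,v_\ell$ (so the interpreted structure has two spurious isolated vertices), or, if you leave $v_1,v_\ell$ unduplicated, they share twin-count $1$ with the unclassified interval witnesses you retained. This is fixable (e.g.\ $\nu(x)\equiv S(x)\wedge\neg\mathrm{min}(x)\wedge\neg\mathrm{max}(x)$ via $\mathrm{LessEq}$), but it must be said. Second, your ``degree argument'' for ruling out an $S$-vertex/witness accidental twin is not sound as stated: the consecutive-neighbourhood property only constrains $S$--$R$ adjacencies, so a witness may have arbitrarily many $R$-neighbours and an $S$-vertex arbitrarily many $S$-neighbours, and comparing ``$\Theta(\ell^2)$ witness-neighbours of an $S$-vertex'' to ``$O(\ell)$ $S$-neighbours of a witness'' does not compare the two closed neighbourhoods. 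The fallback phrase ``one more round of trimming'' is not an argument, since you must keep exactly the prefix, suffix, short-interval, and edge witnesses your formulas quantify over. (The paper asserts absence of accidental twins without a detailed argument too, but with a much smaller retained witness set, which makes the claim easier to check; your larger set makes the burden heavier, not lighter.) Both issues are patchable, but as written they are genuine gaps.
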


\shortlong{}{%
\begin{proof}
Our strategy is to prove that graphs in $\ca G$ 
can be used to represent any finite simple graph~$H$ ``via \FO'' --
using an \FO interpretation introduced in Section~\ref{sec:prelim}.
To this end, we give a pair of \FO formulas $I=(\nu,\psi)$ and
for any graph $H$, we efficiently construct graphs $G_H\in\ca G$ and 
$H'\simeq H$ such that $I(G_H)=H'$.
Precisely, the last expression means $V(H')=\{v: G_H\models\nu(v)\}$ and
$E(H')=\{uv: u,v\in V(H'),\, G_H\models\psi(u,v)\vee\psi(v,u)\}$.
Assuming this ($I(G_H)=H'\simeq H$) for a moment, 
we show how it implies the statement of the theorem.

Consider an \FO model checking instance on $\ca G$, 
parameterized by an \FO formula $\phi$.
We assume input $H$, and $I$ and $G_H\in\ca G$ as above,
and define an \FO formula $\phi^I$ recursively (cf.~Section~\ref{sec:prelim}):
every occurrence of ${edge}(x,y)$ is replaced by $\psi(x,y)\vee\psi(y,x)$,
every $\exists x\,\sigma$ is replaced by $\exists x\,(\nu(x)\wedge\sigma)$
and $\forall x\,\sigma$ by $\forall x\,(\nu(x)\to\sigma)$.
Clearly, $G_H\models\phi^I \Longleftrightarrow H\models\phi$.
The latter problem $H\models\phi$ is $AW[*]$-complete 
with respect to $|\phi|$ by Theorem~\ref{thm:FOgenhard}.
Since $|\phi^I|$ is bounded in $|\phi|$, we have got a parameterized
reduction implying that the \FO model checking problem of graphs from 
$\ca G$ is $AW[*]$-complete, too.
\medskip

Now we return to the initial task of defining the \FO
interpretation $I=(\nu,\psi)$ and constructing $G_H\in\ca G$ for given~$H$.
Let $V(H)=\{1,2,\dots,n\}$.
By the assumption, we can efficiently compute a graph $G_n\in\ca G$
that represents consecutive neighbourhoods of order~$n+2$,
as witnessed by a sequence $S=(v_0,v_1,\dots,v_n,v_{n+1})\subseteq V(G_n)$
and a set $R\subseteq V(G_n)$.
If it happened that, actually, the complement $\overline G_n$ represented
consecutive neighbourhoods, then we would simply switch to 
$\neg edge(x,y)$ in the formulas below.

For $0\leq i<j\leq n$, let $r_{i,j}\in R$ denote a vertex whose neighbours 
in $S$ are precisely $v_i,v_{i+1}\dots,v_j$.
Let $P:=\{r_{0,1},r_{1,2},\dots,r_{n,n+1}\}$
and $Q:=\{r_{i,j}: ij\in E(H)\}$ (it may happen that
$P\cap Q\not=\emptyset$, but $S\cap(P\cup Q)=\emptyset$).
We construct $G_H$ as the subgraph of $G_n$
induced on the vertex set $S\cup P\cup Q$.
By the assumption that $\ca G$ is closed on induced subgraphs,
we have got $G_H\in\ca G$.
Furthermore, we give labels `$blue$' to every vertex of $S$,
`$green$' to every vertex of $P$ and `$red$' to every vertex of $Q$
(those in $P\cap Q$ get both `$green$' and `$red$').

Using the labels,
construction of the desired \FO interpretation is now easy;
\begin{align}\label{eq:E-h-interpret}
\nu(x)&\equiv\> blue(x)\wedge
	\exists s,s' \big( s\not=s'\wedge
	green(s)\wedge green(s')\wedge edge(x,s)\wedge edge(x,s') \big)
,\\\nonumber
\psi(x,y)&\equiv\>  blue(x)\wedge blue(y)\wedge x\not=y
	\wedge \exists z \big[
		red(z)\wedge extreme(x,z)\wedge extreme(y,z)
	\big]
,\end{align}
where $\nu(x)$ is true precisely for $v_1,\dots,v_n$ of $S$, and
$extreme(x,z)$ in $\psi$ means that $x$ is one of the ``extreme'' neighbours
of $z$ within the sequence~$S$.
The point is that we can express the latter in \FO with help of the `$green$'
vertices which define the (symmetric) successor relation of~$S$ within the graph~$G_H$.
It is
\begin{align*}
extreme(x,z)\equiv~& edge(x,z)
\\&	\wedge \exists s,x' \big[
		green(s)\wedge blue(x')\wedge edge(x,s)
		\wedge edge(s,x')\wedge\neg edge(x',z)
	\big]
,\end{align*}
where the second line states that $x$ is connected to blue $x'$ via a green
vertex, such that $x'$ is not a neighbour of~$z$.
Altogether, for $I=(\nu,\psi)$ we easily verify $I(G_H)\simeq H$
(where the isomorphism maps each blue vertex $v_i$ to $i\in V(H)$).

The last step shows how we can get rid of the labels.
For that we use duplication of vertices (which preserves membership in 
$\ca G$ by the assumption).
For start, notice that no two vertices of $G_H$ can be twins by our construction.
Then every vertex in $P\sem Q$ is duplicated once, every vertex
in $P\cap Q$ is duplicated twice and every in $Q\sem P$ is duplicated three
times, forming the new graph~$G_H'\in\ca G$.

Regarding the formulas of $I$, we apply a corresponding transformation.
Start with a formula 
$twin(x,y) \equiv edge(x,y)\wedge \forall z [(z\neq x \wedge z\neq y)
 \rightarrow (edge(x,z)\leftrightarrow edge(y,z))]$
asserting that $x,y$ are true twins.
We can routinely write down formulas $dupl_d(x)$ asserting that
the vertex $x$ is a part of a class of $\geq d$ true twins, e.g.,
$dupl_2(x)\equiv \exists z (z\not=x\wedge twin(x,z))$
and $dupl_3(x)\equiv \exists z,z' (x\not=z\not=z'\not=x
 \wedge twin(x,z)\wedge twin(x,z'))$.
Then we transform $I=(\nu,\psi)$ into  $I'=(\nu',\psi')$ as follows
\begin{itemize}
\item $blue(x)$ is replaced with $\neg dupl_2(x)$,
\item $green(x)$ is replaced with $dupl_2(x)\wedge\neg dupl_4(x)$,
\item $red(x)$ is replaced with $dupl_3(x)$, and
\item $x=y$ is replaced with $twin(x,y)$.
\end{itemize}

One can routinely verify that again $I'(G_H')\simeq H$.
Moreover, $G_H'\in\ca G$ has been constructed in polynomial time from $H$,
and $G_H'$ carries no labels.
\end{proof}
}

\begin{figure}[t]
$$
\begin{tikzpicture}[scale=\shortlong{2.2}{2.8}]
\draw[dotted] (0,0) -- (1.5,0);
\draw[dotted] (0,1) -- (1.5,1);
\tikzstyle{every path}=[draw,color=black];
\foreach \x in {1,...,14} {
	\draw (0.1*\x,0) -- (0.1*\x,1) ;
}
\tikzstyle{every path}=[draw,color=red];
\draw (0.09,0) -- (0.31,1) ;
\draw (0.19,0) -- (0.51,1) ;
\draw (0.05,0) -- (1.45,1) ;
\draw (0.38,0) -- (0.52,1) ;
\draw (0.58,0) -- (0.92,1) ;
\draw (0.39,0) -- (1.22,1) ;
\draw (0.89,0) -- (1.21,1) ;
\draw (0.59,0) -- (1.31,1) ;
\draw (1.19,0) -- (1.41,1) ;
\end{tikzpicture}
\qquad\qquad\qquad
\begin{tikzpicture}[scale=\shortlong{1.1}{1.4}]
\tikzstyle{every path}=[draw,color=blue,
			 fill={rgb,255:red,220; green,220; blue,255}];
\def\bluebox{\foreach \x in {1,...,10} {
	\draw (0.1*\x,1-0.1*\x) rectangle (1+0.1*\x,2-0.1*\x) ;
}}\bluebox
\tikzstyle{every path}=[draw,color=red,
			fill={rgb,255:red,255; green,220; blue,220}];
\def\redbox{\draw (1.05,1.75) rectangle +(1,1) ;
	\draw (1.35,1.15) rectangle +(1,1) ;
	\draw (1.55,1.25) rectangle +(1,1) ;
	\draw (1.85,0.95) rectangle +(1,1) ;
}\redbox
\tikzstyle{every path}=[draw,color=black,fill=none];
\bluebox
\tikzstyle{every path}=[draw,color=red,fill=none];
\redbox
\end{tikzpicture}
$$
\caption{Constructing witnesses of the consecutive neighbourhood
representation property -- as permutation graphs (left)
and as unit-box graphs (right); cf.~Corollary~\ref{cor:allhardness}.}
\label{fig:construct-consecutiv}
\end{figure}
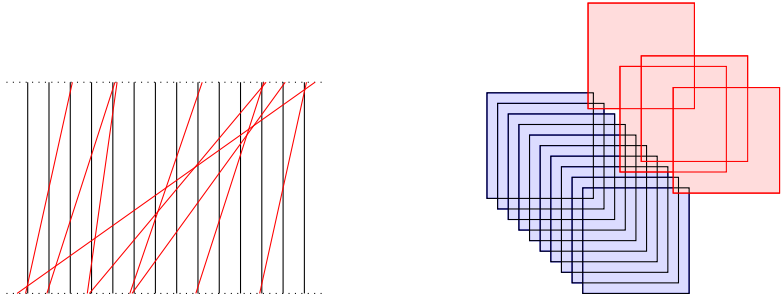

Graphs witnessing the consecutive neighbourhood representation property
can be easily constructed within our intersection classes,
even with strong further restrictions.
See some illustrating examples in Figure~\ref{fig:construct-consecutiv}.
So, we obtain the following hardness results:

\begin{cor}\apxmark
\label{cor:allhardness}
The \FO model checking problem is $AW[*]$-complete with respect to the formula size,
for each of the following geometric graph classes (all unlabelled):
\\a) circular-arc graphs with a representation consisting or arcs of lengths
from $[\pi-\varepsilon,\pi+\varepsilon]$ on the circle of diameter $1$,
for any fixed~$\varepsilon>0$,
\\b) connected permutation graphs,
\\c) unit-box graphs with a representation contained within a square of side
length $2+\varepsilon$, for any fixed~$\varepsilon>0$,
\\d) unit-disk graphs (that is of diameter~$1$) with a representation
contained within a rectangle of sides $1+\varepsilon$ and~$2$, 
for any fixed~$\varepsilon>0$.
\end{cor}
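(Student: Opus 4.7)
By Theorem~\ref{thm:allhardness}, it suffices to verify, for each of the four classes $\ca G$ listed in the corollary, that (i) $\ca G$ is closed under induced subgraphs and vertex duplication, and (ii) $\ca G$ has the consecutive neighbourhood representation property. Membership in $AW[*]$ is inherited from Theorem~\ref{thm:FOgenhard}, since each of the classes is a subclass of all graphs.

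Closure under induced subgraphs is immediate in cases (a), (c), (d) by simply deleting the corresponding geometric object from the representation. Case (b) is subtler, because deleting a vertex may disconnect a permutation graph; however, the graph $G_H$ built in the proof of Theorem~\ref{thm:allhardness} always contains the spanning path $v_0, r_{0,1}, v_1, r_{1,2}, \dots, v_{n+1}$ through $S\cup P$ and every vertex of $Q$ is adjacent to some $v_i$, so $G_H$ and the twin-expanded graph $G_H'$ are automatically connected. It therefore suffices to apply Theorem~\ref{thm:allhardness} to all permutation graphs (a class which is closed under induced subgraphs) and observe that the output graphs land in the connected subclass. Closure under vertex duplication is equally straightforward in all four cases: place a copy of the geometric object at an infinitesimal perturbation of the original; for small enough perturbation the set of objects it intersects is unchanged, and the perturbed copy still fits inside the prescribed bounded region.

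The substantive step is the construction, for every $\ell$, of a witness of the consecutive neighbourhood representation property inside each of the four classes. Figure~\ref{fig:construct-consecutiv} illustrates the idea in cases (b) and (c). For permutation graphs, place $\ell+2$ vertical segments $v_1,\dots,v_{\ell+2}$ as the base sequence $S$, and for each pair $1\le i<j\le \ell$ insert a diagonal segment running from just-left of the bottom endpoint of $v_i$ to just-right of the top endpoint of $v_j$; this diagonal crosses exactly the verticals $v_i,\dots,v_j$. For unit boxes, take the base to be a diagonal staircase of $\ell+2$ unit boxes shifted by a tiny $\delta$ in both coordinates (so all pairwise intersect), and realize each range vertex $r_{i,j}$ by another unit box at a suitably chosen diagonal offset, so that it intersects precisely the staircase boxes of indices $i,\dots,j$; for $\delta$ small enough as a function of $\ell$ the entire construction fits in a square of side $2+\varepsilon$. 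For circular-arc graphs, take as base $\ell+2$ near-semicircular arcs of length $\pi$ starting at equally spaced angles on the circle of diameter~$1$; each range arc is obtained by slightly shortening or lengthening one arc (using the $\varepsilon$ slack) and shifting its starting angle, so that its intersection with the base is exactly the consecutive window $\{v_i,\dots,v_j\}$. For unit disks in a $(1+\varepsilon)\times 2$ rectangle, place $\ell+2$ base disks close together along the long axis, and use the $\varepsilon$ slack in the short direction to calibrate each range disk so that it hits exactly the prescribed consecutive window.

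The main obstacle in all four cases is this calibration: each range object must intersect precisely the target consecutive subsequence of $S$ (and no other base element), while the whole representation still fits inside a very tight bounding region. In every case it boils down to choosing the base spacing $\delta$ small enough as a function of $\ell$ and then selecting the range-object parameters from an open neighbourhood of the target values; nonemptiness of these neighbourhoods is a routine continuity argument about the intersection relation of the geometric objects involved. Once the consecutive neighbourhood witnesses are in place, Theorem~\ref{thm:allhardness} directly delivers the claimed $AW[*]$-completeness.
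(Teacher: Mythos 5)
Your overall strategy is the paper's: invoke Theorem~\ref{thm:allhardness} after checking closure under induced subgraphs and duplication, and then exhibit, for each class, a witness of the consecutive neighbourhood representation property. Your closure arguments are fine (and your remark about case~(b) is actually more careful than the paper, which does not explicitly address that ``connected permutation graphs'' is not itself closed under induced subgraphs; your fix --- run the theorem on all permutation graphs and observe the output $G'_H$ is connected via the path $v_0,r_{0,1},v_1,\dots,v_{n+1}$ --- is the right one). Your sketches for (b), (c), (d) also match the paper's constructions.

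However, case~(a) contains a genuine gap. You propose that each range arc ``intersect'' precisely the consecutive window $\{v_i,\dots,v_j\}$ among the base arcs. This cannot be arranged with arcs of angular extent in $[\pi-\varepsilon,\pi+\varepsilon]$: two such near-semicircular arcs are disjoint only when they are nearly exactly antipodal (the permissible starting-angle window for disjointness has width $O(\varepsilon)$), so the intersection graph of any such family is nearly complete. In particular, a range arc cannot avoid the two non-contiguous groups $v_1,\dots,v_{i-1}$ and $v_{j+1},\dots,v_\ell$ while simultaneously hitting the arcs $v_i,\dots,v_j$ that sit between them in angular order, because the disjointness windows vary monotonically with the base arcs' positions. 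The definition of the consecutive neighbourhood representation property is phrased precisely to handle this: it allows either $G$ \emph{or its complement} $\overline G$ to realize the pattern. The paper's construction uses the complement: it takes $\ca S=\{[i\delta,\pi+i\delta]\}$ and $\ca N=\{[\pi+j\delta+\tfrac\delta2,\,i\delta-\tfrac\delta2]\}$, observes that the range arc for $(i,j)$ is \emph{disjoint} from exactly $v_i,\dots,v_j$, and then passes to the complement of the intersection graph. Your proposal never invokes the complement option and so does not close case~(a).
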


\shortlong{}{%
\begin{proof}
Each of the considered graph classes is routinely closed under induced
subgraphs and duplication.
Hence it is enough to construct, in each of the classes,
appropriate witnesses of the consecutive neighbourhood
representation property.

a) For an integer $n$ and $\delta:=\varepsilon/2n$,
we consider the sets of arcs 
$\ca S=\{[i\delta,\pi+i\delta]: i=1,2,\dots,n\}$
and $\ca N=\{[\pi+j\delta+\frac\delta2,\, i\delta-\frac\delta2]:
	1\leq i<j\leq n\}$.
The complement of the circular-arc intersection graph of $\ca S\cup\ca N$
represents consecutive neighbourhoods of order~$n$.

b) Let $x,y$ be two parallel lines.
We represent the line segments of a permutation representation
on $x,y$ by pairs $\langle x_i,y_i\rangle$ where $x_i,y_i$ are the
coordinates of the two ends on the lines $x,y$ respectively.
Our witness of order $n$ simply consists of the sets
$\ca S=\{\langle i,i\rangle: i=1,2,\dots,n\}$
and $\ca N=\{\langle i-\frac12,j+\frac12\rangle:
	1\leq i<j\leq n\}$,
as roughly depicted in Figure~\ref{fig:construct-consecutiv}.
\smallskip

c) As illustrated in Figure~\ref{fig:construct-consecutiv},
we specify $\ca S$ as the set of unit boxes $B_i$ with their lower corners
at coordinates $(i\delta,(n-i)\delta)$ where $i=1,2,\dots,n$
and $\delta:=\varepsilon/n$.
For any $1\leq i<j\leq n$, we introduce a unit box
with the lower left corner at 
$(1+i\delta-\frac\delta2, 1+(n-j)\delta-\frac\delta2)$,
which intersects exactly $B_i,B_{i+1},\dots,B_j$.
Let $\ca N$ denote the set of the latter boxes;
then the intersection graph of $\ca S\cup\ca N$
represents consecutive neighbourhoods of order~$n$.

d) We take the set $\ca S$ of unit disks $D_i$ (of diameter~$1$) 
with their centers at coordinates $(i\delta,0)$ where $i=1,2,\dots,n$     
and $\delta:=\varepsilon/n$.
Then, let $\ca N$ consists of the unit disks $D'_{i,j}$,
for all $1\leq i<j\leq n$, with centers at the coordinates
$(\frac12(i+j)\delta,h_{j-i})$ where $h_{d}<1$ is a suitable rational
(of small size) such that
$h_{d}^2+\frac14(d\delta)^2<1$ and $h_{d}^2+\frac14(d\delta+\delta)^2>1$.
Note that $D'_{i,j}$ intersects exactly $D_i,D_{i+1},\dots,D_j$,
and so the intersection graph of $\ca S\cup\ca N$
represents consecutive neighbourhoods of order~$n$.
\end{proof}
}

It is worthwhile to notice that for each of the classes listed 
in Corollary~\ref{cor:allhardness},
the $k$-clique and $k$-independent set problems are all easily FPT,
and yet \FO model checking is not.

\medskip
Finally, we return to the weaker \EFO model checking problem.
In fact, this problem can be treated ``the same'' as the aforementioned
parameterized induced subgraph isomorphism problem%
\shortlong{:
precisely, one of them admits an FPT algorithm on any given (unlabelled)
graph class if and only if the other does so.
\stepcounter{thm}
}{, which is a folklore result whose short proof we include for the sake of
completeness:
\begin{prop}
On any class $\ca G$ of simple unlabelled graphs,
the parameterized problems of induced subgraph isomorphism and of\/ \EFO model
checking are equivalent regarding FPT.
Precisely, one of them admits an FPT algorithm on $\ca G$ if and only if the
other does so.
\end{prop}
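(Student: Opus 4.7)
The plan is to prove both directions via natural parameterized reductions.

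One direction (induced subgraph isomorphism reduces to \EFO) is essentially the observation already made in Section~\ref{sec:prelim}: given a graph $H$ with $V(H)=\{1,\dots,m\}$ (the parameter), the sentence
\[
\phi_H \,\equiv\, \exists x_1,\dots,x_m \bigg[\, \bigwedge_{1\le i<j\le m}\!\! x_i\not=x_j \,\wedge \bigwedge_{ij\in E(H)} edge(x_i,x_j) \,\wedge \bigwedge_{\substack{1\le i<j\le m\\ ij\notin E(H)}}\!\! \neg\,edge(x_i,x_j) \,\bigg]
\]
is \EFO of size polynomial in $|V(H)|$, and $G\models\phi_H$ iff $G$ contains $H$ as an induced subgraph. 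Hence an FPT algorithm for \EFO model checking on $\ca G$ immediately yields an FPT algorithm for induced subgraph isomorphism on $\ca G$.

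For the converse, consider an \EFO sentence $\phi\equiv \exists x_1,\dots,x_k\,\psi(x_1,\dots,x_k)$ with $\psi$ quantifier-free over the atoms $edge(x_i,x_j)$ and $x_i=x_j$. First I would convert $\psi$ to disjunctive normal form $\bigvee_\ell \chi_\ell$; since $|\phi|$ is the parameter, even an exponential blow-up is absorbed into $f(|\phi|)$. For each disjunct $\chi_\ell$, process the equalities by merging variables in the same forced equivalence class, discarding the disjunct if some forced equality contradicts an asserted inequality. What remains is a set $V'$ of at most $k$ pairwise distinct merged variables, together with some pairs marked as edges, some as non-edges, and the remaining pairs unspecified. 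Enumerate all (at most $2^{\binom{k}{2}}$) completions filling in each unspecified pair as either an edge or a non-edge; each completion describes a fully-determined graph $H_{\ell,b}$ on $|V'|\le k$ vertices. Finally, call the assumed induced subgraph isomorphism oracle on each $H_{\ell,b}$ and answer YES iff at least one call succeeds.

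Correctness follows because $G\models\phi$ iff some assignment $x_1\mapsto v_1,\dots,x_k\mapsto v_k$ satisfies some disjunct $\chi_\ell$; after merging and completion, this assignment picks out an induced copy of the corresponding $H_{\ell,b}$ in~$G$. The total number of oracle calls and the sizes of the queried graphs depend only on $|\phi|$, so the algorithm remains FPT. The only ``hard part'' is bookkeeping --- one has to verify that DNF conversion, equality merging, and the branching over unspecified pairs all collapse into the $f(|\phi|)$ factor, and that the correspondence between satisfying assignments and completed graphs $H_{\ell,b}$ is exhaustive.
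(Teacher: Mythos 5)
Your first direction matches the paper's. The converse, however, has a genuine gap. After merging the variables that a disjunct $\chi_\ell$ \emph{forces} to be equal, you declare the merged classes ``pairwise distinct'' and pass a graph on these $|V'|$ vertices to the oracle. But a satisfying assignment of $\chi_\ell$ is under no obligation to send unmerged variables to distinct vertices: a pair with no equality, inequality, or $edge$ literal between them, or a pair constrained only by $\neg\,edge(x_i,x_j)$, may perfectly well collapse onto a single vertex. A concrete counterexample: take $\phi\equiv\exists x_1,x_2\,\neg edge(x_1,x_2)$ and $G=K_n$ (even $n=1$). Your procedure produces exactly one query graph, the edgeless $\overline{K_2}$, and the oracle answers NO; yet $K_n\models\phi$ by assigning $x_1=x_2$, since simple graphs have no loops. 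So the algorithm as described is not merely under-argued --- it returns the wrong answer.

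The fix is to also branch over \emph{all} partitions of $\{1,\dots,k\}$ consistent with the asserted (in)equality literals --- not only the coarsest partition that the equalities force --- and only then branch over edge completions on the blocks, evaluating $\chi_\ell$ against each candidate type before querying the oracle. This is precisely what the paper builds into its definition of $\ca H_\psi$ by allowing the tuple $(v_1,\dots,v_k)$ to contain repeated entries. With this extra branching the number of oracle calls is still bounded by a function of $|\phi|$, and the rest of your bookkeeping (DNF blow-up, contradiction pruning, edge completions) goes through unchanged.
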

\begin{proof}
In one direction, given a graph $H$, $|V(H)|=k$, we straightforwardly construct a
quantifier-free \FO formula $\phi_H(x_1,\dots,x_k)$ such that
$G\models\phi_H(x_1,\dots,x_k)$ iff $G[x_1,\dots,x_k]$ is isomorphic to~$H$,
and $|\phi_H|$ is bounded in~$k$.
Then $\exists x_1,\dots,x_k\> \phi_H(x_1,\dots,x_k)$ is an \EFO sentence
solving the $H$-induced subgraph isomorphism problem on~$\ca G$.

In the other direction, assume an \EFO formula 
$\psi\equiv\exists x_1,\dots,x_k\>\psi_1(x_1,\dots,x_k)$
where $\psi_1$ is quantifier-free.
For a fixed vertex set $V=\{v_1,\dots,v_k\}$ (note; some vertices in this
list might be identical), let $\ca H_\psi$ denote the finite
set of all simple graphs on~$V$ such that $\psi_1(v_1,\dots,v_k)$ holds true for them.
Then the $\psi$-model checking problem on a graph $G$ reduces to checking
whether, for some $H\in\ca H_\psi$, the pair $\langle G,H\rangle$ 
is a {\sc Yes} instance of induced subgraph isomorphism.
Since $|\ca H_\psi|$ is bounded in $k\leq|\psi|$, the result follows.
\end{proof}
}

The hardness construction in the proof of Theorem~\ref{thm:allhardness} 
can be turned into \EFO, but only if vertex labels are allowed%
\shortlong{}{ (notice that in the proof, we introduced the universal quantifier
only when we had to remove the labels)}.
Though, we can modify some of the constructions from
Corollary~\ref{cor:allhardness} to capture also \EFO without labels.

\begin{prop}\apxmark\label{pro:EFOhardness}
The \EFO model checking problem is $W[1]$-hard with respect to the formula size,
for both the following unlabelled geometric graph classes:
\\a) circle graphs,
\\b) unit-box graphs with a representation contained within a square of side
length $3$.
\end{prop}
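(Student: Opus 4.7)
By the equivalence between EFO model checking and parameterized induced subgraph isomorphism on unlabelled graph classes, stated immediately above Proposition~\ref{pro:EFOhardness}, it suffices to prove $W[1]$-hardness of parameterized induced subgraph isomorphism for each of the two classes. I plan to reduce from the $W[1]$-complete $k$-Clique problem on general graphs: given $H$ on $n$ vertices and parameter $k$, construct a graph $C_H$ in the target class together with a pattern graph $F_k$ of size polynomial in $k$, such that $H$ has a $k$-clique iff $C_H$ contains $F_k$ as an induced subgraph. Throughout, distinguishing vertex ``roles'' must come from graph structure alone (not labels) and be detectable by an existential formula.

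For part~(a) on circle graphs, I would place $n$ pairwise non-crossing ``base'' chords $b_1,\dots,b_n$ with endpoints in the cyclic order $b_1^L,b_1^R,b_2^L,b_2^R,\dots,b_n^L,b_n^R$, and for each edge $ij\in E(H)$ (with $i<j$) add an ``edge'' chord $e_{ij}$ whose endpoints lie strictly inside the arcs $(b_i^L,b_i^R)$ and $(b_j^L,b_j^R)$; a direct geometric check then shows $e_{ij}$ crosses exactly the two chords $b_i$ and $b_j$. To prevent spurious embeddings that confuse base and edge chords, I would attach a small distinguishing gadget (for instance a pendant chord, or two mutually crossing chords forming a $K_2$ attached to $b_i$) using further chords that cross only $b_i$. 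The pattern $F_k$ then prescribes $k$ base vertices each with its gadget, plus $\binom{k}{2}$ edge vertices where each is adjacent to exactly the two prescribed bases, together with the pairwise adjacencies among edge vertices that are forced by the construction (these depend only on the relative order of the selected base indices). The gadgets force bases of $F_k$ to map to bases of $C_H$, and the exact-$2$ adjacency then forces each edge vertex of $F_k$ to correspond to some $e_{i_s i_t}\in C_H$, which exists precisely when $i_s i_t\in E(H)$.

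For part~(b) on unit-box graphs within a $3\times 3$ square, pairwise disjoint base boxes are infeasible (at most four fit), so the exact-$2$ adjacency approach of part~(a) does not apply. Instead, I would adapt the consecutive-neighborhood construction of Corollary~\ref{cor:allhardness}(c), which fits in a $2+\varepsilon$ square, and use the remaining margin to attach distinguishing gadget boxes to the base boxes. The pattern $F_k$ would then encode the consecutive-neighborhood adjacency together with these gadgets. The \emph{main obstacle}, subtler here than in (a), is that the consecutive-neighborhood structure admits flexibility: an edge vertex $e_{st}$ of $F_k$ can a priori map to any $E_{ab}\in C_H$ with $a\in(i_{s-1},i_s]$ and $b\in[i_t,i_{t+1})$, not only to $E_{i_s i_t}$. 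To kill such spurious embeddings, I would reduce from the $W[1]$-hard Multicolored $k$-Clique instead, arranging the base boxes into $k$ colour groups identifiable via the gadgets and positioning them so that the unique $a=i_s$, $b=i_t$ are forced; the correspondence between induced embeddings of $F_k$ and genuine $k$-cliques of $H$ then becomes tight within the $3\times 3$ geometric constraint.
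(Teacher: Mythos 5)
Your route is genuinely different from the paper's. You invoke the EFO / induced-subgraph-isomorphism equivalence (the needed direction, iso $\Rightarrow$ EFO, is the easy one-shot formula) and then build, from a $k$-Clique instance $H$, a host $C_H$ in the target class and a pattern $F_k$ to be found as an induced subgraph. The paper instead stays inside the interpretation framework of Theorem~\ref{thm:allhardness}: it observes that the interpretation $I=(\nu,\psi)$ there is already purely existential as long as vertex labels are available (the universal quantifiers crept in only when labels were removed by twin-counting), and so it replaces the labels by \emph{additional structure} in the representation whose roles are EFO-recognizable --- pendant odd cycles $C_5,C_7,C_9$ for circle graphs, and a constant handful of extra boxes for unit-box graphs (three boxes meeting all base boxes, plus one extra box per successor pair). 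The $W[1]$-hardness then comes from expressing $k$-clique through $I$ by $\gamma_k \equiv \exists x_1,\dots,x_k\,[\bigwedge \nu(x_i)\wedge\bigwedge(\psi(x_i,x_j)\vee\psi(x_j,x_i))\wedge\dots]$. The crucial advantage is that the interpretation's ``extreme-neighbour'' device (``$x$ is a neighbour of $z$ having a green-successor that is not a neighbour of $z$'') reads off the endpoints of an edge without having to pin down which consecutive block an edge vertex represents --- exactly the flexibility problem you diagnose and struggle with.

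Concretely, in part~(a) your construction is under-verified at the decisive point. You assert that the pendant / $K_2$ gadgets ``force bases of $F_k$ to map to bases of $C_H$'', but this is not checked. A $K_2$ pendant does not obviously preclude a base position of $F_k$ mapping to an edge chord $e_{ij}$: the two gadget positions could in principle be absorbed by two edge chords that are mutually crossing, both crossing $e_{ij}$, and disjoint from the rest of the selected set. To make the argument go through you need to specify the gadget precisely and prove that no non-base chord can exhibit the gadget pattern inside an induced copy of $F_k$; as written this is a genuine gap, not merely a brevity issue.

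In part~(b) the gap is larger and you acknowledge it honestly. Your plan --- switch to Multicolored $k$-Clique, distinguish $k$ colour groups of base boxes by gadgets, and ``position them so that the unique $a=i_s$, $b=i_t$ are forced'' --- is exactly the hard part and is left completely unresolved: you neither give the gadget geometry within the $3\times 3$ budget, nor explain how colour information actually eliminates the $a\in(i_{s-1},i_s],\ b\in[i_t,i_{t+1})$ ambiguity, which is intrinsic to the consecutive-neighbourhood encoding whenever you try to read off edges as ``adjacent to exactly two prescribed bases.'' The paper avoids ever facing this question, which is why it can get by with only a few extra boxes. Your approach might be salvageable, but in its present form part~(b) is an outline of a hoped-for argument rather than a proof.
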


\shortlong{}{%
\begin{proof}
In the proof we carefully combine the respective constructions
from Corollary~\ref{cor:allhardness} with the first part of the proof of
Theorem~\ref{thm:allhardness}, so that universal quantifiers are avoided
-- this way we get the interpretation \eqref{eq:E-h-interpret} $I=(\nu,\psi)$
(labelled) which is actually \EFO.
Recall that $I$ is capable of interpreting any simple graph
$H$ in a suitable graph $G_H$ constructed in the considered class
in polynomial time, that is, $H\simeq I(G_H)$.

Then, in each of the considered cases, we will show an ad hoc modification of the
construction (see below) with the benefit of removing the colour labelling.
Before giving details of the modifications,
we show how the proof of $W[1]$-hardness is to be finished.

Consider the \EFO formula 
$$\gamma_k\>\equiv\> \exists x_1,\dots,x_k \left[
 	\bigwedge_{1\leq i\leq k} \nu(x_i) \wedge \!\!
 	\bigwedge_{1\leq i<j\leq k} \!\! x_i\not=x_j \wedge \!\!
 	\bigwedge_{1\leq i<j\leq k} \!\! \big(
         \psi(x_i,x_j)\vee \psi(x_j,x_i)  \big)\right]
;$$
 by the assumed interpretation, $G_H\models\gamma_k$ if and only if $H$ contains $k$
 vertices forming a clique, where the latter is a $W[1]$-hard problem with respect to~$k$.
 Since $|\gamma_k|$ is bounded in~$k$, this implies that
 the \EFO model checking instance $G_H\models\gamma_k$ 
 is also $W[1]$-hard with respect to $|\gamma_k|$, where $G_H$ is restricted
 to the considered graph class.

\smallskip
It remains to provide the ad hoc modified constructions
and the corresponding modifications of the formulas $\nu,\psi$ in~$\gamma_k$.

a) We turn the permutation witness (of consecutive neighbourhoods)
from Corollary \ref{cor:allhardness}b) into a circle representation
by joining the two parallel lines into one circle.
We then observe that no odd cycle $C_{2a+1}$ for $a\geq2$ is a permutation
graph since it does not have a transitive orientation, but
every odd cycle has a straightforward overlap representation.
Hence, if one wants to label a chord of a circle representation,
it is possible to do so by adding an adjacent small subrepresentation
of an odd cycle.

Namely, let $\ca D$ be the labelled circle representation of $G_H$
constructed for given~$H$ in the proof of Theorem~\ref{thm:allhardness}.
For each chord $a$ of $\ca D$ which has received label `$blue$',
we add a fresh copy of
(the representation of) $C_5$ with one vertex adjacent to~$a$.
We analogously add an adjacent copy of $C_7$ for every `$red$' chord 
and of $C_9$ for every `$green$' chord.
Let $\ca D'$ denote the new (unlabelled) circle representation
and $G_H'$ its intersection graph.
Since $G_H$ is actually a permutation graph by
Corollary~\ref{cor:allhardness}b),
the only induced $C_5,C_7,C_9$ in $G_H'$ are those later added ones.
Consequently, it is a routine task to express the predicate
$blue(x)$ in \EFO as `there exist vertices inducing $C_5$, and one is
adjacent to~$x$', and likewise for $red(x)$ and $green(x)$.
In this way, we get from $\gamma_k$ an \EFO formula $\gamma_k'$
such that $G_H\models\gamma_k$ if and only if $G_H'\models\gamma_k'$.

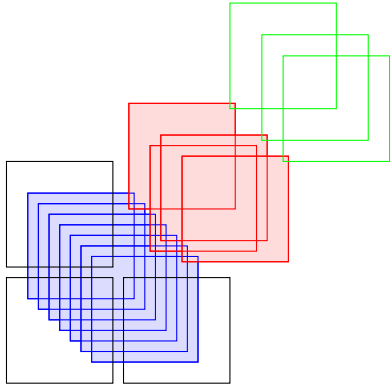
\begin{figure}[t]
$$
\begin{tikzpicture}[scale=1.4]
\tikzstyle{every path}=[draw,color=blue,
			 fill={rgb,255:red,220; green,220; blue,255}];
\def\bluebox{\foreach \x in {1,...,7} {
	\draw (0.1*\x,1-0.1*\x) rectangle (1+0.1*\x,2-0.1*\x) ;
}}\bluebox
\tikzstyle{every path}=[draw,color=red,
			fill={rgb,255:red,255; green,220; blue,220}];
\def\redbox{\draw (1.05,1.75) rectangle +(1,1) ;
	\draw (1.25,1.35) rectangle +(1,1) ;
	\draw (1.35,1.45) rectangle +(1,1) ;
	\draw (1.55,1.25) rectangle +(1,1) ;
}\redbox
\tikzstyle{every path}=[draw,color=blue,fill=none];
\bluebox
\tikzstyle{every path}=[draw,color=red,fill=none];
\redbox
\tikzstyle{every path}=[draw,color=black, fill=none];
	\draw (-0.1,0.1) rectangle +(1,1) ;
	\draw (-0.1,1.2) rectangle +(1,1) ;
	\draw (1.0,0.1) rectangle +(1,1) ;
\tikzstyle{every path}=[draw,color=green, fill=none];
	\draw (2.0,2.7) rectangle +(1,1) ;
	\draw (2.3,2.4) rectangle +(1,1) ;
	\draw (2.5,2.2) rectangle +(1,1) ;
\end{tikzpicture}
$$
\caption{Replacing explicit labels in the hardness construction
	of Proposition~\ref{pro:EFOhardness}b) -- adding the three black
	and several green boxes to the illustration in
	Figure~\ref{fig:construct-consecutiv} right.}
\label{fig:construct-consecutiv-EFO}
\end{figure}

b) This time we are not able to add ``local markers'' as in a),
since all boxes need to be of the same size.
Instead, we add just several new boxes to the whole
unit-box representation $\ca B$ from Corollary \ref{cor:allhardness}c).
See Figure~\ref{fig:construct-consecutiv-EFO};
the three black boxes are added to intersect precisely all the original blue
boxes, and one intersecting green box is added to every red box which,
in the proof of Theorem~\ref{thm:allhardness}, represents the successor
relation on blue boxes
(that is, which has received also label `$green$').

As one can easily check from the picture, we can now express the predicate
$blue(x)$ using \EFO as `there exist four independent neighbours of~$x$'
(this property is false for every other box type here).
Similarly, $red(x)$ can be expressed as `there exists a blue box
adjacent to $x$ and a blue box not adjacent to~$x$'.
Finally, $green(x)$ should be true for those red boxes which have a green
neighbour box, where a green box is characterised as having a blue and a red
{\em non}-neighbour.

The proof is then finished in the same way as in case a).
\end{proof}
}

One complexity question that remains open after Proposition~\ref{pro:EFOhardness} is
about \EFO on unlabelled permutation graphs
(for labelled ones, this is $W[1]$-hard by the remark after Corollary~\ref{cor:allhardness}).
While induced subgraph isomorphism is generally NP-hard
on permutation graphs by~\cite{HEGGERNES2015252},
we are not aware of results on the parameterized version,
and we currently have no plausible conjecture about its parameterized
complexity.

\section{Polygonal Visibility Graphs}\label{sec:visibility}

\subsection{Definitions}

Given a polygon $W$ in the plane, two vertices $p_i$ and $p_j$ of $W$ are said to be 
\emph{mutually visible} if the line segment $p_ip_j$ does not intersect the exterior of $W$.
The {\em visibility graph} $G$ of $W$ is defined to have vertices $v_i$
corresponding to each vertex $p_i$ of $W$,
and edge $(v_i,v_j)$ if and only if $p_i$ and $p_j$ are mutually visible. 
\shortlong{}{\par
Visibility graphs have been studied for several subclasses of polygons, such as orthogonal polygons, 
spiral polygons etc \cite{ec-rvgsp-90, ghks-ggpr-96, eow-sgra-84}. 
}%
Our aim is to study the visibility graphs of some special established classes of polygons 
with respect to \FO model checking.

If there is an edge $e$ of the polygon $W$, such that for any point $p$ of $W$, there is a point on $e$
that sees $p$, then $W$ is called a \emph{weak visibility polygon}, and $e$ is called
a \emph{weak visibility edge} of $W$
(Figure \ref{figtypes}a) \cite{g-vap-07, gmpsv-crwvp-93}.
A vertex $v_i$ of $W$ is called a \emph{reflex vertex} if the interior angle of $W$ formed at 
$v_i$ by the two edges of $W$ incident to $v_i$ is more than $\pi$.
Otherwise, $v_i$ is called a \emph{convex vertex}.
If both of the end vertices of an edge of $W$ are convex vertices, then the edge is called a \emph{convex edge}.

If the boundary of $W$ consists only of an x-monotone polygonal arc
touching the x-axis at its two extreme points, and an edge contained in the x-axis joining the two points,
then it is called a \emph{terrain} (Figure \ref{figtypes}b) \cite{g-vap-07, se-cgta-2002}. 
All terrains are weak visibility polygons with respect to their edge 
that lies on the x-axis.
If all points of a $W$ are visible from a single vertex $v$ of the polygon, then $W$ is called a \emph{fan}
(Figure \ref{figtypes}c) \cite{g-vap-07, gsbg-cmcvgsp-07}.
If $W$ is a fan with respect to a convex vertex $v$, then $W$ is called a \emph{convex fan} \cite{o-agta-87}.
If $W$ is a convex fan with respect to a vertex $v$, then both of the edges of $W$ incident to $v$ are convex edges,
and $W$ is also a weak visibility polygon with respect to any of them.
\begin{figure}[t]
\centering
\includegraphics[width=0.9\textwidth]{}
\caption{From left to right: (a) a weak visibility polygon with respect to edge
	$uv$; (b) a terrain; (c) a convex fan visible from the vertex $v$.}
 \label{figtypes}
\end{figure}

In this section we identify some interesting tractable and hard cases of the
\FO model checking problem on these visibility classes.

\shortlong{}{%

\subsection{Hardness for terrain and convex fan visibility graphs}
}

We first argue that the \FO model checking problem of polygon visibility
graphs stays hard even when the polygon is a terrain and a convex fan.
Our approach is very similar to that in Theorem~\ref{thm:allhardness} above,
that is, we show that a given \FO model checking instance of general graphs
can be interpreted in another instance of the visibility graph of a
specially constructed polygon which is a terrain and a convex fan at the same time.
\shortlong{}{%
However, since polygon visibility graphs are in general not closed on
induced subgraphs and duplication of vertices,
we have to reformulate all the arguments from scratch.
}

\begin{thm}\apxmark\label{thm:TerFan-hard}
The \FO model checking problem of unlabelled polygon visibility graphs
(given alongside with the representing polygon) is 
$AW[*]$-complete with respect to the formula size,
even when the polygon is a terrain and a convex fan at the same time.
\end{thm}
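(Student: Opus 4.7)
The plan is to adapt the interpretation-based reduction from the proof of Theorem~\ref{thm:allhardness} to the geometric setting. Since polygon visibility graphs are closed neither under induced subgraphs nor under vertex duplication, we cannot invoke that theorem as a black box; instead, for every simple graph $H$ we will construct a polygon $W_H$ that is simultaneously a terrain and a convex fan, together with a fixed FO interpretation $I=(\nu,\psi)$ such that, denoting by $G_H$ the visibility graph of $W_H$, we have $I(G_H)\cong H$ and $W_H$ is computable in polynomial time from $H$. Composing $I$ with the standard translation $\phi\mapsto\phi^I$ and invoking Theorem~\ref{thm:FOgenhard} then yields the claimed $AW[*]$-hardness.

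For the geometric construction I would place a designated vertex $v$ at the leftmost endpoint of the base of the terrain; this $v$ will play the role of the fan vertex. From $v$ the polygon boundary rises into an x-monotone chain that eventually descends to a rightmost endpoint on the x-axis. Along the chain we insert $n=|V(H)|$ designated convex ``peak'' vertices $p_1,\dots,p_n$ at strictly increasing x-coordinates on a steeply convex profile. For each non-edge $ij\notin E(H)$ with $i<j$ we insert a small reflex bump between $p_i$ and $p_j$ on the upper chain, positioned so that it intersects the chord $p_ip_j$ (blocking this particular visibility) while lying strictly below every line $vp_k$ (preserving the line of sight from $v$ to every peak, hence the convex-fan property). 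A coordinate argument shows that for a sufficiently steep peak profile there is enough vertical room between the chords $p_ip_j$ and the lines $vp_k$ to accommodate all $\binom{n}{2}$ potential blockers independently of one another.

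To encode $H$ in the unlabelled visibility graph, the interpretation $I$ must identify peaks and $v$ by FO-definable structural properties. The vertex $v$ is distinguished as the unique vertex that sees everyone (a dominating vertex of $G_H$), hence is FO-definable. To separate peaks from reflex bump vertices I would attach to each peak a small rigid gadget---for instance a triangular tower with two extra convex vertices just above each peak---whose local visibility signature within a terrain plus convex fan is shared by no reflex bump and no other auxiliary vertex. Then $\nu(x)$ states that $x$ participates in such a gadget, and $\psi(x,y)$ states that $x,y$ are peaks that are mutually visible, so $I(G_H)\cong H$.

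The principal obstacle is geometric feasibility of the simultaneous requirements: (i) x-monotonicity of the upper chain, (ii) visibility from $v$ to every vertex (convex fan), (iii) free control of each peak-peak visibility, and (iv) uniform FO-distinguishability of the peak gadgets. The delicate interaction is between (ii) and (iii): each reflex bump must block exactly the intended peak pair and no other, and must never hide any vertex from $v$. This is handled by choosing the peaks on a steep convex arc so that the chords between peaks lie well below the lines from $v$ to the peaks, and by localizing each bump in a tiny neighborhood of its intended chord; the peak gadgets are then inserted so small that they do not interfere with any chord or fan line. Once the geometric construction is verified, the FO-logical part of the argument is routine and mirrors the proof of Theorem~\ref{thm:allhardness}.
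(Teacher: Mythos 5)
Your plan---realizing $H$ directly as the visibility relation among the peak vertices by blocking exactly the non-edge chords---founders on a structural constraint of terrains that no placement of blockers can overcome. Suppose, as you propose, the peaks lie on an upward-bulging chain so that by default they all see one another, and you wish to block $p_ip_j$. Any bump that rises above the chord $p_ip_j$ at some $x_0\in(i,j)$ necessarily also rises above the chord $p_ip_b$ for every $b>j$, because these two chords share the endpoint $p_i$ and, the chain being concave as a function, the longer chord has the smaller slope and thus lies strictly below the shorter one on $(i,j)$; symmetrically the bump kills $\{a,j\}$ for all $a<i$. More abstractly, every terrain satisfies the \emph{order claim}: if $a<b<c<d$ and $p_a$ sees $p_c$ and $p_b$ sees $p_d$, then $p_b$ sees $p_c$. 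Hence already the $4$-vertex graph with edges $\{1,3\}$ and $\{2,4\}$ but not $\{2,3\}$ cannot be the visibility graph induced on four peaks of a terrain, no matter what auxiliary geometry you add. So independent control of peak-peak visibility is impossible and the interpretation $I$ has nothing to read off. (The ``triangular tower'' gadgets have the separate problem that in an x-monotone chain you cannot place vertices ``just above'' a peak, and any extra convex vertices near a peak create visibilities to many other vertices that you would also have to tame while preserving both the terrain and the convex-fan property---this is waved away rather than argued.)

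The paper circumvents this by \emph{not} realizing $H$ among the $p_i$ at all. The $p_i$ are placed on an increasing convex arc $C_1$ so that only consecutive ones see each other; each edge $e_k=(v_i,v_j)$ of $H$ is instead witnessed by an auxiliary vertex $s^2_k$ on a second arc $C_2$ whose neighbourhood among the points $p_0,\dots,p_{n+1}$ is exactly the consecutive block $p_i,\dots,p_j$ (and, unlike its companions $s^1_k,s^3_k$, misses $p_{n+1}$). The edge relation of $H$ is recovered by an FO formula that, for blue $x,y$, asks for a red $z$ whose blue neighbourhood has $x$ and $y$ as its two extremes---precisely the consecutive-neighbourhood / extreme-neighbour device of Theorem~\ref{thm:allhardness}. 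Labels are then removed using a designated twin pair rather than local gadgets. Your write-up would need to abandon the non-edge-blocking idea and rebuild the geometry along these lines.
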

\shortlong{%
\addtocounter{thm}{2}%
}{%

\begin{figure}[t]
\centering
\includegraphics[width=1.0\textwidth]{}
\caption{The constructed polygon for a graph with $n=5$ vertices and edges $(v_1, v_3)$, $(v_1, v_5)$, $(v_2, v_4)$, $(v_2, v_5)$, $(v_3, v_4)$ and $(v_3, v_5)$. 
The purple vertex $u$ sees all vertices of $W$, while the brown vertices $v$ and $v'$ see only $u$, $u'$, 
all the blue vertices and each other. Each blue vertex $p_i$, $ 1 \leq i \leq 5$ represents the vertex $v_i$ of the graph. 
 }
 \label{fighard}
\end{figure}

\begin{proof}
 Consider a given graph $H$ with $n$ vertices and $m$ edges.
 We construct our polygon $W$ as follows (see Figure~\ref{fighard}):
 Consider an increasing, convex curve $C_1$ with respect to the x-axis.
 We mark $n+3$ points $p_0, p_1, p_2, \ldots, p_n, p_{n+1}$ and $w$ on $C_1$ from left to right.
 Each of the points will later be a vertex of the polygon, and $p_i$, $1 \leq i \leq n$
 will represent the vertices $v_i$ of the given graph~$H$.

 From $w$ onwards, we consider a decreasing convex curve $C_2\ni w$.
 For each ray $\overrightarrow {p_{i-1}p_{i}}$, $1 \leq i \leq n+1$, denote the point of intersection 
 of $C_2$ and $\overrightarrow {p_{i-1}p_{i}}$ by $q_{i}$.
 In the arc of $C_2$ between $q_i$ and $q_{i+1}$, we arbitrarily choose $n$
 pairwise disjoint subarcs $D_{i,j}$, $1\leq j\leq n$, of positive length.
 Now, for each edge $e_k = (v_i,v_j)\in E(H)$, $i < j$,
 we choose a point $s^1_k\in C_2$ arbitrarily in the interior of $D_{i,j}$.
 From a point slightly above $p_j$ on $C_1$, we start a ray that intersects $C_2$ at $s^1_k$.
 Now we mark a second point $s^2_k$ on this ray a tiny distance to the right of $s^1_k$
 (notice that $s^2_k$ is slightly above $C_2$).
 Finally, we drop a vertical ray downward from $s^2_k$ to intersect $C_2$ at a third point $s^3_k$.
 Note that these distances should be so small that also $s^3_k$ belongs to $D_{i,j}$.
 This ensures that among the $p_i$'s, $s^2_k$ sees exactly all points $p_i, p_{i+1},\ldots, p_j$,
 and $s^2_k$ is visible from any point below itself to the left.

 To finish the construction of $W$,
 we mark a point $w'$ on $C_2$ to the right of all the points marked so far.
 We drop two vertical rays $\overrightarrow{r_1}$ and $\overrightarrow{r_2}$ downward from $p_0$ and $w'$ respectively.
 We consider a point slightly above $p_{n+1}$ on $C_1$ and draw the lower tangent $\overrightarrow{r_3}$ from it to $C_2$. 
 Denote the point of intersection of $\overrightarrow{r_2}$ and $\overrightarrow{r_3}$ as $u'$.
 Intersect $\overrightarrow{wp_{n+1}}$ with $\overrightarrow{r_1}$, say, at point $x$. 
 Intersect $\overrightarrow{r_1}$ and $\overrightarrow{r_3}$ with a horizontal line below both $x$ and $u'$.
 Denote the point of intersection of $\overrightarrow{r_1}$ and the horizontal line as $u$.
 Mark a point $v$ slightly to the left of the intersection of the horizontal line and $\overrightarrow{r_3}$,
 so that $v$ cannot see any point on $C_2$. Mark another point $v'$ vertically
 slightly above $v$ on $\overrightarrow{r_3}$.

Now we draw the polygon by starting from $p_0$ and drawing the
polygonal boundary by connecting successive points embedded on
$C_1$ and then $C_2$ (including points $s^2_k$) from left to right. 
We complete $W$ by connecting with edges the remaining points in the
sequence $(w', u', v', v, u, p_0)$.
We summarise the properties of the resultant polygon $W$ and its visibility graph~$G$:
\begin{itemize}
\item
$W$ is a terrain with respect to $uv$ and a convex fan with respect to $v$,
\item
no two points among $\{p_0,p_1,\ldots,p_{n+1}\}$ see each other except the consecutive pairs,
\item
for every $1\leq k\leq m$, the points $s^1_k,s^3_k$ see a consecutive
strip of $\{p_0,p_1,\ldots,p_{n+1}\}$ including $p_{n+1}$,
while $s^2_k$ can see $p_i, p_{i+1},\ldots, p_j$ but neither
$p_{i-1},p_{j+1}$ nor $p_{n+1}$, and
\item
the vertices $v$ and $v'$ are true twins in $G$ -- they see the same neighbourhood
which (except~$v,v'$) is $\{p_0,p_1,\ldots,p_{n+1},u,u'\}$,
and there is no other twin pair in~$G$.
\end{itemize}

\begin{claim}
Construction of the polygon $W$ can be finished in polynomial time.
\end{claim}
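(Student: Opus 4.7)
The plan is to make every step of the above construction explicit over the rationals with numbers of polynomially bounded bit-size, and to count the total number of primitive operations. First, I would instantiate the two convex curves $C_1,C_2$ by concrete algebraic arcs; for example, $C_1$ as a piece of $y=x^2$ and $C_2$ as a piece of $y=M-x^2$ for a suitable integer $M$, both truncated to bounded rational intervals. With such explicit curves, every intersection of a line through two rational points with $C_1$ or $C_2$, as well as the lower tangent from a rational point to $C_2$, is the solution of a quadratic equation with rational coefficients and can be computed in polynomial time in the number of bits of the input.

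Second, I would fix the abscissae of $p_0,p_1,\dots,p_{n+1},w$ on $C_1$ as $n+3$ evenly spaced integers. This makes the rays $\overrightarrow{p_{i-1}p_i}$, their intersection points $q_i$ with $C_2$, and the $n$ equal-length sub-arcs $D_{i,1},\dots,D_{i,n}$ of the piece of $C_2$ between $q_i$ and $q_{i+1}$ all explicitly rational and pairwise disjoint, with lengths bounded from below by $1/\mathrm{poly}(n)$. For each edge $e_k=(v_i,v_j)$ I would place $s^1_k$ at the midpoint of $D_{i,j}$ and quantify the two occurrences of ``slightly'' by a single rational displacement $\eta=2^{-p(n)}$, where $p(n)$ is a polynomial chosen large enough so that the subsequent shifts keep $s^2_k$ and $s^3_k$ within the desired $D_{i,j}$. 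The outer vertices $w',u,u',v,v'$ are constructed by a constant number of further line-line or line-tangent intersections.

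Third, I would give an explicit lower bound, polynomial in $1/n$, on the slack in each visibility requirement listed after the construction (no non-consecutive $p_i,p_{i'}$ see each other; $s^2_k$ sees $p_i,\dots,p_j$ but not $p_{i-1},p_{j+1},p_{n+1}$; $v$ sees no point on $C_2$; $v$ and $v'$ are true twins). Each requirement reduces to a strict inequality between polynomial expressions in the fixed integer grid and in $\eta$, so by choosing $p(n)$ larger than the common denominator of these inequalities we can guarantee that all visibility conditions hold for the rational construction as well.

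The main obstacle, as in any such geometric reduction, is precisely this precision control: the ``tiny distance'' perturbations used to place $s^2_k$ and $v'$ must be small enough to preserve all intended visibilities, yet representable by a rational of polynomial bit-length. Once $p(n)$ is fixed via the polynomial inequalities above, the whole polygon has $O(n+m)$ vertices with rational coordinates of $\mathrm{poly}(n)$ bits, and each vertex is obtained by a constant number of rational arithmetic and quadratic solve operations, so the total running time is polynomial in $n+m$.
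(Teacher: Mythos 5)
Your proposal is correct and takes essentially the same approach as the paper: both instantiate $C_1,C_2$ as explicit quadratic curves, place $p_0,\dots,p_{n+1},w$ at grid points, and quantify the ``tiny'' perturbations by rational quantities of polynomially bounded bit-length, fixed by the slack in the strict visibility inequalities. One small slip worth noting: the construction requires $C_2$ to be a \emph{decreasing convex} curve (the paper uses a branch of $y=(x+c)^2$), whereas your $y=M-x^2$ is concave; this is a cosmetic substitution and does not affect your precision-control argument.
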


Since the constructed visibility graph $G$ is clearly of polynomial size with
respect to given~$H$, we only need to show that we can finish our construction
of $W$ with rational coordinates of sufficiently small size.
To argue this, we choose suitable curves $C_1,C_2$ such as quadratic
functions $y=(x+c)^2$ for appropriate values of~$c$.
We pick $p_0, p_1, \ldots, p_n, p_{n+1}$ and $w$ as grid points on $C_1$.
The positions of $q_1,\dots,q_{n+1}$ are computed only approximately
(they are not vertices of $W$ anyway), and then we choose the subarcs
$D_{i,j}$ with suitable (small) rational coordinates.
Subsequent choices of $s^1_k,s^2_k,s^3_k$ can also be done with
rational coordinates of small size, for $1\leq k\leq m$.
The remaining vertices of $W$ follow easily.

\begin{claim}
There exists a pair of \FO formulas $I=(\nu,\psi)$ (an \FO interpretation)
such that, for any given graph $H$, the resultant visibility graph~$G$ (as
above) satisfies $H\simeq I(G)$.
\end{claim}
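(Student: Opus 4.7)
My plan is to give an FO interpretation $I=(\nu,\psi)$ over the edge relation of $G$ alone (since $G$ is unlabelled) such that $I(G)\simeq H$, and then transfer the $AW[*]$-hardness of Theorem~\ref{thm:FOgenhard} through $I$ exactly as in the first part of the proof of Theorem~\ref{thm:allhardness}. I will proceed in three stages: first identify the "skeleton" vertices $v,v',u$ purely structurally, then FO-define a predicate $\mathit{blue}$ picking out $\{p_0,\dots,p_{n+1}\}$ and its interior $\nu$ picking out $\{p_1,\dots,p_n\}$, and finally define $\psi$ to fire precisely when some $s^2_k$ vertex witnesses a corresponding edge of $H$.

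By the last summarised property, $\{v,v'\}$ is the unique pair of true twins in $G$, so I can FO-identify it via $\mathit{twin}(x,y)\equiv edge(x,y)\wedge\forall z((z\ne x\wedge z\ne y)\to(edge(x,z)\leftrightarrow edge(y,z)))$; the vertex $u$, which by the first bullet sees all other vertices, is likewise the unique universal vertex, captured by $\mathit{univ}(x)\equiv\forall y(y\ne x\to edge(x,y))$. Since the common neighbourhood of the twin pair (excluding $u$ and the other twin) equals $\{u',p_0,\dots,p_{n+1}\}$, it remains to separate $u'$ from the blue-path vertices: by the second bullet the blues induce a path $p_0\!-\!p_1\!-\!\cdots\!-\!p_{n+1}$ inside the common neighbourhood, while a short geometric check (using that $u'$ lies on the tangent $\overrightarrow{r_3}$ near $p_{n+1}$) shows $u'$ has at most one blue neighbour. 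This yields $\mathit{blue}(x)$ in FO, and then $\nu(x)\equiv\mathit{blue}(x)\wedge\exists y_1\exists y_2(y_1\ne y_2\wedge\mathit{blue}(y_1)\wedge\mathit{blue}(y_2)\wedge edge(x,y_1)\wedge edge(x,y_2))$ selects exactly the interior blues $\{p_1,\dots,p_n\}$, since the endpoints $p_0,p_{n+1}$ have just one blue neighbour each.

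For $\psi(x,y)$ I exploit that each edge $e_k=(v_i,v_j)\in E(H)$ is witnessed by $s^2_k$, whose blue-neighbourhood equals the strip $\{p_i,\dots,p_j\}\subseteq\{p_1,\dots,p_n\}$; conversely, by the third bullet every other non-blue vertex ($u,v,v',s^1_k,s^3_k$, and by the geometry also $w,w',u'$) is adjacent to $p_0$ or $p_{n+1}$. I therefore take $\psi(x,y)$ to be the conjunction of $\nu(x)$, $\nu(y)$, $x\ne y$, and the existence of $z$ satisfying: $\neg\mathit{blue}(z)$; $edge(z,x)$ and $edge(z,y)$; no "endpoint" blue $w$ (with $\mathit{blue}(w)\wedge\neg\nu(w)$) is adjacent to $z$; and each of $x,y$ has a blue neighbour that is not adjacent to $z$. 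The last clause forces the endpoints of $z$'s contiguous blue-strip to coincide with $x$ and $y$, so $G\models\psi(p_i,p_j)\vee\psi(p_j,p_i)$ holds iff some $s^2_k$ has strip exactly $[p_i,p_j]$, iff $(v_i,v_j)\in E(H)$. Hence $H\simeq I(G)$.

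The main obstacle is the case-by-case verification that no stray non-blue vertex of $G$ accidentally satisfies the witness clause in $\psi$: one must check, for each of $u,u',v,v',w,w',s^1_k,s^3_k$, that at least one of the four conjuncts above fails. For $u,v,v',s^1_k,s^3_k$ this is immediate from the summary bullets, for $w$ from its polygon-boundary adjacency with $p_{n+1}$, for $u'$ from the construction of $\overrightarrow{r_3}$ as a tangent near $p_{n+1}$, and for $w'$ from its position on $C_2$ to the right of all $s^2_k$'s. Once this routine check is complete, $I$ is verified and the translation $\phi\mapsto\phi^I$ gives the parameterised reduction from FO model checking on all graphs to FO model checking on terrain-convex-fan visibility graphs, establishing the claimed $AW[*]$-hardness.
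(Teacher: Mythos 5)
The overall strategy you adopt -- define an FO interpretation $I=(\nu,\psi)$ that recovers $H$ from the unlabelled visibility graph $G$, anchoring everything on the unique true-twin pair $\{v,v'\}$, then the vertex-path $p_0,\dots,p_{n+1}$, then the $s^2_k$ witnesses -- is exactly the paper's. Your $\nu$ (interior blues have two blue neighbours) and your $\psi$ (witness $z$ with both ``extreme'' blue neighbours, plus your extra clause forbidding $z$ from touching an endpoint blue, which neatly rules out $s^1_k,s^3_k,u',w,w'$ at once) are in the same spirit as the paper's; your $\psi$ is if anything a bit more robust.

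However, the step where you FO-define $\mathit{blue}$ has a genuine gap, and it rests on a geometric claim that contradicts the one the paper actually uses. You write that, within the common neighbourhood $N=\{u',p_0,\dots,p_{n+1}\}$ of the twin pair, ``a short geometric check $\ldots$ shows $u'$ has at most one blue neighbour,'' and propose to peel $u'$ off by a low-degree argument. Apart from being circular as phrased (you appeal to $\mathit{blue}$ while defining it; one can charitably read ``$N$-neighbour''), the degree criterion would anyway not disambiguate: if $u'$ had degree one in $N$ then so does $p_0$, and nothing local distinguishes them. More importantly, the geometric claim is the \emph{opposite} of what holds here: the paper's key observation is that $u$ and $u'$ see \emph{all} the other neighbours of $v$ (i.e.\ $u'$ is adjacent to every one of $p_0,\dots,p_{n+1}$), and that is precisely what drives its definition $\mathit{blue}(x)\equiv\exists z[\mathit{brown}(z)\wedge\mathit{edge}(x,z)\wedge\exists t(\mathit{edge}(t,z)\wedge\neg\mathit{edge}(t,x))]$: a neighbour of $v$ is blue iff it \emph{fails} to see some other neighbour of $v$, which the path vertices do and $u,u'$ do not. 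Your construction can be repaired by replacing the low-degree criterion with ``$x$ is not adjacent in $N$ to every other element of $N$,'' but as written the definition of $\mathit{blue}$ (and hence of $\nu$) does not go through, so the interpretation is not established.
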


We stress that the graph $G$ we have constructed is unlabelled, but for
clarity we will refer to the vertex colours introduced in
Figure~\ref{fighard}.
Recall, from the proof of Theorem~\ref{thm:allhardness}, the formula 
$twin(x,y) \equiv edge(x,y)\wedge \forall z [(z\neq x \wedge z\neq y)
 \rightarrow (edge(x,z)\leftrightarrow edge(y,z))]$
asserting that $x,y$ are true twins.
Since $v,v'$ are the only twins in $G$, we may match either of them with
the formula
$$ brown(x)\>\equiv\> \exists t\, x\not=t\wedge twin(x,t). $$
Subsequently, the vertices $p_0,p_1,\ldots,p_{n+1}$ are precisely those
matched by the formula
$$ blue(x)\>\equiv\> \exists z\big[
	brown(z)\wedge edge(x,z)\wedge \exists t
		(edge(t,z)\wedge\neg edge(t,x))
\big] $$
since, among all the neighbours of $v$, the vertices $u,u'$ see all the
other neighbours of~$v$.

The vertex set of $H$ (in the interpretation $I$) can hence be defined using
$$ \nu(x)\>\equiv\> blue(x)\wedge \exists z,z' \big(
	z\not=z'\wedge blue(z)\wedge blue(z')\wedge edge(x,z)\wedge
	edge(x,z') \big), $$	
which excludes $p_0$ and $p_{n+1}$ from the list of blue points.
Recall that every edge $e_k = (v_i,v_j)\in E(H)$, $i<j$, is represented
by the red vertex $s^2_k$ which sees precisely $p_i,p_{i+1},\dots,p_j$
among the blue points.
Our aim, in the formula $\psi(x,y)$ of $I$, is to specify that $x=p_i$ and
$y=p_j$ (or vice versa), and this can be done by referring to the unique blue
neighbours $p_{i-1}$ and $p_{j+1}$ of $x$ and $y$, respectively, which do
not see $s^2_k$.
(This part is the reason why we use blue $p_0,p_{n+1}$ in our construction.)
We write down this as follows
\begin{align*}
\psi(x,y)\>\equiv\>& blue(x)\wedge blue(y)\wedge\>
	\exists z,t,t' \big[\,
		blue(t)\wedge blue(t')\wedge edge(x,t)\wedge edge(y,t')
\\
	&\wedge\neg blue(z)\wedge edge(x,z)\wedge edge(y,z)
			\wedge\neg edge(t,z)\wedge\neg edge(t',z)
	\big]
.\end{align*}
Then $G\models\psi(v_i,v_j)$ if, and only if, $(v_i,v_j)\in E(H)$.

The rest of the proof is as in Theorem~\ref{thm:allhardness}.
\end{proof}
}

\shortlong{}{%

\subsection{Visibility graphs of weak visibility polygons of convex edges} 
}

\shortlong{Second,}{In this section}
we prove that \FO model checking of the visibility graph of a given weak visibility 
polygon of a convex edge is FPT when additionally parameterized 
by the number of reflex vertices.
We remark that, for example, the 
independent set problem is NP-hard on polygonal visibility graphs~\cite{s-hpip-89}, 
but Ghosh et al.~\cite{gmpsv-crwvp-93} showed that the maximum independent set 
of the visibility graph of a given weak visibility polygon of a convex edge, is
computable in quadratic time.
In Theorem~\ref{thm:TerFan-hard}, we have seen that the latter result does
not generalise to arbitrary \FO properties, since \FO model checking remains
hard even for a very special subcase of weak visibility polygons.
So, an additional parameterization 
\shortlong{}{in the next theorem }is necessary.

\begin{thm}\apxmark\label{thm:VIStractable}
Let $W$ be a given polygon weakly visible from one of its convex edges, with $k$ reflex vertices,
and let $G$ be the visibility graph of~$W$.
Then \FO model checking of $G$
 is FPT with respect to the parameters $k$ and the formula size.
\end{thm}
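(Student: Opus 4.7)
\medskip\noindent\textbf{Proof plan.}
I propose to follow the now standard route of the previous tractability theorems: efficiently construct a labelled poset $\ca P$ of width bounded in terms of $k$, together with an \FO interpretation $I=(\nu,\psi)$ such that $G\simeq I(\ca P)$, and then conclude by Theorem~\ref{thm:posetFPT} and Proposition~\ref{prop:reducetoposet}. Let $u,v$ be the convex edge from which $W$ is weakly visible, and let $r_1,\dots,r_k$ denote the reflex vertices. Walking around the boundary of $W$, the reflex vertices partition the remaining convex vertices into at most $k+1$ maximal runs $C_0,C_1,\dots,C_k$ of consecutive convex vertices, each linearly ordered by the boundary traversal. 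These runs, together with the $k$ reflex vertices, will form the skeleton of the poset.

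The core technical step is to show that visibility in $W$ admits a combinatorial description of bounded complexity per reflex vertex. First, I claim that any two vertices in the same convex chain $C_i$ see each other: the weak visibility from the convex edge $uv$ forces the subregion of $W$ ``between'' $C_i$ and $uv$ to be free of obstructions that could block a chord between vertices of $C_i$. Second, for two convex vertices $p\in C_i$ and $q\in C_j$ with $i\ne j$, visibility fails exactly if some reflex vertex $r_l$ blocks the segment $pq$; here the two polygon edges incident to $r_l$ extend to tangent rays, and by monotonicity along each chain, the set of pairs $(p,q)\in C_i\times C_j$ blocked by $r_l$ is a product of two contiguous ranges, determined by $O(1)$ ``critical'' positions in each chain. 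Hence, for every triple $(i,j,l)$ we obtain a constant number of thresholds along $C_i$ and $C_j$, and each convex vertex is labelled by its position relative to the $O(k^2)$ such thresholds that involve it.

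With this characterisation in hand, I build $\ca P$ as in Lemma~\ref{INTtoposet}: the convex vertices of each chain $C_i$ become one chain of the poset, the reflex vertices get their own labels, and auxiliary ``endpoint'' elements are added to encode linear order within each chain via an \FO definable comparison; the total width is at most $(k+1)+O(1)=O(k)$. Each convex vertex carries a label for its chain index and for its visibility type with respect to each reflex vertex (a bounded alphabet depending only on $k$). The interpretation $\nu(x)$ then selects the actual polygon vertices, and $\psi(x,y)$ is a bounded case split on the chain labels of $x,y$: if they agree, output true; otherwise, output the bounded conjunction over $l=1,\dots,k$ that no $r_l$ blocks the segment, where ``not blocked by $r_l$'' is read off from the labels and the linear orders of the two chains. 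The main obstacle will be proving the geometric step above cleanly, namely that the ``blocking region'' of each reflex vertex on each chain pair is indeed a product of contiguous ranges and that same chain vertices always see each other in a weak visibility polygon of a convex edge; once these structural facts are established, the poset encoding and the \FO interpretation follow the template of Theorems~\ref{thm:CAtractable} and~\ref{thm:Boxtractable} routinely, and an application of Theorem~\ref{thm:posetFPT} via Proposition~\ref{prop:reducetoposet} finishes the proof.
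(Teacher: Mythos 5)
Your plan diverges from the paper's proof at exactly the point you flag as ``the main obstacle,'' and that obstacle is fatal: the claim that the set of pairs $(p,q)\in C_i\times C_j$ blocked by a fixed reflex vertex $r_l$ is a product of two contiguous ranges, cut out by $O(1)$ thresholds, is false. Even with a single reflex vertex $r$ separating two ears, the blocked set is a genuine staircase with up to $\Theta(\min(|C_i|,|C_j|))$ steps, not a rectangle. A concrete instance: place $r=(0,h)$, $p_i=(-1,i)$ for $i=1,\dots,n$ on $C_i$, $q_j=(1,j)$ for $j=1,\dots,n$ on $C_j$. Then $p_i$ sees $q_j$ precisely when the segment clears $r$, i.e.\ when $i+j>2h$ --- an anti-diagonal condition whose blocked region is a discrete triangle, not $[a_1,a_2]\times[b_1,b_2]$. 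Consequently the ``critical index'' of $p\in C_i$ in $C_j$ (the first visible vertex) takes unboundedly many distinct values as $p$ ranges over $C_i$, so it cannot be recorded in a vertex label whose alphabet size depends only on $k$. Your entire $\psi(x,y)$, which you want to read off ``from the labels and the linear orders of the two chains,'' therefore cannot express the visibility relation, and the claimed poset width $(k+1)+O(1)$ is too good to be true.

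The paper's proof avoids this trap precisely because it does not try to bound the number of staircase steps. It establishes only the qualitative monotonicity (its Claim~\ref{vischar}: if $v_a\in E_a$ sees $v_i\in E_b$ then $v_a$ sees all later vertices of $E_b$, and symmetrically), and then encodes the possibly very long staircase between each pair of ears $E_a,E_b$ into the poset itself rather than into labels. Concretely, for every pair $(a,b)$ it introduces a fresh ``blue'' chain $B_{a,b}$, a disjoint copy of the convex vertices of $E_a\cup E_b$, linearly ordered so that (for $v_i\in E_a$, $v_j\in E_b$) the copies satisfy $v_i'\leq^{\ca P} v_j'$ exactly when $v_i$ sees $v_j$; the copies are tied to the original vertices by cover relations, and an FO formula recovers the edge by asserting the existence of two blue elements on the same chain, one covering $x$ and one covered by $y$, with the right order. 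This costs $\binom{k+1}{2}$ extra chains, so the width is $\binom{k+1}{2}+1=O(k^2)$, not $O(k)$. Your first structural claim (vertices inside one convex chain see each other pairwise) is correct and corresponds to the paper's Claim~\ref{oneear}, and your handling of edges incident to the $k$ reflex vertices and to $u,v$ by one pair of labels per such vertex matches the paper's Claim~\ref{ereflex}. What is missing is the realisation that inter-ear visibility carries too much information for labels and must be carried by additional poset chains.
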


\shortlong{%
While we cannot fit the whole algorithm in the short paper,
we at least give an informal overview of how the algorithm works.
}{%
Before diving into the technical details of the rather long proof, we first provide
a brief informal summary of the coming steps.
}%
As in the previous intersection graph cases, our aim is to construct, from
given~$W$, a poset $\ca P$ such that the width of $\ca P$ is bounded by a
function of~$k$ and that we have an \FO interpretation of the visibility
graph of $W$ in this~$\ca P$.

Let $W$ be weakly visible from its convex edge $uv$, and denote by
$C_{uv}$ the clockwise sequence of the vertices of $W$ from $u$ to~$v$.
The subsequence of $C_{uv}$ between two reflex~vertices $v_a$ and $v_b$, 
such that all vertices in it are convex, is called an \emph{ear} of $W$.
The length of~this sequence can be $0$ as well.
Additionally, the first (last) ear of $W$ is defined as the subsequence between $u$ and
the first reflex vertex of $C_{uv}$ (between the last reflex vertex and~$v$, respectively).
We have got $k+1$ ears in~$W$.
With a slight abuse of terminology at $u,v$, we may simply say that an ear 
is a sequence of convex vertices between two reflex vertices.

The crucial idea of our construction of the poset $\ca P$
(which contains all vertices of $W$, in particular) is that
the visibility edges between the internal (convex) vertices of the ears are
nicely structured:
withing one ear $E_a$, they form a clique, and between two ears $E_a,E_b$,
the visibility edges exhibit a ``shifting pattern'' not much different from
the left and right ends of intervals in a proper interval representation
(cf.~Lemma~\ref{INTtoposet}).
Consequently, we may ``encode'' all the edges between $E_a$ and $E_b$ with
help of an extra subposet of $\ca P$ of fixed~width, and since we have got only
$k+1$ ears, this together gives a poset of width bounded in~$k$.

The last step concerns visibility edges incident with one of the $k$ reflex
vertices or $u,v$.
These can be easily encoded in $\ca P$ with only $2(k+2)$ additional labels,
without any assumption on the structure of $\ca P$:
for each reflex vertex $x$ of $C_{uv}$, or $x\in\{u,v\}$, 
we assign one new label $L^0_x$ to $x$
itself and another new label $L^1_x$ to all the neighbours of $x$.
Altogether, we can efficiently construct an \FO interpretation of $G$ in
$\ca P$ such that the formulas depend only on~$k$.
Then we may finish by Theorem~\ref{thm:posetFPT}.

\shortlong{}{%
 \proof[Proof of Theorem~\ref{thm:VIStractable}]
Throughout the proof (rest of the section) we will implicitly assume
a polygon $W$ which is weakly visible from its edge $uv$, 
where $uv$ is a convex edge of $W$, and the clockwise boundary from $u$ to $v$, 
denoted as $C_{uv}$, contains all the other edges of $W$. 
We also recall that $C_{uv}$ consists of $k+1$ ears.
 Let $G=(V,E)$ be the visibility graph of $W$. 

We need more terminology and some specialised claims.

For two elements $p,q$ of a poset, we say that {\em$q$ covers $p$}
if $p\sqsubseteq q$ and there is no poset element $r$ such that
$p\sqsubseteq r\sqsubseteq q$ and $p\not=r\not=q$.

 A vertex $z$ of $w$ is said to \emph{block} two vertices $v_i$ and $v_j$ of $W$ if the shortest path
 between $v_i$ and $v_j$ that does not intersect the exterior of $W$, takes a turn at $z$.
For two vertices $a$ and $b$ of $W$, when we say $a$ \emph{precedes} $b$ or $b$ \emph{succeeds} $a$ on $C_{uv}$,
we mean that we encounter $a$ earlier than $b$ when we traverse $C_{uv}$ in the clockwise order, starting from $u$.
 
\begin{claim} \label{vischar}
 Let $E_a$ and $E_b$ be two ears of $W$ such that $E_a$ precedes $E_b$ on $C_{uv}$.
 Let $v_a$ and $v_b$ be any convex vertices of $E_a$ and $v_i$ and $v_j$ be
 any convex vertices of $E_b$, where $v_a$ precedes $v_b$ and $v_i$ precedes $v_j$ on $C_{uv}$.
 Then the following hold.
 If $v_a$ sees $v_i$, then $v_a$ also sees $v_j$. 
 Symmetrically, if $v_j$ sees $v_b$, then $v_j$ also sees $v_a$.
\end{claim}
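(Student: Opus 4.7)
The plan is to prove the first implication by contradiction, isolating a reflex vertex whose existence would be incompatible with the convex structure of the ear $E_b$. Suppose $v_a$ sees $v_i$ but does not see $v_j$. Then the geodesic path $\pi$ from $v_a$ to $v_j$ inside $W$ is strictly longer than the segment $\overline{v_a v_j}$ and must turn at one or more reflex vertices of $W$. Let $r$ denote the first such reflex vertex along $\pi$ starting from $v_a$. By standard properties of geodesics in simple polygons, $r$ is visible from $v_a$ and lies on $\partial W$; since $uv$ is a convex edge and $u,v$ are convex, $r$ must lie on $C_{uv}$.

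The next step is a case analysis on the position of $r$ along $C_{uv}$. First, $r$ cannot lie strictly between $v_i$ and $v_j$ on $C_{uv}$, since that portion lies entirely in the ear $E_b$ and contains only convex vertices. Second, $r$ cannot lie past $v_j$ (towards $v$) nor before $v_a$ (towards $u$, including inside $E_a$); in either case the angular arrangement around $v_a$, together with the visibility of $v_i$, would prevent $r$ from being the first reflex vertex turned by $\pi$. The only remaining possibility is that $r$ lies on $C_{uv}$ strictly between $v_a$'s ear $E_a$ and the start of $E_b$. For this case, the key idea is that, since $v_i$ and $v_j$ both lie on the convex cap of $E_b$, their angular order seen from $v_a$ matches their order along $E_b$, and the shadow cast by $r$ from $v_a$ is an angular wedge bounded by the ray $\overrightarrow{v_a r}$. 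I plan to show that this shadow meets the convex chain $E_b$ in a connected sub-arc; since $v_i$ precedes $v_j$ along the chain and $v_i$ is not in the shadow, $v_j$ cannot be in it either, contradicting the choice of $r$.

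The hardest step will be this last case: making the convex-cap argument fully rigorous. Here I will use the weak-visibility hypothesis from $uv$ to ensure that the region enclosed by $\overline{v_i v_j}$ together with the subchain of $E_b$ from $v_i$ to $v_j$ lies inside $W$, so that the shadow of $r$ does behave as a monotone angular wedge across this convex cap. The symmetric statement that $v_j$ sees $v_b$ implies $v_j$ sees $v_a$ then follows from an analogous argument in which the direction of traversal of $C_{uv}$ is reversed and the roles of the two ears $E_a$ and $E_b$ are interchanged.
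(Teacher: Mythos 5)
Your overall strategy matches the paper's: assume $v_a$ does not see $v_j$, isolate a blocking reflex vertex, and run a case analysis on its position along $C_{uv}$. Your Case~1 (blocker on $E_b$ between $v_i$ and $v_j$) is dismissed exactly as the paper does, by convexity of the ear, and your Case~3 (blocker strictly between $v_a$ and $v_i$) is a more explicit working-out of the paper's one-line observation ``since $v_a v_i$ lies inside $W$, the blocker cannot come from the clockwise boundary between $v_a$ and $v_i$.'' So far the match is close.

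The gap is in your Case~2. You dismiss a blocker lying before $v_a$ (toward $u$) or past $v_j$ (toward $v$) by appealing to ``the angular arrangement around $v_a$, together with the visibility of $v_i$,'' and you reserve the weak-visibility hypothesis for the shadow/convex-cap argument in Case~3. This is backwards: the angular-arrangement claim is not in fact a valid deduction (in a general simple polygon, a reflex vertex before $v_a$ on $C_{uv}$ can perfectly well be the first turn of the geodesic from $v_a$ to a later point, even when $v_a$ sees $v_i$), while Case~3 can be handled quickly from the fact that $v_a v_i$ is interior. What kills Case~2 in the paper is precisely the weak-visibility hypothesis: if the blocker lay on $C_{uv}$ between $u$ and $v_a$, then $v_a$ could not see any point of the edge $uv$, contradicting that $W$ is weakly visible from $uv$; symmetrically, a blocker between $v_j$ and $v$ would make $v_j$ invisible from $uv$. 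You should invoke weak visibility for these two sub-cases rather than leave them to an unsubstantiated angular argument. Once that is fixed, your proof is essentially the paper's, just with the geodesic/first-reflex-vertex framing made explicit.
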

\begin{proof}
 Suppose that $v_a$ does not see $v_j$. Then there must be a blocker of $v_a$ and $v_j$. Since $v_i$ and $v_j$ are 
 convex vertices of the same ear, the blocker cannot come from the polygonal boundary in between them. 
 Since the $v_av_i$ lies inside $W$, the blocker also cannot come from the clockwise polygonal boundary between $v_a$ and $v_i$.
 If the blocker comes from the clockwise polygonal boundary between $u$ and $v_a$ then $v_a$ cannot see any part $uv$, a contradiction.
 Similarly, the blocker cannot come from the clockwise polygonal boundary between $v_j$ and $v$ as well.
 So, $v_a$ must see $v_j$.
 The second claim follows from symmetrical arguments.
 \end{proof}

Now we describe our construction of the poset~$\ca P=(P,\leq^\PPP)$ where
$P$ includes the vertices $V$ of~$W$.
We start with a linear order $\leq^C$ on the vertex set $V$ defined as follows.
For two vertices $a$ and $b$ of $V$, we let $a \leq^ C b$ iff 
$a$ precedes $b$ in the clockwise order on $C_{uv}$ or $a=b$.
We give all elements of $V$ label `$green$' and, additionally,
give label `$black$' to those which are reflex vertices of $W$ and to~$u,v$.
Let $\leq^C$ be a subrelation of $\leq^\PPP$. We have:

\begin{claim}\label{oneear}
It can be expressed in \FO that two vertices of $C_{uv}$ belong to the same ear.
\end{claim}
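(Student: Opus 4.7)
The plan is straightforward because, by definition, each ear is a maximal contiguous block of convex (non-black) vertices along the clockwise order $\leq^C$, and this order is already encoded inside the poset $\ca P$ via the labels $green$ and $black$.

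First I would verify that the restriction of $\leq^{\ca P}$ to the set of elements carrying the $green$ label coincides with $\leq^C$ on $V$. This is forced by the construction: $\leq^C$ is already a total order on $V$ and is declared to be a subrelation of $\leq^{\ca P}$; since extending a total order to a partial order cannot add further comparabilities within the already ordered set without violating antisymmetry, the two relations must agree on $V\times V$. In particular, for any $x,y\in V$ and any $z\in V$, the poset comparisons $x\leq^{\ca P}z$ and $z\leq^{\ca P}y$ correctly capture the clockwise betweenness of $z$ with respect to $x$ and $y$.

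Next I would write the \FO formula
\[
\phi_{ear}(x,y)\ \equiv\ green(x)\wedge green(y)\wedge\neg black(x)\wedge\neg black(y)\wedge\forall z\bigl[\,black(z)\to\neg\beta(x,y,z)\bigr],
\]
where
\[
\beta(x,y,z)\ \equiv\ \bigl(x\leq^{\ca P}z\wedge z\leq^{\ca P}y\bigr)\vee\bigl(y\leq^{\ca P}z\wedge z\leq^{\ca P}x\bigr)
\]
expresses that $z$ lies between $x$ and $y$ in the clockwise order. Because $x,y$ are themselves required to be non-black, any $black$ witness $z$ is automatically distinct from both, so $\phi_{ear}$ faithfully asserts strict betweenness.

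Correctness then follows from the first observation together with the definition of an ear: two vertices of $C_{uv}$ belong to a common ear precisely when both are convex (hence $green$ and not $black$) and no reflex vertex nor either of $u,v$ (that is, no $black$ element) separates them along $C_{uv}$, which is exactly what $\phi_{ear}$ states. I do not anticipate any real obstacle here; the only care needed is that the labels $black$ and $green$ match their intended combinatorial roles, which is immediate from the construction.
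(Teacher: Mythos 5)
Your proof is correct and takes essentially the same approach as the paper: express ``same ear'' by saying, using the $green$/$black$ labels and the chain $\leq^{\ca P}$ on $V$ (which, as you correctly observe, must coincide with $\leq^C$ since $\leq^C$ is total and is a subrelation of $\leq^{\ca P}$), that no black vertex lies strictly between $x$ and $y$. The one small difference is that you additionally require $\neg black(x)\wedge\neg black(y)$, whereas the paper's $\beta_0$ omits these conjuncts; your version is actually closer to the literal definition of an ear as a run of convex vertices, and the discrepancy is harmless in the final interpretation because every edge incident with a black vertex is already captured by $\beta_2$.
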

\begin{proof}
We give the formula
$$ \beta_0 (x,y) \equiv green(x) \wedge green(y) \wedge
	x \leq^\PPP\! y \wedge\,   \forall z \left[
	( x \leq^\PPP\! z \wedge z \leq^\PPP\! y \wedge black(z) )
		 \rightarrow ( z = x \vee z = y)\right]
$$
and use its symmetric closure $\beta_0(x,y) \vee \beta_0 (y,x)$.
\end{proof}

\begin{figure}[th]
 \centering
 \includegraphics[width=0.8\textwidth]{}
\medskip
 \caption{An illustration of a weak visibility polygon $W$ and
	 its constructed poset, as in the proof of
	Theorem~\ref{thm:VIStractable}.
	Here, the sequence $C_{uv}$ (the green chain of the poset)
	consists of four ears $E_0,E_1,E_2,E_3$ with interiors $A_0,A_1,A_2,A_3$,
	where $A_2$ is empty (has no convex vertices).
	So, there are three blue chains (top to bottom)
	$B_{0,3},B_{1,3},B_{0,1}$ in the picture.
	The dashed lines in $W$ are the visibility edges of~$G$.
  }
  \label{figposet}
\end{figure}

Next, we number the ears of $C_{uv}$ as $E_0,E_1,\dots,E_k$ in the clockwise
order.
For every pair $0\leq a<b\leq k$,
we now describe a subposet of $\ca P$ which we will use to encode the edges
between the convex vertices of $E_a$ and~$E_b$.
Let $A_a$ and $A_b$ be the sets of convex vertices of $E_a$ and $E_b$, respectively,
and let $B_{a,b}$ denote a fresh disjoint copy of $(A_a\cup A_b)$.
For each $v_i\in A_a$ and its corresponding copy $v_i'\in B_{a,b}$,
we have $v_i\leq^\PPP v_i'$.
Analogously, for each $v_j\in A_b$ and its corresponding copy 
$v_j'\in B_{a,b}$, we have $v_j'\leq^\PPP v_j$
and, in fact, it holds that $v_i'$ covers $v_i$ and $v_j$ covers $v_j'$.
The whole set $B_{a,b}$ is made into a chain of $\ca P$
ordered such that, for any $v_i,v_j\in A_a\cup A_b$ and their 
corresponding copies $v_i',v_j'\in B_{a,b}$, we have
\begin{itemize}
\item if either $v_i,v_j\in A_a$ or $v_i,v_j\in A_b$,
 then $v_i'\leq^\PPP v_j'$ iff $v_i\leq^\PPP v_j$;
\item if (up to symmetry) $v_i\in A_a$ and $v_j\in A_b$,
 then $v_i'\leq^\PPP v_j'$ iff $v_i$ can see $v_j$ in~$W$.
\end{itemize}
We give all the elements of $B_{a,b}$, $0\leq a<b\leq k$, the label
`$blue$', and will refer to each such $B_{a,b}$ as to a {\em blue chain}.
See Figure~\ref{figposet}.

By Claim~\ref{vischar}, $\leq^\PPP$ forms a valid (sub)poset on
$V\cup B_{a,b}$.
Now we make $\ca P$ the union of the subposets considered so far
(green $V$ and the blue chains), with a transitive closure of~$\leq^\PPP$.
That is, $P=V\,\bigcup_{0\leq a<b\leq k}B_{a,b}$
and $\leq^\PPP$ restricted to each $V\cup B_{a,b}$ is as defined above.

\begin{claim}\label{twoears}
It can be expressed in \FO that two convex vertices $v_i\in E_a$ and $v_j\in
E_b$ see each other, i.e., they form an edge of~$G$.
\end{claim}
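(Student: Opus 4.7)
If $x\in E_a$ and $y\in E_b$ lie in the same ear ($a=b$), the formula $\beta_0$ of Claim~\ref{oneear} already identifies this, and since the convex vertices of one ear form a clique in $G$, the edge relation follows for free. Hence the substantive task is $a\neq b$; by symmetry assume $a<b$. The key design feature of $\ca P$ is that visibility between $x\in A_a$ and $y\in A_b$ is explicitly encoded in the chain $B_{a,b}$: the unique copy $x'\in B_{a,b}$ of $x$ and the unique copy $y'\in B_{a,b}$ of $y$ satisfy $x'\leq^{\ca P}\! y'$ iff $x$ and $y$ see each other in $W$. The plan is to identify these two copies by an \FO formula and read off their order in $\leq^{\ca P}$, thereby obtaining the desired edge formula.

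\textbf{Identifying copies by cover relations.} For every $e>a$ the copy of $x$ in $B_{a,e}$ sits immediately above $x$ in $\leq^{\ca P}$, and a short case analysis using the same-side/cross-side chain rules shows that the blue covers of $x$ are exactly these copies. Symmetrically, blue elements that are directly covered by $y$ are exactly the copies of $y$ in chains $B_{f,b}$, $f<b$. The covering relation is \FO-expressible by
\[
\mathrm{cov}(u,w)\;\equiv\; u\leq^{\ca P}\! w\,\wedge\, u\neq w\,\wedge\,\forall r\,(u\leq^{\ca P}\! r\leq^{\ca P}\! w\to r=u\vee r=w).
\]

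\textbf{Pinning down the right chain.} Quantifying existentially over a blue cover $z_x$ of $x$ and a blue $z_y$ covered by $y$ with $z_x\leq^{\ca P}\! z_y$ does not yet force both into $B_{a,b}$: we could have $z_x\in B_{a,e}$ and $z_y\in B_{f,b}$ with $(a,e)\neq(f,b)$, still made comparable in $\leq^{\ca P}$ through a shared intermediate ear. The auxiliary observation is that whenever $z_x$ and $z_y$ lie in different blue chains yet are comparable, any monotone $\leq^{\ca P}$-path from $z_x$ to $z_y$ must exit the first chain through a green vertex of some ear strictly between $a$ and $b$, and this green vertex sits strictly between $z_x$ and $z_y$ in $\ca P$. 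Conversely, when $z_x,z_y$ both lie in $B_{a,b}$, no green of $\ca P$ can appear strictly between them: greens above $z_x$ lie in $A_b$ or a later ear, while greens below $z_y$ lie in $A_a$ or an earlier ear, so the two conditions cannot meet. Hence the extra clause ``no green is strictly between $z_x$ and $z_y$'' isolates exactly the same-chain case, forcing $z_x=x'$ and $z_y=y'$ in $B_{a,b}$, whereupon $z_x\leq^{\ca P}\! z_y$ is equivalent to visibility by construction.

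Putting this together, the edge relation of $G$ on convex vertices is captured by $\beta_0(x,y)\vee\beta_0(y,x)\vee\psi_1(x,y)\vee\psi_1(y,x)$, where
\begin{align*}
\psi_1(x,y)\;\equiv\;\exists z_x\exists z_y\,\big[\,&\mathrm{blue}(z_x)\wedge\mathrm{blue}(z_y)\wedge\mathrm{cov}(x,z_x)\wedge\mathrm{cov}(z_y,y)\\
&\wedge\,z_x\leq^{\ca P}\! z_y\wedge\neg\exists g\,(\mathrm{green}(g)\wedge z_x\leq^{\ca P}\! g\leq^{\ca P}\! z_y)\,\big].
\end{align*}
The main obstacle is verifying the auxiliary claim about cross-chain paths: one must show by a careful case distinction on where monotone paths in $\ca P$ can exit and re-enter blue chains, combined with Claim~\ref{vischar}, that any such cross-chain path necessarily passes through a green vertex strictly between its endpoints, while same-chain pairs never do. Once this is established, soundness of $\psi_1$ follows directly from the construction of $B_{a,b}$, and completeness by taking $z_x=x'$ and $z_y=y'$ as witnesses whenever $x$ sees $y$.
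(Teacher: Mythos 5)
Your formula $\psi_1$ is essentially the paper's $see(x,y)$: the cover quantifiers pick out blue-chain copies of $x$ and $y$, and the ``no green between $z_x$ and $z_y$'' clause is exactly the paper's $samechain(z,t)$ (one-sided, but equivalent once $z_x\leq^{\ca P}\! z_y$ is asserted), while your auxiliary observation about cross-chain comparability forcing an intermediate green vertex is precisely the paper's implicit justification that $samechain$ pins both witnesses into $B_{a,b}$. The only cosmetic omission is the $\neg black(x)\wedge\neg black(y)$ guard in the paper's $\beta_1$, which explicitly excludes $u$, $v$ and reflex vertices (handled separately in Claim~\ref{ereflex}); as your cover argument shows, those vertices have no blue covers anyway, so the approach is the same and correct.
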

\begin{proof}
Assume, up to symmetry, $v_i\leq^\PPP v_j$ and $a\not=b$.
By the definition of $\leq^\PPP$ on $B_{a,b}$ we have that $v_i$ can see
$v_j$ if and only if there are copies $v_i',v_j'\in B_{a,b}$ such that
$v_i'\leq^\PPP v_j'$.
The latter, however, is not so simple to express since blue elements 
of $\ca P$ comparable with $v_i,v_j$ exist on other blue chains than
$B_{a,b}$, due to transitivity.
Moreover, $v_i'\leq^\PPP v_j'$ does not imply that  $v_i',v_j'$ belong to
the same blue chain, again, due to transitivity (``through'' some green vertex of~$V$).

Hence, we are going to express that
$v_i'$ covers $v_i$, $v_j$ covers $v_j'$, and
that $v_i'\leq^\PPP v_j'$ indeed belong to the same blue chain.
For the former, we give the following \FO formula
$$
cover(x,y)\>\equiv\> x\leq^\PPP\! y \wedge
	\forall z\left[ x\leq^\PPP z\leq^\PPP y \to
		(x=z\vee y=z)\right]
,$$
and for the latter assertion, we may write 
(implicitly assuming $blue(x)\wedge blue(y)$ as below)
$$
samechain(x,y)\>\equiv\> 
	\forall z\left[ (x\leq^\PPP z\leq^\PPP y
		\vee y\leq^\PPP z\leq^\PPP x) \to \neg green(z)\right]
.$$
Together, we formulate
$$
see(x,y)\>\equiv\> \exists z,t \big[ blue(z)\wedge blue(t)\wedge
	samechain(z,t)\wedge cover(x,z)\wedge z\leq^\PPP t
		\wedge cover(t,y) \big]
$$
and, with additional identification of convex vertices of the ears, we
finally get
$$
\beta_1(x,y)\>\equiv\> green(x)\wedge green(y)\wedge
	\neg black(x)\wedge\neg black(y)\wedge
		\big( see(x,y)\vee see(y,x) \big)
.$$

We claim that $\ca P\models\beta_1(v_i,v_j)$, if and only if
$v_i,v_j$ are convex vertices of distinct ears and they see each other.
In the backward direction, if $v_i,v_j$ see each other, then
$\ca P\models\beta_1(v_i,v_j)$ is witnessed by the choice
of $\{z,t\}=\{v_i',v_j'\}$ in $see(x,y)$.

On the other hand, assume $\ca P\models\beta_1(v_i,v_j)$.
Then $v_i,v_j$ are convex vertices of some ears $E_a\ni v_i$ and $E_b\ni v_j$
of $C_{uv}$, by the labels `$green$' and `$\neg black$'.
Up to symmetry, $\ca P\models see(v_i,v_j)$.
From $cover(v_i,z)$ we know that $z\in B_{a,b'}$ for some $b'$,
and from $cover(t,v_j)$ we get $t\in B_{a',b}$ for some $a'$.
By $samechain(z,t)$, it holds $a=a'$ and $b=b'$.
Consequently, by the definition of $\leq^\PPP$ on $V\cup B_{a,b}$
we get that $v_i$ sees~$v_j$ in~$W$.
\end{proof}

It remains to address the edges of $G$ which are incident with one or two 
reflex vertices of $W$ or $u$ or $v$.
Let $r_0=u, r_1,\dots,r_k, r_{k+1}=v$ be the clockwise order of
$u,v$ and the reflex vertices on $C_{uv}$.
We assign every $r_i$, $0\leq i\leq k+1$, in $\ca P$
a new label $L_i^0$, and then assign another new label $L_i^1$
to all the vertices of $V$ adjacent to~$r_i$.

\begin{claim}\label{ereflex}
Let $v_i$ be a reflex vertex or one of $u,v$, and $v_j\in V$.
It can be expressed in \FO that $v_i,v_j$ form an edge of~$G$.
\end{claim}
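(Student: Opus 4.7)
The plan is to exhibit a short disjunctive \FO formula directly over the labels $L_i^0,L_i^1$ that were assigned immediately before the claim, and to observe that by the very definition of these labels the formula captures exactly the desired edge relation. No structural analysis of $W$ or of $\ca P$ beyond the labelling itself is needed.

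Concretely, recall that in the preceding construction each $r_i$ with $0\le i\le k+1$ (that is, $u$, $v$, or one of the $k$ reflex vertices) receives its own unique label $L_i^0$, and every vertex of $V$ adjacent to $r_i$ in $G$ receives the label $L_i^1$. Thus I would introduce the formula
$$
\beta_2(x,y) \>\equiv\> \bigvee_{i=0}^{k+1} \Big( L_i^0(x) \wedge L_i^1(y) \Big),
$$
and take its symmetric closure $\beta_2(x,y)\vee\beta_2(y,x)$. Then $\ca P\models \beta_2(v_i,v_j)$ precisely when $v_i\in\{u,v,r_1,\dots,r_k\}$ and $v_jv_i\in E(G)$, which is exactly what the claim asks for. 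The disjunction has $k+2$ terms, hence $|\beta_2|=O(k)$; since $k$ is part of our parameter, this is fine.

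There is essentially no obstacle here: the work was already done when the labels $L_i^0,L_i^1$ were assigned, and the claim amounts to reading off that labelling in \FO. The only thing to verify is the easy soundness/completeness of $\beta_2$, which is immediate from how the labels are defined. Once Claims~\ref{oneear},~\ref{twoears} and the present claim are available, combining $\beta_0,\beta_1,\beta_2$ into a single formula $\psi(x,y)$ (together with a $\nu$ selecting the green vertices of $P$) yields the promised \FO interpretation of $G$ in the poset $\ca P$ whose width is bounded in $k$, and an application of Theorem~\ref{thm:posetFPT} via Proposition~\ref{prop:reducetoposet} completes the proof of Theorem~\ref{thm:VIStractable}.
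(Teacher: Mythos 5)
Your proposal is correct and matches the paper's argument essentially verbatim: both read off the edge relation directly from the labels $L_i^0, L_i^1$ via a $(k+2)$-fold disjunction and then symmetrize. The only cosmetic difference is that the paper additionally conjoins $black(x)$, which is redundant since $L_i^0(x)$ already forces $x=r_i$, a black vertex.
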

\begin{proof}
This is trivial (up to symmetry):
$$
\beta_2(x,y)\>\equiv\> black(x)\wedge
	\bigvee\nolimits_{0\leq i\leq k+1} \left(
		L_i^0(x)\wedge L_i^1(y) \right)
.\vspace*{-\baselineskip}$$
\end{proof}

We have constructed the poset $\ca P$ in polynomial time from the given
polygon~$W$, and the width of $\ca P$ is at most ${k+1\choose2}+1$
since we have created one new chain for each pair of distinct ears.
We finish the proof, by Theorem~\ref{thm:posetFPT}, if we provide
an \FO interpretation $I=(\nu,\psi)$ depending only on~$k$,
such that $G=I(\ca P)$;
$$ \nu(x)\>\equiv\> green(x), $$
$$ \psi(x,y)\>\equiv\> green(x)\wedge green(y)\wedge
	\big[ \beta_0(x,y) \vee \beta_0 (y,x)  \vee \beta_1(x,y) \vee \beta_1 (y,x)
	  \vee \beta_2(x,y) \vee \beta_2 (y,x) \big]
.$$

Validity of this interpretation follows from the fact that the edge set of
$G$ is a union of cliques on each of the ears and
of edges between convex vertices of distinct ears and of edges incident
with reflex vertices or $u$ or $v$,
and from Claims~\ref{oneear}, \ref{twoears}, \ref{ereflex}.
 \qed
}

\section{Conclusions}\label{sec:conclu}

We have identified several FP tractable cases of the \FO model checking problem of
geometric graphs, and complemented these by hardness results showing quite
strict limits of FP tractability on the studied classes.
Overall, this presents a nontrivial new contribution towards understanding
on which (hereditary) dense graph classes can \FO model checking be FPT.

All our tractability results rely on the \FO model checking algorithm
of~\cite{gajarskyetal15}, which is mainly of theoretical interest.
However, in some cases one can employ, in the same way, the
simple and practical \EFO model checking algorithm of~\cite{ghoo14}.
We would also like to mention the possibility of enhancing the result
of~\cite{gajarskyetal15} via interpreting posets in posets.
While this might seem impossible, we actually have one 
positive indication of such an enhancement.
It is known that interval graphs are $C_4$-free complements of comparability
graphs (i.e., of posets) -- the width of which is the maximum clique size of
the original interval graph.
Then, among $k$-fold proper interval graphs there are ones of unbounded
clique size, which have FPT \FO model checking by
Theorem~\ref{thm:INTtractable}.
This opens a promising possibility of an FP tractable subcase 
of \FO model checking of posets of unbounded width, for future research.

To complement previous general suggestions of future research,
we also list two concrete open problems which are directly related to our results.
We conjecture that \FO model checking is FPT
\begin{itemize}
\item for circle graphs additionally parameterized by the maximum clique size, and
\item for visibility graphs of weak visibility polygons
	additionally parameterized by the maximum independent set size.
\end{itemize}


\bibliography{FO-geom-IPEC-journal}

\end{document}